\providecommand{\U}[1]{\protect\rule{.1in}{.1in}}
\newtheorem{theorem}{Theorem}
\newtheorem{corollary}[theorem]{Corollary}
\newtheorem{definition}{Definition}
\newtheorem{lemma}[theorem]{Lemma}
\newtheorem{remark}{Remark}
\newenvironment{proof}[1][Proof]{\noindent\textbf{#1.} }{\ \rule{0.5em}{0.5em}}
\numberwithin{theorem}{section}
\numberwithin{proposition}{section}
\numberwithin{definition}{section}
\numberwithin{example}{section}
\renewcommand{\arraystretch}{1.2}
\newcommand{\im}[0]{\mathrm{i}}
\newcommand{\ket}[1]{|#1\rangle}
\newcommand{\braketop}[3]{\ensuremath{\left\langle #1 \right| #2 \left| #3 \right\rangle}}
\newcommand{\matele}[3]{\langle #1 \vert #2 \vert #3 \rangle}
\newcommand{\outerprod}[2]{\vert #1 \rangle\!\langle #2 \vert}
\newcommand{\btheta}{\bm{\theta}}
\newcommand{\len}[0]{\text{len}}
\newcommand{\half}{\frac{1}{2}}
\newcommand{\hw}{$\hbar\omega$}
\newcommand{\sof}[0]{\vert f \vert}
\newcommand{\sofbar}[0]{\vert \bar{f} \vert}
\newcommand{\summ}[1]{\sum_{i=0}^{2^{#1}-1}}
\newcommand{\dopr}[2]{(-1)^{#1 \cdot #2}}
\newcolumntype{S}{>{\centering}p{0.1\textwidth}}
\newcolumntype{s}{>{\centering}p{0.06\textwidth}}
\newcolumntype{M}[1]{>{\centering\arraybackslash}m{#1}}
\newcolumntype{P}[1]{>{\centering\arraybackslash}p{#1}}
\begin{document}

\title[]{Neutron-nucleus dynamics simulations for quantum computers}

\author{Soorya Rethinasamy}
\thanks{Corresponding author}
\email{sooryarethin@gmail.com}
\affiliation{School of Applied and Engineering Physics, Cornell University, Ithaca, New York 14850, USA}
\affiliation{Department of Physics and Astronomy, Louisiana State University, Baton Rouge, Louisiana 70803, USA}
\affiliation{Hearne Institute for Theoretical Physics and Center for Computation and Technology, Louisiana State University, Baton Rouge, Louisiana 70803, USA}

\author{Ethan Guo}
\affiliation{Department of Physics and Astronomy, Louisiana State University, Baton Rouge, Louisiana 70803, USA}
\affiliation{Hearne Institute for Theoretical Physics and Center for Computation and Technology, Louisiana State University, Baton Rouge, Louisiana 70803, USA}

\author{Alexander Wei}
\affiliation{Department of Physics and Astronomy, Louisiana State University, Baton Rouge, Louisiana 70803, USA}
\affiliation{Hearne Institute for Theoretical Physics and Center for Computation and Technology, Louisiana State University, Baton Rouge, Louisiana 70803, USA}

\author{Mark M. Wilde}
\affiliation{School of Electrical and Computer Engineering, Cornell University, Ithaca, New York 14850, USA}
\affiliation{Department of Physics and Astronomy, Louisiana State University, Baton Rouge, Louisiana 70803, USA}
\affiliation{Hearne Institute for Theoretical Physics and Center for Computation and Technology, Louisiana State University, Baton Rouge, Louisiana 70803, USA}

\author{Kristina D. Launey}
\affiliation{Department of Physics and Astronomy, Louisiana State University, Baton Rouge, Louisiana 70803, USA}

\begin{abstract}
With a view toward addressing the explosive growth in the computational demands of nuclear structure and reactions modeling, we develop a novel quantum algorithm for neutron-nucleus simulations with general potentials, which provides acceptable bound-state energies even in the presence of noise, through the noise-resilient training method. In particular, the algorithm can now solve for any band-diagonal to full Hamiltonian matrices, as needed to accommodate a general central potential. While we illustrate the approach for exponential Gaussian-like potentials and \textit{ab initio} inter-cluster potentials (optical potentials), it can also accommodate the complete form of the chiral effective-field-theory nucleon-nucleon potentials used in \textit{ab initio} nuclear calculations. In this study, we provide a comprehensive analysis for the efficacy of this approach for three different qubit encodings, including the one-hot, binary, and Gray encodings, in terms of the number of Pauli strings and commuting sets involved. We also discuss the advantages of the algorithm for Hamiltonians of various band-diagonal widths, especially critical for potentials of perturbative nature, leading to a drastically reduced runtime of quantum simulations. We prove that the Gray encoding allows for an efficient scaling of the model-space size $N$ (or number of basis states used) and is more resource efficient for band-diagonal Hamiltonians having bandwidth up to $N$. We introduce a new commutativity scheme called distance-grouped commutativity (DGC) and compare its performance with the well-known qubit-commutativity (QC) scheme. We lay out the explicit grouping of Pauli strings and the diagonalizing unitary under the DGC scheme, and we prove that it outperforms the QC scheme, at the cost of a more complex diagonalizing unitary. Lastly, we provide first solutions of the neutron-alpha dynamics from quantum simulations suitable for noisy intermediate-scale quantum  processors, using an optical potential rooted in first principles, as well as a study of the bound-state physics in neutron-Carbon systems, along with a comparison of the efficacy of the one-hot and Gray encodings.
\end{abstract}

\date{\today}
\startpage{1}
\endpage{10}
\maketitle

\tableofcontents

\section{Introduction}
The atomic nucleus is a quantum many-body system made of nucleons that are subject to residual strong forces that have no analytical solution. For an $A$-particle system, the nuclear problem needs to be solved numerically in the infinite-dimensional Hilbert space of $A$ particles with Hamiltonians that admit state-of-the-art nucleon-nucleon (NN) forces, often three-nucleon (3N), and even four-nucleon (4N) forces. This leads to the so-called scale explosion problem in nuclear structure calculations, i.e., the explosive growth in computational resource demands with increasing number of particles and size of the spaces in which they reside. Major progress in the development of high-precision inter-nucleon interactions \cite{BedaqueVKolck02,EpelbaumNGKMW02,EntemM03,Epelbaum06,PhysRevLett.115.122301} along with the utilization of high-performance computing resources have tremendously advanced nuclear science explorations. This has placed \textit{ab initio} (or from first principles) large-scale simulations at the frontier of physics, including, for example, accurate theoretical predictions for light muonic atoms \cite{Ji_PRL_2013}, scattering calculations of interest to astrophysics and energy applications \cite{ElhatisariLRE15,HupinQN19, LauneyMD_ARNPS21}, as well as input to high-precision beta-decay measurements that probe physics beyond the standard model \cite{PhysRevLett.128.202503,PhysRevLett.128.202502}. 

The situation is even more complicated when one needs accurate descriptions of nuclear reactions -- the dynamics of several nuclei (reaction fragments) that interact, -- especially in regions of the nuclear chart where  experiments are currently infeasible. A general approach to reactions, especially suitable for heavier nuclear systems, is based on identifying few-body degrees, typically the  reaction fragments (or clusters) involved in the reaction, and reduce the many-body problem to a few-body technique \cite{thompsonn09}. This reduction results in  effective interactions (often referred to as optical potentials) between the clusters. Here again, the demand in classical computational resources grows exponentially with the number of reaction fragments and the range of their interaction.

With a view toward addressing such challenges in the long term by harnessing the advantages of quantum computing -- as demonstrated for various low- and high-energy nuclear physics problems (e.g., see \cite{PhysRevLett.120.210501,PhysRevLett.127.040505,PhysRevA.105.022440,PhysRevC.105.064308,PhysRevA.106.062435,KGL+22,PhysRevLett.130.221003,TurroLLNL23,Watson23,Davoudi23,PerezObiol23,BS24}) -- in this paper, we start with the simplest case of two clusters, one of which is a neutron. We provide, for the first time, solutions of the neutron-nucleus dynamics from quantum simulations suitable for the far-term error-corrected regime as well as for the noisy intermediate-scale quantum (NISQ) processors coupled with the noise-resilient (NR) training method \cite{Sharma_2020}: this is illustrated for the bound-state physics of the neutron-alpha (n-$^4$He) optical potential rooted in first principles \cite{BurrowsL21}, as well as for the lowest $\half^+$ energy in Carbon isotopes calculated through the n+$^{10}$C, n+$^{12}$C, and n+$^{14}$C dynamics. We note that, in distinction to terminology often used in quantum computing, ``dynamics simulation'' or ``simulation of (nuclear) dynamics'' refer here to the problem of modeling the nuclear multi-cluster system using a specific inter-cluster potential. 

The present method utilizes the Variational Quantum Eigensolver (VQE) \cite{Peruzzo2014,Cerezo2021,bharti2021noisy} and is based on the pioneering nuclear simulations of the deuteron on quantum computers, where the potential is given only by a single matrix element~\cite{PhysRevLett.120.210501}. In our study, we design a novel quantum algorithm for a two-cluster system and a very general potential, which allows for versatile applications, including widely used exponential potentials in reaction calculations (see, e.g., Ref.~\cite{DescouvemontB10}) and most importantly, optical potentials derived \textit{ab initio}. We note here that the choice of VQE can be replaced with any other technique, for example, tensor network warm starting, etc.

In this paper, we provide a generalized, extensible, and strong mathematical formulation of three mappings to qubits: one-hot, binary, and Gray encodings (see, e.g.,~\cite{Sawaya20}), explore their relative advantages, and illustrate these for simulations with the above-mentioned potentials. Going beyond the scope of earlier work that explored specific properties of mappings based on simulations only (e.g., see~\cite{SA21}), we provide mathematical proofs of scaling for these three encodings. Furthermore, the techniques that we use to prove these results are applicable for general encodings. Based on this, we show that the Gray encoding (introduced to nuclear calculations in Ref.~\cite{PhysRevA.103.042405}) allows for an efficient scaling of the model-space size $N$ (or number of the basis states used) and is more resource efficient not only for tridiagonal Hamiltonians ($K=1$), as suggested in Ref.~\cite{Sawaya20}, but also for band-diagonal Hamiltonians for $K<N/2$, where $2K+1$ is the bandwidth of the Hamiltonian. Interestingly, we show that for bandwidths larger than $N$, more off-diagonals can be added, if needed for an increased accuracy, without increasing the complexity of the problem. Another  outcome of this study relates to the efficacy of measurements, which is of key importance to obtaining acceptable outcomes on quantum devices. In particular, we introduce a new commutativity scheme called distance-grouped commutativity (DGC), which is especially useful for band-diagonal matrices.  We compare its performance with the well-known qubit-commutativity (QC) scheme. We lay out the explicit grouping of Pauli strings and the diagonalizing unitary under the DGC scheme. We show that the DGC scheme outperforms the QC scheme, at the cost of a more complex diagonalizing unitary. We note here that the diagonalizing unitary turns out to be the $\operatorname{GHZ}$ preparation unitary (See Lemma~\ref{lem:diagUnitaryDGC} for the mathematical definition), and these have been used for  diagonalization in quantum simulations \cite{SCLW22, FCP+23}.

We note that, in this study, the quantum simulations are reported for the lowest bound states of two clusters, for which a manageable number of qubits can be currently used (three or four qubits). Ultimately, the algorithm presented here can underpin multi-cluster dynamics simulations at low energies, such as those relevant to astrophysical studies, and can be utilized for weakly-bound states (e.g., for n+$^{16}$C and n+$^{18}$C) and even for isolated low-lying resonances that require solutions in much larger model spaces (larger $N$)\footnote{Applications of the present quantum algorithm are shown here for the use of harmonic oscillator single-particle basis states, the same square-integrable basis utilized in Refs.~\cite{QuaglioniN09,DreyfussLESBDD20,BurrowsL21,PhysRevLett.128.202503}, while the asymptotics are recovered in an R-matrix technique \cite{thompsonn09}. Alternatively, one can use the quantum algorithm for a square-nonintegrable basis, such as the eigenstates of the Woods-Saxon potential as utilized, e.g., in Ref.~\cite{mercennemp19}.}. Knowledge about the bound-state physics is key, e.g., to the description of deuteron break-up reactions, such as (d,p) and (d,n) reactions, for which standard distorted-wave Born approximation methods rely on the physics of the bound states of the proton-nucleus and neutron-nucleus systems. As another important implication, exploring trade-offs of band-diagonal and full Hamiltonian matrices for different encodings is critical for the simplest two-cluster system of two nucleons; namely, this allows quantum simulations of nuclear structure to handle the complete form of the chiral nucleon-nucleon (NN) potentials (e.g., see \cite{EntemM03,Epelbaum06,PhysRevLett.115.122301}), which are in turn key to \textit{ab initio} large-scale simulations of light, medium-mass, and even selected heavy nuclei.

We provide solutions for various n$+$C systems using an exponential potential, and for n$+\alpha$ based on the \textit{ab initio}  optical potential derived in Ref.~\cite{BurrowsL21}. We find that the quantum simulations are successful in finding the lowest $\frac{1}{2}^+$ bound-state energies. We develop a warm-start algorithm, inspired by perturbation theory, that is particularly useful for band-diagonal Hamiltonians. In this method, we first simulate the system for a simpler, leading-order, Hamiltonian, and use the endpoint of the simulation as the start for the full-scale simulation. We find that this method allows for a quicker convergence to the true energy value.

Our paper is intended to serve interdisciplinary research at the intersection of nuclear physics and quantum information science, and in some cases includes details that may be well known in one of the fields but are pedagogical for researchers of the other field: our aim here is to provide a complete framework for the problem at hand. Our paper is organized as follows. In Sec.~\ref{sec:ProbDesc}, we introduce the nuclear problem of solving the neutron-nucleus dynamics and its Hamiltonian. In Sec.~\ref{sec:encodings}, we discuss different encoding methods of mapping the given Hamiltonian to a form that can be simulated on a quantum computer. The different encoding methods we discuss include the one-hot encoding, binary encoding, and the Gray encoding. In Sec.~\ref{sec:VQE}, we briefly explain the variational principle, for completeness of presentation. Since the variational principle depends on the choice of a trial state, called ansatz, we delineate the different ansatz choices for the different encodings. In Sec.~\ref{sec:tradeoffs}, we analyze various advantages between the different encodings considered, including the number of Pauli terms and the number of commuting sets, for a most general local potential and its band-diagonal approximation. As part of our trade-off analysis for the different encodings of Hamiltonians, we introduce the new DGC measurement scheme to group Pauli strings into commuting operator sets. For this scheme we provide the explicit diagonalizing unitary and an analysis of the number of commuting sets. In Sec.~\ref{sec:discussions}, we provide quantum simulations for different encodings and nuclear systems, including comparisons of the different commuting sets. We discuss the results and challenges for weakly bound states.

\section{Problem description}
\label{sec:ProbDesc}

A many-body ``configuration interaction" (CI) method (often called the shell model in nuclear physics~\cite{BrussardG77,Shavitt98,BarrettNV13}) solves the many-body Schr\"odinger equation for $A$ particles:
\begin{equation}
H \Psi(\vec r_1, \vec r_2, \ldots, \vec r_A) = E \Psi(\vec r_1, \vec r_2, \ldots, \vec r_A),
\label{ShrEqn}
\end{equation}
for which the interaction and basis configurations are adopted as follows.
The intrinsic non-relativistic nuclear and Coulomb interaction Hamiltonian is defined as
\begin{equation}
H = T_{\rm rel} + V_{\rm NN}  + V_{\rm 3N} + \cdots + V_{\rm Coulomb}, 
\label{intH}
\end{equation}
where $T_{\rm rel} =\frac{1}{A}\sum_{i<j}\frac{(\vec p_i - \vec p_j)^2}{2m_{\rm N}}$ is the relative kinetic energy ($m_{\rm N}$ is the nucleon mass), $V_{\rm NN}=\sum_{i<j}^A (V_{\rm NN})_{ij}$  is the nucleon-nucleon (NN) interaction  (and possibly, $V_{\rm 3N}=\sum_{i<j<k}^A (V_{\rm NNN})_{ijk}$, $V_{\rm 4N}$, $\ldots$ interactions), and $V_{\rm Coulomb}$ is the Coulomb interaction between the protons.  The Hamiltonian may also include higher-order electromagnetic interactions, such as magnetic dipole-dipole terms.  

A complete orthonormal basis $\{\psi_i\}_i$ is adopted, such that the expansion  $\Psi(\vec r_1, \vec r_2, \ldots, \vec r_A)$ in terms of  unknown coefficients $c_k$,
$\Psi(\vec r_1, \vec r_2, \ldots, \vec r_A) = \sum_{k} c_k \psi_k(\vec r_1, \vec r_2, \ldots, \vec r_A)$,
 renders Eq.~\eqref{ShrEqn} into a matrix eigenvalue equation:
\begin{equation}
\label{eq:ham_eig_eqn}
\sum_{k'} H_{k k'} c_{k'} = E c_k,
\end{equation}
where the many-particle Hamiltonian matrix elements 
$H_{k k'} = \langle \psi_k | H | \psi_{k'} \rangle$ are in general complex and are calculated for the given interaction Eq.~\eqref{intH}. Typically, the basis is a finite set of antisymmetrized products of  single-particle states (Slater determinants), referred to as a ``model space". In this study, we use the single-particle states  of a three-dimensional spherical harmonic oscillator (HO), 
$\phi_{n_r (\ell \half)jmt_z}(\vec r)$, where $n_r$ is the radial quantum number, the orbital angular momentum $\ell$ and spin-$\frac{1}{2}$ are coupled to the total angular momentum $j$, and $t_z$ distinguishes between protons and neutrons (we use the convention of HO wavefunctions that are positive at infinity). Such a basis allows for preservation of translational invariance of the nuclear self-bound system and provides solutions in terms of single-particle wave functions that are analytically known. With larger model spaces utilized in the  shell-model theory, the  eigensolutions converge to the exact ones.

To describe the neutron-nucleus (NA) dynamics, e.g., for n-$\alpha$, where $\alpha$ (with $A=4$ particles) is in its ground state $\ket{\Psi^{(A=4)}_0}$ with energy $E^{(4)}_0$, one can deduce an effective non-local interaction $\tilde V(r,r')$ between the neutron and the four-body system using the Green's function approach \cite{BurrowsL21}, where $r$ (and $r'$) is the relative distance between the two clusters before (and after) scattering.  This is based on solutions of the five-body system, that is, all states $\ket{\Psi^{(5)}_k}$ with their energy $E^{(5)}_k$, along with their single-particle overlaps $u_k(\vec r) = \braketop{\Psi^{(5)}_{k}}{a^\dagger_{\vec r}}{\Psi^{(4)}_{0}}$, where $a^\dagger_{\vec r}$ creates a single particle at distance ${\vec r}$. This $\tilde V(r,r')$ potential, which can be readily derived in the \textit{ab initio} framework (see Ref.~\cite{BurrowsL21} for n-$\alpha$), can be rendered into an equivalent local form $V(r)$ according to Eq.~(31) of Ref.~\cite{PhysRevC.95.024315}, 
\begin{equation}
    V(r) u(r) = \int dr'\, r'^2\, \tilde V(r,r')\, u(r'),
    \label{eq:locV}
\end{equation}
where $u(r)$ is in units of fm$^{-3/2}$ and $V(r)$ is in units of MeV. We note that for bound states and resonances for which only the elastic channel is open, the potentials are real, except on the poles, which can be numerically avoided through the principal value theorem as shown in Ref.~\cite{BurrowsL21}; while in this study, we do not need the imaginary part of the $\tilde V(r,r')$ optical potential for the bound-state simulations at hand, the generalization to complex matrices is straightforward and feasible.
The potential energy $V(r)$ enters into the two-body Schr\"odinger equation, as described next.

Let $\vec{r}_1$ and $\vec{p}_1$ be the position and momentum vector of the nucleus, and let $\vec{r}_2$ and $\vec{p}_2$ be the position and momentum vector of the neutron (or any reaction fragment). Thus, the Hamiltonian of the $A+1$ nuclear system is given by
\begin{equation}
    \tilde{H} = \frac{p_1^2}{2m_1} + \frac{p_2^2}{2m_2} + V(|\vec{r_1} - \vec{r_2}|),
\end{equation}
where $m_1$ is the mass of the nucleus and $m_2$ is the mass of the neutron. Since we are interested in the relative motion of the projectile (the neutron) relative to the target (nucleus), we make a transformation to the center-of-mass coordinate, $\vec{R} = \frac{m_1 \vec{r_1} + m_2 \vec{r_2}}{m_1 + m_2}$, and relative coordinate,  $
    \vec{r} = \vec{r_1} - \vec{r_2}$: 
\begin{equation}
    \tilde{H} = \frac{P^2}{2M} + \frac{p^2}{2\mu} + V(r), {\,\rm with }\,H=\frac{p^2}{2\mu} + V(r),
\end{equation}
where $M = m_1 + m_2$ is the total mass, $\mu = \frac{m_1 m_2}{m_1 + m_2}$ is the reduced mass (usually reported in terms of the nucleon mass $m_{\rm N}$ and the mass numbers of the target $A$ and projectile $a$  as $\mu = \frac{A a}{A + a}m_{\rm N}$), and we use that 
$    \vec{P} = \vec{p_1} + \vec{p_2}$ and $   \vec{p}= \frac{m_2\vec{p_1} - m_1\vec{p_2}}{m_1 + m_2}$.
The term $\frac{P^2}{2M}$ represents the kinetic energy of the centre of mass. Since $[P, p] = 0$ and $[P, r] = 0$, this term can be dealt with independently. Thus, in the center-of-mass reference frame, we need to solve Eq.~\eqref{eq:ham_eig_eqn}, where $T = \frac{p^2}{2\mu}$ is the relative kinetic energy and $V(r)$ is the potential energy or the effective neutron-nucleus interaction. We focus on the $^2S_{\frac{1}{2}}$ partial wave (following the notation $^{2s+1}\ell_J$): $\alpha$ (or an even-even Carbon isotope) is in a $0^+$ ground state, the relative orbital angular momentum is $\ell=0$, the spin of the neutron is $s=1/2$, yielding total angular momentum $J=1/2$ (and since the projectile is a neutron, there is no Coulomb interaction). For this channel (and any positive-parity channel), the most general central potential can be expressed as 
\begin{equation}
    \label{eq:Vgen}
    V(r) =\sum_{k=0}^\infty v_k r^{2k},
\end{equation}
where the coefficients $v_k$ are taken to be real for all $k$ in this work, since we are interested in the bound-state physics. 
To represent this Hamiltonian on a quantum computer, we use a discrete-variable representation in the harmonic-oscillator basis, for which the radial wave functions are known analytically. In this basis, the Hamiltonian is an infinite-dimensional matrix. However, in order to perform simulations, we truncate the matrix to an $N \times N$ matrix, called $H_N$ (retaining the notations of Ref.~\cite{PhysRevLett.120.210501}). As the size $N$ of the matrix  increases, the approximation to the true Hamiltonian becomes more accurate. Expanding in this basis,
\begin{equation}
    \label{eq:H_matrix_rep}
    H_N \coloneqq \sum\limits_{n_r, n_r'=0}^{N-1} \langle n_r' \vert T+V \vert n_r \rangle \outerprod{n_r'}{n_r},
\end{equation}
where $n_r$ denotes a relative harmonic oscillator radial node number. The matrix elements of $T$ for every $\ell$ are given by
\begin{multline}
    \matele{n_r' \ell}{T}{n_r \ell} \coloneqq \frac{\hbar \omega}{2} \Bigg[ \bigg(2n_r + \ell + \frac{3}{2}\bigg) \delta_{n_r',n_r} \\
    \qquad - \sqrt{n_r\bigg(n_r+ \ell + \frac{1}{2}\bigg)}\delta_{n_r',n_r-1} \\
    \qquad - \sqrt{\bigg(n_r+1\bigg)\bigg(n_r+ \ell + \frac{3}{2}\bigg)}\delta_{n_r',n_r+1} \Bigg],
\end{multline}
where in Ref.~\cite{PhysRevLett.120.210501} and in this paper we use $\ell=0$ (hence, we will omit the $\ell$ notation henceforth).
The matrix representation of $T$ is tridiagonal; i.e., the only non-zero elements are the main diagonal and one diagonal each above and below it. 

The matrix elements of $V$ are given as
\begin{equation}
    \matele{n_r'}{V}{n_r} = \sum\limits_{k=0}^\infty v_k \matele{n_r'}{r^{2k}}{n_r}.
\end{equation}
To calculate the matrix elements of $r^{2k}$, we use a recursive approach,
\begin{equation}
    \matele{n_r'}{r^{2k}}{n_r} = \sum\limits_{n_r'' = 0}^\infty \matele{n_r'}{r^{2k-2}}{n_r''} \matele{n_r''}{r^2}{n_r},
\end{equation}
with the base case for every $\ell$ being 
\begin{multline}
    \matele{n_r'\ell}{r^2}{n_r \ell} \coloneqq b_s^2 \Bigg[ \bigg(2n_r + \ell+ \frac{3}{2}\bigg)\delta_{n_r',n_r} \\
    \qquad + \sqrt{n_r\bigg(n_r+ \ell + \frac{1}{2}\bigg)}\delta_{n_r',n_r-1} \\
    \qquad + \sqrt{\bigg(n_r+1\bigg)\bigg(n_r+\ell +\frac{3}{2}\bigg)}\delta_{n_r',n_r+1}\Bigg],
\end{multline}
where the oscillator length is defined in terms of the reduced mass and \hw. Also,  $b_s\coloneqq \sqrt{ \frac{\hbar }{\mu \omega} }$ (and $\ell=0$ is used in this study).

A schematic of the matrices of this problem description is given in Fig.~\ref{fig:TruncatedPotential}b, as compared to the one used in Ref.~\cite{PhysRevLett.120.210501} with a single matrix element (Fig.~\ref{fig:TruncatedPotential}a).

\begin{figure}
    \includegraphics[width=0.8\columnwidth]{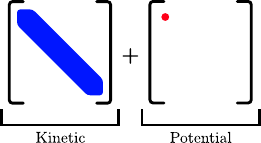}\\ (a)
\\
    \includegraphics[width=0.8\columnwidth]{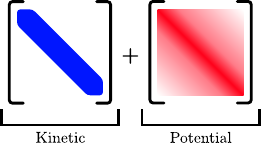}\\ (b)
\\
    \includegraphics[width=0.8\columnwidth]{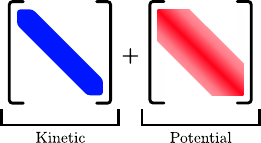}\\ (c)
\\
    \caption{The kinetic and potential energy matrices: (a) for the contact potential used in Refs.~\cite{PhysRevLett.120.210501,PhysRevA.103.042405}, (b) for the complete potential $V(r)$, and (c) for the truncated potential $V_K(r)=\sum_{k=0}^K v_k r^{2k}$  used in this work.
    }
    \label{fig:TruncatedPotential}
\end{figure}

In many cases, it is advantageous to approximate, to a very good degree, the neutron-nucleus potential by an exponential form:
\begin{equation}
    V(r) \approx V_{\rm E}(r) = V_0\exp\!\left(-c(r/b_s)^2 \right).
    \label{eq:Vexp}
\end{equation}
Expanding the potential as a Taylor series around $r=0$ leads to
\begin{align}
    V_{\rm E}(r) &=  V_0 \sum\limits_{k=0}^\infty \frac{(-1)^k c^k}{k!} \left( \frac{r}{b_s} \right)^{2k}. 
\end{align}

Finally, for the quantum simulations, we set an upper truncation parameter $K$ for the number of terms in the expansion. Therefore, the potential is given by
\begin{equation}
    \matele{n_r'}{V_K}{n_r} = \sum\limits_{k=0}^K v_k \matele{n_r'}{r^{2k}}{n_r},
    \label{eq:Vcut}
\end{equation}
with $v_k=V_0 (-1)^k c^k/ (k!b_s^{2k})$ in the case of the exponential approximation $V_{\rm E}(r)$ of Eq.~\eqref{eq:Vexp}.
We note that the matrix representation of $r^2$ is tridiagonal and in general, the matrix representation of $r^{2K}$ is $(2K+1)$-diagonal. 
A schematic of the truncated matrices of this problem description suitable for quantum simulations is given in Fig.~\ref{fig:TruncatedPotential}c.

The quantum computational simulations of nuclear systems of Ref.~\cite{PhysRevLett.120.210501,PhysRevA.103.042405} have used a contact potential for $\ell=0$, or $\braketop{n_r'}{V}{n_r}=V_0\delta_{n_r,n_r'}\delta_{n_r,0}$ (Fig.~\ref{fig:TruncatedPotential}a). In this work, the Hamiltonian is generalized for  
any band-diagonal to full matrix and can now accommodate a general central potential, such as  exponential Gaussian-like potentials using Eq.~\eqref{eq:Vexp} (e.g., \cite{DescouvemontB10}), along with \textit{ab initio} inter-cluster potentials (as those derived in Ref.~\cite{BurrowsL21}) and the central part of chiral NN potentials using Eq.~\eqref{eq:Vgen}. In general, chiral NN potentials, such as in Refs.~\cite{EntemM03,Epelbaum06,PhysRevLett.115.122301}, used in \textit{ab initio} nuclear calculations require, in addition, spin and isospin degrees of freedom $\alpha=n_r(\ell \half)jmt_z$, as described above. Including the additional spin-isospin quantum numbers leads to a larger set of basis states, but nonetheless, in the present framework this is straightforward by generalizing Eq.~\eqref{eq:H_matrix_rep} to 
\begin{equation}
    H = \sum\limits_{\alpha, \alpha'} \langle \alpha' \vert T+V \vert \alpha \rangle \outerprod{\alpha'}{\alpha},  
\end{equation}
where $\langle \alpha' \vert V \vert \alpha \rangle$ are known matrix elements for any chiral NN potential and the enumerating index $\alpha$ replaces $n_r$ in the mappings to qubits discussed next. Importantly, physically relevant potentials with a comparatively soft core or widely used potentials renormalized using, e.g., the Similarity Renormalization Group (SRG) technique~\cite{BognerFP07}, are band-diagonal as a result of the decoupling of low- and high-momentum configurations. Hence, the advantages found in this study as a function of the bandwidth $2K+1$ directly generalize to the complete form of the chiral NN potentials and their routinely used band-diagonal structure.

\section{Mapping onto a quantum computer
\label{sec:encodings}}

In the current form, the Hamiltonian from Eq.~\eqref{eq:H_matrix_rep} is given in terms of number operators of the form $\outerprod{m}{m}$ and step-$i$ ladder operators of the form $\outerprod{m}{m-i}$ and $\outerprod{m}{m+i}$ for $i \in \{1, \ldots, K\}$ (using the notation of Ref.~\cite{PhysRevA.103.042405}). To find the bound-state energy of this Hamiltonian, we first need to map these operators into Pauli strings. 

Mappings from operators in the Fock space to Pauli strings are called \textit{encodings}. Multiple encodings can be found in existing literature, for example, the one-hot encoding (OH), the Bravyi-Kitaev encoding, the Verstraete-Cirac encoding, the binary encoding, and the Gray encoding (see, e.g.,~\cite{Sawaya20,PhysRevA.103.042405,VC05}. In addition, there has been substantial work on using qudits (higher dimensional systems) rather than qubits, which has been shown to provide some advantage in specific scenarios \cite{IRS23, VEN24, BRSK+21}. In this work, we focus on three qubit encodings - OH, binary, and Gray - and analyze various properties and trade-offs. 

To provide an illustrative example, for each encoding above, we show the explicit Pauli terms for the $N=4, K=2$ Hamiltonian with the exponential potential in Eq.~\eqref{eq:Vexp} with parameters
\begin{equation}
       V_0 = -2.79\, \textrm{MeV},\, 
       c = 0.05 ,\,
    \hbar \omega = 15.95\, \textrm{MeV},
    \label{eq:eg}
\end{equation}
for the case of n+$^{16}$C with reduced mass $\mu=\frac{16}{17}m_{\textrm{N}}$ (here we use $m_{\textrm{N}}=938.272029$ MeV for both protons and neutrons). The Hamiltonians for the different encodings can be found in their respective sections below. 

We note that the development for the ladder operators and step-$1$ operators follows directly from \cite{PhysRevA.103.042405}. We generalize these results to step-$k$ operators, for $k>1$, in this section. 

\subsubsection{One-hot encoding}

For fixed $N$, the one-hot encoding maps the number and ladder operators to Pauli strings on $N$ qubits. The Fock basis states are mapped as follows:
\begin{equation}
    \ket{m} \rightarrow \ket{q_0 q_1 \ldots q_{N-1}},
\end{equation}
for $m \in \{0, \ldots, N-1\}$, where $q_m = 1$ and all other bits are zero. For example, for $N=4$, the states are mapped as
\begin{eqnarray}
 \begin{tabular}{ c c c }
\label{tab:OH-basis-states}
    $\ket{0}$ & $\to$ & $\ket{1000}$, \\ 
    $\ket{1}$ & $\to$ & $\ket{0100}$, \\ 
    $\ket{2}$ & $\to$ & $\ket{0010}$, \\ 
    $\ket{3}$ & $\to$ & $\ket{0001}$.
\end{tabular}  
\end{eqnarray}

Next, we define the number operators and ladder operators. The number operator $\outerprod{m}{m}$ maps $\ket{m}$ to itself and maps all other basis states to $0$:
\begin{equation}
    \left(\outerprod{m}{m}\right) \ket{m'} = \delta_{m, m'} \ket{m}.
\end{equation}
Thus, in the encoded basis, the number operators are mapped to
\begin{equation}
    \outerprod{m}{m} \rightarrow \outerprod{1}{1}_{m} \coloneqq \frac{1}{2}(I_{m} - Z_{m}),
\end{equation}
where the subscript $m$ indicates the qubit on which the operator acts, that is, e.g., $Z_2=I \otimes I \otimes Z \otimes I \otimes \dots \otimes I$. 

Next, the action of the ladder operator $\outerprod{m}{m \pm i}$ on the number states is as follows:
\begin{equation}
    \left(\outerprod{m}{m \pm i} \right) \ket{m'} = \delta_{m \pm i, m'}\ket{m}.
\end{equation}
In this encoded basis, this action involves flipping the $(m \pm i)$ qubit to $0$ and the $m$ qubit to 1:
\begin{align}
    &\outerprod{m}{m \pm i} \rightarrow \outerprod{1}{0}_{m} \otimes \outerprod{0}{1}_{m \pm i} \nonumber \\
    &= \frac{1}{2}(X_{m} -\mathrm{i} Y_{m}) \otimes \frac{1}{2}(X_{m \pm i} + \im Y_{m \pm i}) \nonumber \\
    &= \frac{1}{4}(X_{m} X_{m \pm i} + \im X_{m} Y_{m \pm i} - \im Y_{m}X_{m \pm i} + Y_{m}Y_{m \pm i}).
\end{align}
We note that these ladder operators always occur in pairs due to the Hermiticity of the Hamiltonian. Thus, 
\begin{equation}
    \outerprod{m}{m \pm i} + \outerprod{m \pm i}{m} = \frac{1}{2}(X_{m} X_{m \pm i} + Y_{m}Y_{m \pm i}).
\end{equation}
The rest of the operators can be constructed similarly. The list of all operators for $N=4$ can be found in Appendix~\ref{app:list_operators}. 

Substituting the above defined number and ladder operators into Eq.~\eqref{eq:H_matrix_rep}, we find the representation of $H_{N, K}$ in the one-hot encoding to be
\begin{multline}
    \label{eq:HN_OneHot}
    H_{N, K} = \frac{1}{2} \sum\limits_{m=0}^{N-1} \langle m \vert H \vert m \rangle (I_{m} - Z_{m})\\
    + \frac{1}{2} \sum\limits_{m=0}^{N-1} \sum\limits_{k=1}^{\max(K, 1)} \langle m+k \vert H \vert m \rangle (X_{m} X_{m+k} + Y_{m} Y_{m+k}).
\end{multline}
The upper limit of the $k$ summation is $\max(K, 1)$ due to the fact that even if $K = 0$, the first off-diagonal term in the Hamiltonian is non-zero as a result of the kinetic energy having two off-diagonal terms. Note that, to preserve the size of the matrix to be $N$, the summation over $k$ terminates if $m + k > N - 1$. Thus, in the encoded one-hot basis, if $K>1$, the resulting matrix is $(2K+1)$-diagonal.  

For the example under consideration in Eq.~\eqref{eq:eg}, the encoded Hamiltonian is given by
\begin{multline}
\label{eq:example_OH_enc}
    H_{4, 2} = 67.117\ IIII - 4.674\ IIIZ - 12.751\ IIZI \\
    - 20.812\ IZII - 28.880\ ZIII -4.814\ IIXX \\ 
    - 4.814\ IIYY -8.801\ IXXI - 8.801\ IYYI \\
    -12.772\ XXII - 12.772\ YYII -0.004\ IXIX \\
    - 0.004\ IYIY -0.014\ XIXI - 0.014\ YIYI,
\end{multline}
where we have truncated the coefficients to three decimal places.

\begin{remark}
    The Jordan--Wigner transformation is a mapping from fermionic operators to Pauli operators of the form:
    \begin{align}
        a^\dagger_{m} &\rightarrow \frac{1}{2} \left[ \prod_{j=0}^{m-1} Z_j \right] (X_{m} - \im Y_{m}) \\
        a_{m} &\rightarrow \frac{1}{2} \left[ \prod_{j=0}^{{m}-1} Z_j \right] (X_{m} + \im Y_{m}).
    \end{align}

    We note that the Jordan--Wigner transformation and the one-hot encoding do not map a fermionic operator to the same Pauli string. The Jordan--Wigner transformation results in the following mapping:
    \begin{equation}
        a^\dagger_{m} a_{{m}+i} + a^\dagger_{{m}+i} a_{m}  
        \rightarrow \frac{1}{2} \left( X_{m} \overline{Z} X_{{m}+i} + Y_{m} \overline{Z} Y_{{m}+i} \right),
        \label{eq:JW-map}
    \end{equation}
    where $\overline{Z} \equiv Z_{{m}+1} \otimes \cdots \otimes Z_{{m}+i-1}$. On the other hand, the one-hot encoding results in the map:
    \begin{equation}
        \outerprod{{m}}{{m}+i} + \outerprod{{m}+i}{{m}}  
        \rightarrow \frac{1}{2} \left( X_{m} X_{{m}+i} + Y_{m} Y_{{m}+i} \right).
        \label{eq:OH-map}
    \end{equation}

    While the operators in Eqs.~\eqref{eq:JW-map} and \eqref{eq:OH-map} in general act differently, their action is identical on the set of encoded basis states given in Eq.~\eqref{tab:OH-basis-states}. Thus, if we restrict the states to be superpositions of the encoded basis states only, these operators can be used interchangeably.
\end{remark}

\subsubsection{Binary encoding}

For fixed $N$, the binary encoding maps the number and ladder operators to Pauli strings on $n = \lceil \log_2(N) \rceil$ qubits. For simplicity, we restrict $N$ to be a power of two, in which case the encoded basis set is exactly of size $n = \operatorname{log}_2(N)$. The Fock basis states are mapped as follows:
\begin{equation}
    \ket{m} \rightarrow \ket{q_0 q_1 \ldots q_{n-1}},
\end{equation}
for $m \in \{0, \ldots, N-1\}$, where the bitstring $q = q_0 q_1 \ldots q_{n-1}$ is the binary representation of $m$, denoted by $b_{m}$, on $n$ qubits. For the case of $N=8$, the encoded basis consists of three qubits and is given by
\begin{center}
\begin{tabular}{ c c c }
    $\ket{0}$ & $\to$ & $\ket{000}$, \\ 
    $\ket{1}$ & $\to$ & $\ket{001}$, \\ 
    $\ket{2}$ & $\to$ & $\ket{010}$, \\ 
    $\ket{3}$ & $\to$ & $\ket{011}$, \\
    $\ket{4}$ & $\to$ & $\ket{100}$, \\ 
    $\ket{5}$ & $\to$ & $\ket{101}$, \\ 
    $\ket{6}$ & $\to$ & $\ket{110}$, \\ 
    $\ket{7}$ & $\to$ & $\ket{111}$.
\end{tabular}
\end{center}

Thus, the binary basis $\mathcal{B}_n$, on $n = \operatorname{log}_2(N)$ qubits, is a list of $N$ basis elements:
\begin{equation}
    \mathcal{B}_n = (b_0, b_1, \ldots, b_{2^n - 1}),
\end{equation}
where $b_{m}$ is the binary representation of the integer $m$. 

Next, let us consider how the number and ladder operators map. To this end, let us establish some notation for the following  operators:
\begin{equation}
    \outerprod{m+k}{m} \to B^k_{m}.
\end{equation}
The number operators $\outerprod{m}{m}$ map to $B^0_{m}$, which are defined as
\begin{align}
    \label{eq:BE-B0}
    \outerprod{m}{m} \to B^0_{m} &\coloneqq \outerprod{b_{m}}{b_{m}} \notag \\
    &= P^{b_{m, 0}}_0 \otimes P^{b_{m, 1}}_1 \otimes \cdots \otimes P^{b_{m, n-1}}_{n-1} \notag \\
    &= \bigotimes\limits_{i=0}^{n-1} P^{b_{m, i}}_i,
\end{align}
where $b_{m, i}$ denotes bit $i$ of $b_{m}$ and the operators $P^0$ and $P^1$ are defined as 
\begin{equation}
\begin{aligned}
    \label{eq:Proj01}
    P^0 &\coloneqq  \outerprod{0}{0} = \frac{1}{2} (I + Z) ,  \\
    P^1 & \coloneqq \outerprod{1}{1} = \frac{1}{2} (I - Z).
\end{aligned}
\end{equation}
For example, the number operator $\outerprod{6}{6}$ is mapped to 
\begin{align}
    \outerprod{6}{6} \to B^0_6 & = P^1_0 \otimes P^1_1 \otimes P^0_2 \nonumber \\
    &= \outerprod{1}{1}_0 \otimes \outerprod{1}{1}_1 \otimes \outerprod{0}{0}_2.
\end{align}

In a similar fashion, the step-$1$ ladder operators are mapped as follows:
\begin{equation}
    \outerprod{m+1}{m} \to B^1_m \coloneqq \bigotimes\limits_{i=0}^{n-1} \outerprod{b_{m+1, i}}{b_{m, i}}.
\end{equation}
For example, the ladder operator
\begin{equation}
    \outerprod{4}{3} \to B^1_3 = \outerprod{1}{0}_0 \otimes \outerprod{0}{1}_1 \otimes \outerprod{0}{1}_2,
\end{equation}
as $\ket{3}$ and $\ket{4}$ are mapped to $\ket{011}$ and $\ket{100}$, respectively. 

 Step-$k$ ladder operators for $k>1$ are defined recursively in terms of step-$1$ ladder operators:
\begin{equation}
    \outerprod{m+k}{m} \to B^k_m \coloneqq B^1_{m+k-1} B^{k-1}_m.
\end{equation}
The list of all operators for $N=4$ can be found in Appendix~\ref{app:list_operators}.

Thus, substituting for the number and ladder operators in Eq.~\eqref{eq:H_matrix_rep}, we find the representation of $H_{N, K}$ in the binary encoding to be
\begin{multline}
\label{eq:HN_Binary}
    H_{N, K} = \sum\limits_{m=0}^{N-1} \langle m \vert H \vert m \rangle B^0_m \\
    + \sum\limits_{m=0}^{N-1} \sum\limits_{k=1}^{\max(1, K)} \langle m+k \vert H \vert m \rangle (B^k_m + (B^k_m)^\dagger).
\end{multline}
Note that the summation over $i$ terminates if $m+k > N - 1$. 

For the example under consideration Eq.~\eqref{eq:eg},  the encoded Hamiltonian is given by
\begin{multline}
\label{eq:example_bin_enc}
    H_{4, 2} = 33.556\ II - 8.073\ ZI - 16.134\ IZ 
    - 0.004\ ZZ \\
    - 0.014\ IX + 7.959\ XZ - 0.006 ZX\\
    - 17.586\ XI - 8.801\ XX - 8.801\ YY.
\end{multline}

\subsubsection{Gray encoding}
For a fixed $N$, the Gray encoding maps the number and ladder operators to Pauli strings on $n = \lceil \log_2(N) \rceil$ qubits. For simplicity, we restrict $N$ to be a power of two, in which case the encoded basis set is exactly of size~$n = \operatorname{log}_2(N)$. We first define the Gray basis on $n$ bits, $\mathcal{G}_n$, as a list of $2^n$ basis elements:
\begin{equation}
    \mathcal{G}_n = (g_0, g_1, \ldots, g_{2^n - 1}),
\end{equation}
where each $g_i = (g_{i,0}, g_{i,1}, \ldots g_{i,L-1})$ is a bitstring of length~$n$. The only characteristic of a Gray encoding is that each bitstring entry $g_i$ differs from its neighbor at a single bit. Thus, for a given $n$, there are multiple possible Gray codes. In this work, drawing inspiration from Ref.~\cite{PhysRevA.103.042405}, we use a binary reflective Gray code on $n$ bits. Such a code is defined recursively as follows:
\begin{equation}
    \mathcal{G}_n = (\mathcal{G}_{n-1} \cdot 0, \overline{\mathcal{G}_{n-1}} \cdot 1),
\end{equation}
where $\overline{\mathcal{G}_n}$ is the Gray code on $n$ bits with the entries in reverse order, and $X \cdot y$ is the list of entries of $X$ with $y$ appended at the end. For example, given that $\mathcal{G}_2 = (00, 10, 11, 01)$, we can construct $\mathcal{G}_3$ as follows:
\begin{align}
    \mathcal{G}_3 &= (\mathcal{G}_2 \cdot 0, \overline{\mathcal{G}_2} \cdot 1) \notag \\
    &= ((00, 10, 11, 01) \cdot 0, \overline{(00, 10, 11, 01)} \cdot 1) \notag \\
    &= (000, 100, 110, 010, 011, 111, 101, 001).
\end{align}

Thus, for a fixed $N$, the Fock basis states are mapped to the corresponding entry in a Gray basis $\mathcal{G}_n$, where $n = \operatorname{log}_2(N)$:
\begin{equation}
    \ket{m} \rightarrow \ket{q_0 q_1 \ldots q_{n-1}},
\end{equation}
for $m \in \{0, \ldots, N-1\}$, where the bitstring $q = q_0 q_1 \cdots q_{n-1}$ is the $m$th entry in the Gray basis $g_m$. For example, for $N=8$, the encoded basis is made of three qubits:
\begin{center}
\begin{tabular}{ c c c }
    $\ket{0}$ & $\to$ & $\ket{000}$, \\ 
    $\ket{1}$ & $\to$ & $\ket{100}$, \\ 
    $\ket{2}$ & $\to$ & $\ket{110}$, \\ 
    $\ket{3}$ & $\to$ & $\ket{010}$, \\ 
    $\ket{4}$ & $\to$ & $\ket{011}$, \\ 
    $\ket{5}$ & $\to$ & $\ket{111}$, \\ 
    $\ket{6}$ & $\to$ & $\ket{101}$, \\ 
    $\ket{7}$ & $\to$ & $\ket{001}$. \\ 
\end{tabular}
\end{center}

Next, let us consider how the number and ladder operators map. To this end, let us define the following operators:
\begin{equation}
    \outerprod{m+k}{m} \to G^k_m.
\end{equation}
The number operator $\outerprod{m}{m}$ is mapped to $G^0_m$, which is defined as
\begin{equation}
    \label{eq:GE-L0}
    \outerprod{m}{m} \to G^0_m \coloneqq \bigotimes\limits_{i=0}^{n-1} P^{g_{m, i}}_i.
\end{equation}
To define the step-1 ladder operators, we use a similar construction as in the previous section:
\begin{equation}
    \outerprod{m+1}{m} \rightarrow G^1_m \coloneqq \bigotimes\limits_{i=0}^{n-1} \outerprod{g_{m+1, i}}{g_{m, i}}.
\end{equation}
For example, the ladder operator connecting the basis elements $\ket{2}$ and $\ket{3}$ is given by
\begin{equation}
    \outerprod{3}{2} \to G^1_2 = \outerprod{0}{1}_0 \otimes P^1_1 \otimes P^0_2,
\end{equation}
since the $\ket{3}$ and $\ket{2}$ basis elements differ on the first bit and the other two bits are in the state $\ket{10}$. However, since the Hamiltonian is Hermitian, both the ladder-1 up and down operators are scaled with the same coefficient. Thus, 
\begin{equation}
    G^1_2 + (G^1_2)^\dagger = X_0 \otimes P^1_1 \otimes P^0_2.
\end{equation}

We then go on to define the step-$i$ ladder operators recursively with the step-1 ladder operators being the base case:
\begin{align}
    &\outerprod{m+i}{m} \to G^i_m \nonumber \\
    &\coloneqq (\outerprod{m+i}{m+i-1})(\outerprod{m+i-1}{m}), 
\end{align}
where the ladder operators in the first parenthesis are defined as above. Next, we add an extra term that leads to the recursive definition needed.
For example,
\begin{align}
    \outerprod{4}{1} &= (\outerprod{4}{3})(\outerprod{3}{1}) \nonumber \\
    &= (\outerprod{4}{3} + \outerprod{3}{4})(\outerprod{3}{1}) \nonumber \\
    &= (P^0_0 \otimes P^1_1 \otimes X_2)(\outerprod{3}{1}),
\end{align}
where $\outerprod{3}{1}$ is defined similarly. The list of all operators for $N=4$ can be found in Appendix~\ref{app:list_operators}. 

Thus, substituting for the number and ladder operators in Eq.~\eqref{eq:H_matrix_rep}, we find the representation of $H_N$ in a Gray encoding with the potential truncation parameter set to~$K$:
\begin{multline}
\label{eq:HN_Gray}
   H_{N, K} = \sum\limits_{m=0}^{N-1} \langle m \vert H \vert m \rangle G^0_m \\
    + \sum\limits_{m=0}^{N-1} \sum\limits_{k=1}^{\max(K, 1)} \langle m+k \vert H \vert m \rangle (G^k_m + (G^k_m)^\dagger).
\end{multline}
Note that the summation over $k$ terminates if $m+k > N - 1$. 

For the example under consideration Eq.~\eqref{eq:eg},  the encoded Hamiltonian is given by
\begin{multline}
\label{eq:example_gray_enc}
    H_{4, 2} = 33.556\ II - 16.133\ ZI - 0.004\ IZ 
    - 8.073\ ZZ \\ - 17.586\ IX + 7.959\ ZX + 8.801 XZ \\
    - 8.801\ XI - 0.014\ XX - 0.006\ YY.
\end{multline}

\section{Lowest-state energy computation by variational quantum eigensolver \label{sec:VQE}}

\subsection{Variational principle}

For completeness, we summarize the variational principle that underpins the Variational Quantum Eigensolver (VQE). Given a Hamiltonian, the minimum energy eigenstate is called the ground state $\ket{\psi_g}$ and its energy $E_g$ is called the ground-state energy\footnote{
We note that for a model space restricted to a given total angular momentum $J$ (spin) and parity $\pi$ of the nucleus, the minimum energy eigenstate provides the lowest state for the given $J^\pi$ and coincides with the ground state only if the ground state has the same spin-parity. For example, in the present study of Carbon isotopes, the lowest ${\frac{1}{2}}^+$ state is the ground state only for $^{15}$C (the composite system for n+$^{14}$C) and $^{19}$C (the composite system for n+$^{18}$C), while the other Carbon isotopes under consideration have a ground state of different spin-parity.
}. More precisely, given a Hamiltonian $H$, the ground-state energy and the ground state are defined as:
\begin{equation}
\begin{aligned}
    E_g &\coloneqq  \min\limits_{\ket{\psi}}\ \langle \psi \vert H \vert \psi \rangle,  \\
    \ket{\psi_g} &\coloneqq  \underset{\ket{\psi}}{\operatorname{argmin}}\ \langle \psi \vert H \vert \psi \rangle.
\end{aligned}
\end{equation}

To estimate the ground-state energy (lowest $J^\pi$ state energy), we use a parameterized quantum circuit to attempt achieve the minimum. We define a set of unitaries $\{U(\btheta)\}_{\btheta}$ called an ansatz, such that 
\begin{align}
    E_{\btheta} &= \langle 0 \vert U^\dagger(\btheta) H U(\btheta) \ket{0} \nonumber \\
    &= \langle \psi(\btheta) \vert H \ket{\psi(\btheta)} \geq \langle \psi_g \vert H \ket{\psi_g} = E_g.
\end{align}
Since the inequality holds for every value $\btheta$, minimizing over every $\btheta$ provides an upper bound on the true ground-state energy:
\begin{equation}
    E_g \leq \min\limits_{\btheta} E_{\btheta},
\end{equation}
where the equality is achieved if the chosen ansatz can express every input state $\ket{\psi}$ for some parameter values. 

In this work, we use the Hamiltonian $H_{N, K}$ described in Sec.~\ref{sec:ProbDesc}, with truncation parameter $N$ for the size of the matrix and $K$ for the potential (referred to as hyperparameters). The choice of the potential is either a general central potential deduced \textit{ab initio} Eq.~\eqref{eq:Vgen} or an exponential potential Eq.~\eqref{eq:Vexp}.

\subsection{Ansatz description}
\label{sec:ansatz}

The choice of ansatz depends on the encoding scheme used. The structure of the ansatz chosen should ideally be able to express all possible combinations of the encoded basis states. A general hardware-efficient ansatz can be used \cite{HEA2017}, but the symmetry and structure of the Hamiltonian can influence and direct the ansatz definition.

Since the Hamiltonian is purely real, the eigenstates must be fully real. In conjunction with the fact that Hermitian matrices have real eigenvalues, this means that the Hamiltonian is diagonalized by an orthogonal transformation, and not a general unitary transformation. Thus, the ansatz unitary generates a real superposition of the basis states.

\subsubsection{One-hot ansatz}

For a given $N$, the ansatz choice for the one-hot encoding creates a real-coefficient superposition of the basis states. A pure state for an $N$-qubit one-hot basis with real coefficients can be expressed using generalized spherical coordinates. For example, with $N=4$,
\begin{multline}
    \label{eq:OneHotSuperPos}
    \ket{\psi (\btheta)} = \cos{\theta_1}\ket{0001} +  \sin{\theta_1}\cos{\theta_2}\ket{0010} + \\ \qquad \sin{\theta_1}\sin{\theta_2}\cos{\theta_3}\ket{0100} + \sin{\theta_1}\sin{\theta_2}\sin{\theta_3}\ket{1000}.
\end{multline}
Thus, for a truncation parameter $N$, the state is parameterized by $N-1$ parameters. The encoded state can be generated recursively using $R_y$ rotation gates and $\operatorname{CNOT}$ gates, as seen in Fig.~\ref{fig:OneHotAnsatz}.

\begin{figure}
    \includegraphics[width=0.8\columnwidth]{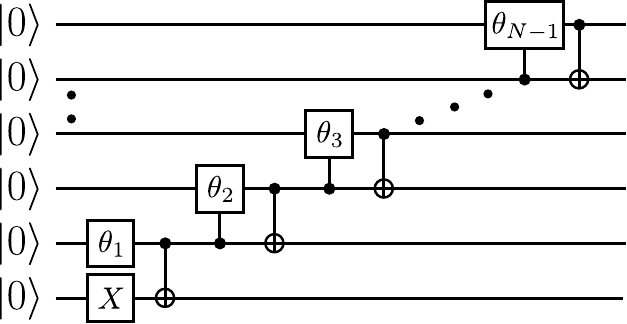}
    \caption{Recursive circuit ansatz to generate the superposition of the one-hot basis states. The input to the circuit is $\ket{0}^{\otimes N}$, and $\theta_i$ denotes the $R_y(2\theta_i)=\exp{(-\im \theta_i Y_i)}$ rotation gate.}
    \label{fig:OneHotAnsatz}
\end{figure}

\subsubsection{Binary and Gray ansatz}
In these encodings, we make use of the entire space spanned by the encoded basis states. This enables the use of a general hardware-efficient ansatz \cite{HEA2017}. However, from the argument above, we restrict this ansatz to create real superpositions of the basis states. This can be done with $R_y$ rotation gates and $\operatorname{CNOT}$ as the entangling gate. The ansatz is build up using multiple layers, each having a set of $Y$-rotations and entangling gates (see Fig.~\ref{fig:BR_GR_Ansatz}). The number of layers $L$ is a hyperparameter that needs to be chosen such that the ansatz is expressive enough without increasing the number of parameters too much. 

\begin{figure}
\includegraphics[width=0.5\columnwidth]{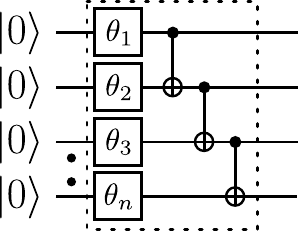}
    \caption{Circuit ansatz to generate a parameterized real superposition of all basis states. The input to the circuit is $\ket{0}^{\otimes n}$, where $n = \operatorname{log}_2(N)$. Each layer (marked with dotted lines) is repeated $L$ times. Thus, the total number of parameters is $nL$.}
    \label{fig:BR_GR_Ansatz}
\end{figure}

\section{Encoding techniques and trade-offs \label{sec:tradeoffs}}

In this section, we explore the various trade-offs between the encoding techniques. Important parameters for any simulation include the number of Pauli terms in the encoded Hamiltonian, the number of commuting sets, etc. A comprehensive analysis for a contact potential, or $\braketop{n_r'}{V}{n_r} = V_0\delta_{n_r, n_r'} \delta_{n_r,0}$ (see Fig.~\ref{fig:TruncatedPotential}a), can be found in Ref.~\cite{PhysRevA.103.042405}. We generalize these results for any band-diagonal to full Hamiltonian matrix needed to accommodate a general central potential, including exponential Gaussian-like potentials, \textit{ab initio} inter-cluster potentials, and the central part of any chiral NN potential for \textit{ab initio} nuclear calculations. Furthermore, we provide new insights and we discuss open research directions proposed in Ref.~\cite{PhysRevA.103.042405}.

We note that, in the present study, the Hamiltonian for $K=0$ (diagonal potential) is tridiagonal due to the kinetic energy term. As a result, the entries for $K=1$ are used for $K=0$. On the other hand, the most general results for a Hamiltonian matrix of $2K+1$ bandwidth (that permits a diagonal matrix) are summarized in Table~\ref{tab:details_OHencodings} for the one-hot encoding and in Table~\ref{tab:details_BGencodings} for the binary and Gray encodings. 

\begin{figure}
\includegraphics[width=\columnwidth]{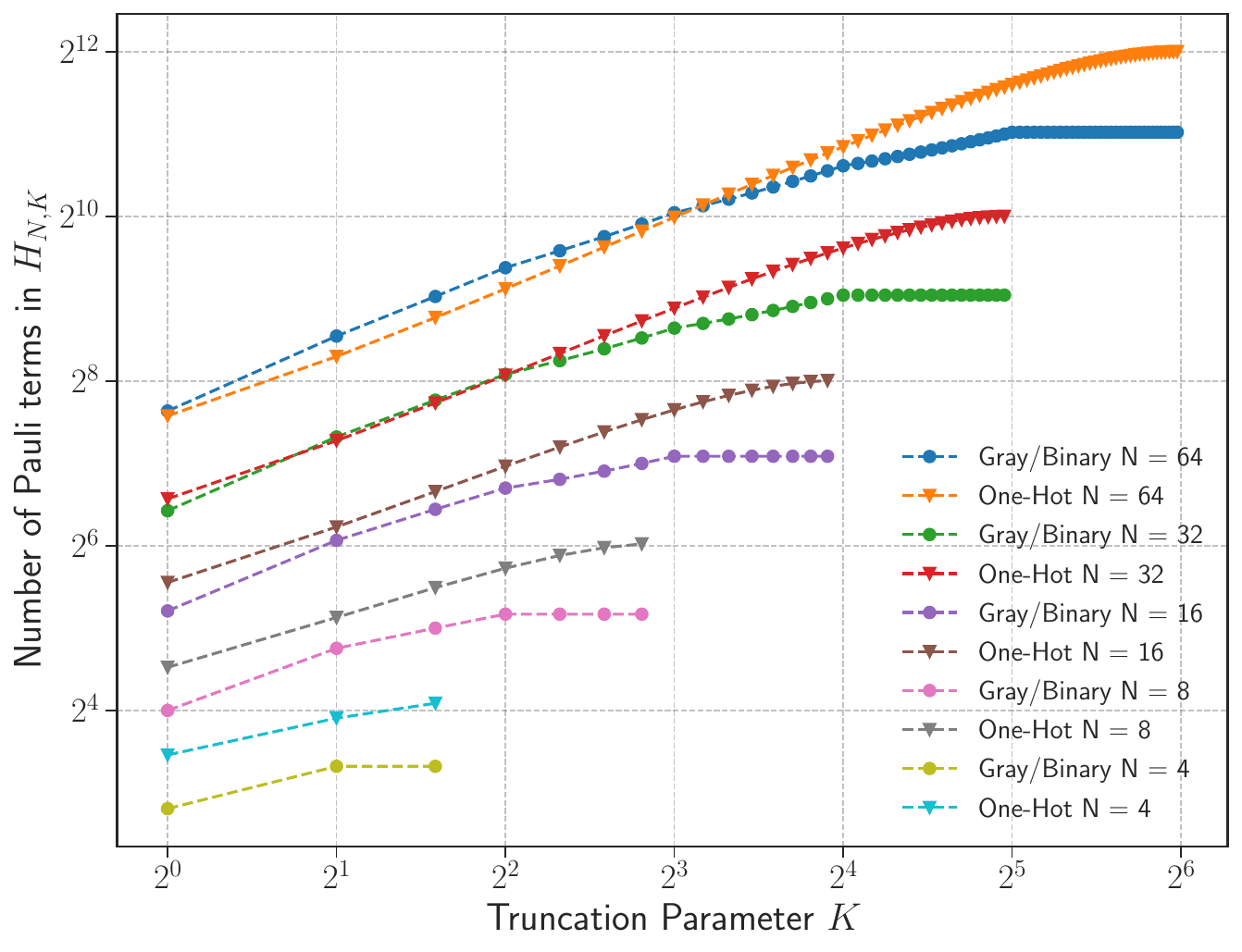}
    \caption{Number of Pauli terms for one-hot, binary, and Gray encodings for a general potential of the form $V_K(r) = \sum_{k=0}^K v_k r^{2k}$ and a general (tridiagonal to full) Hamiltonian matrix. Note that in the one-hot encoding, each term acts on $N$ qubits, while in the binary and Gray encodings, each term acts on $n= \operatorname{log}_2(N)$ qubits. }
    \label{fig:NumPauliH_NK}
\end{figure}

\subsection{Number of Pauli terms}

\subsubsection{One-hot encoding}

As seen in Eq.~\eqref{eq:HN_OneHot}, the encoded Hamiltonian for truncation parameters $N, K$ is given by 
\begin{multline}
    H_{N, K} = \frac{1}{2} \sum\limits_{m=0}^{N-1} \langle m \vert H \vert m \rangle (I_m - Z_m)\\
    + \frac{1}{2} \sum\limits_{m=0}^{N-1} \sum\limits_{k=1}^K \langle m+k \vert H \vert m \rangle (X_m X_{m+k} + Y_m Y_{m+k}),
\end{multline}
with the $k$ summation terminating when $m+k > N-1$. The above equation is true for $K \geq 1$; the case for $K=0$ is the same as $K=1$, since the kinetic energy term is tridiagonal. This results from the truncation of the matrix to size~$N \times N$.

In the first summation, the identity operator ($I^{\otimes N}$) and the $N$ individual $Z$ operators give a total of $N+1$ Pauli terms (e.g., for three qubits, the terms are $III$, $ZII$, $IZI$, and $IIZ$). For the second term, we split the $m$ summation into two parts: one with $m < N-K$ and $m \geq N-K$. In the first part, each $k$ summation goes from $1$ to $K$, contributing $2K$ terms. Thus, this part adds a total of $2K(N-K)$ terms. In the second part, the $k$ summation is truncated before it reaches $K$. This contributes a total of $2[(K-1) + (K-2) + \ldots + 1]$. Thus, the total of the number of Pauli terms is
\begin{equation}
    \left\vert H_{N, K} \right\vert = 1 + N + 2NK - K(K+1),
\end{equation}
as depicted in Fig.~\ref{fig:NumPauliH_NK}.

For the example of $N=4$ and $K=2$, we see that the number of terms is $15$, confirmed by \eqref{eq:example_OH_enc}. Note that each of these terms acts on $N$ qubits. We note that for $K>N/2$, we see that the number of Pauli terms is $\mathcal{O}(N^2)$.

\renewcommand{\arraystretch}{1.75}
\begin{table}[]
\centering
\begin{tabular}{|M{2cm}|M{6cm}|}
\hline
\multirow{2}{*}{$H_{N,K}$} & One-hot \\ 
\cline{2-2}
& $0 \leq K \leq N$ \\
\hline\hline

\multirow{2}{2cm}{\centering Qubits}  
& \multirow{2}{6cm}{\centering $N$} \\
& \\
\hline

\multirow{2}{2cm}{\centering Pauli Terms} 
& \multirow{2}{6cm}{\centering $1 + N + 2NK - K(K+1)$} \\
& \\
\hline

\multirow{2}{2cm}{\centering QC Sets} 
& \multirow{2}{6cm}{\centering $3$} \\
& \\
\hline

\multirow{2}{2cm}{\centering Ansatz} 
& \multirow{2}{6cm}{\centering $2$ one-qubit gates + $(2N-3)$ two-qubit gates} \\
& \\
\hline
\end{tabular}
\caption{Number of qubits, Pauli terms, and qubit-wise commuting sets for the one-hot code, for a general Hamiltonian matrix of $2K+1$ bandwidth. In the present study, the Hamiltonian for $K=0$ (diagonal potential) is tridiagonal because of the kinetic energy term, in which case the entries for $K=1$ should be used. More details can be found in Sec.~\ref{sec:discussions}.}
\label{tab:details_OHencodings}
\end{table}
\renewcommand{\arraystretch}{2.05}

\subsubsection{Binary and Gray encoding}

As seen in Eqs.~\eqref{eq:HN_Binary} and \eqref{eq:HN_Gray}, the encoded Hamiltonian for truncation parameters $N, K$ is given by
\begin{multline}
    H_{N, K} = \sum\limits_{m=0}^{N-1} \langle m \vert H \vert m \rangle L^0_m \\
    + \sum\limits_{m=0}^{N-1} \sum\limits_{k=1}^{K} \langle m+k \vert H \vert m \rangle (L^k_m + (L^k_m)^\dagger),
\end{multline}
where $L^i_j = B^i_j$ for the binary encoding and $L^i_j = G^i_j$ for the Gray encoding. The above equation is true for $K \geq 1$; the case for $K=0$ is the same as $K=1$, since the kinetic energy term is tridiagonal.

The number of Pauli terms in the Hamiltonian for both encodings is 
\begin{equation}
    \vert H(N, K) \vert 
        =  \begin{cases} 
            \operatorname{d}(n, 1) + n 2^{n-1} & K = 0 \\
            \operatorname{d}(n, K) + 2^{n-1} \sum\limits_{k=1}^K  \overline{n}_k & 1 \leq K \leq 2^{n-1} \\
            2^{n-1}(1+2^n) & K>2^{n-1},
        \end{cases}
\end{equation} 
as depicted in Fig.~\ref{fig:NumPauliH_NK}, where 
\begin{equation}
\label{eq:numDiagTerms}
    \operatorname{d}(n, K) = \sum_{m=0}^K \binom{n}{m},
\end{equation}
and $\overline{n}_k \coloneqq n - \lceil \operatorname{log}_2(k) \rceil $.
The proof for this can be found in Lemma~\ref{lem:NumPauliGrayBinary}. For the example of $N=4, K=2$, we see that the number of terms is $10$, confirmed by Eqs.~\eqref{eq:example_bin_enc} and \eqref{eq:example_gray_enc}. For $K>N/2$, we see that the number of Pauli terms is $\mathcal{O}(N^2)$.

To summarize, we show that the Gray and binary codes have the same number of Pauli terms for all $N$ and $K$. Furthermore, the number of Pauli terms saturates above $K = N/2$. The one-hot encoding does not saturate, and at $K = 2^{n-1}$, the one-hot encoding always has more Pauli terms than the Gray or binary encoding. For the general potential in consideration \eqref{eq:Vgen}, Fig.~\ref{fig:NumPauliH_NK} plots the number of terms as a function of $K$ for different $N$ values.

\renewcommand{\arraystretch}{2.05}
\begin{table*}[]
\begin{tabular}{|M{2cm}|M{4cm}|M{2.5cm}|M{4cm}|M{2.5cm}|}
\hline
\multirow{2}{*}{$H_{N, K}$} & \multicolumn{2}{c|}{Binary} & \multicolumn{2}{c|}{Gray} \\ \cline{2-5} 
& $K \leq N/2$ & $K > N/2$ & $K \leq N/2$ & $K > N/2$ \\ \hline\hline

\multirow[c]{2}{=}[-0.05cm]{\centering Qubits} 
& \multicolumn{2}{M{6.5cm}|}{\multirow[c]{2}{6.5cm}[-0.05cm]{\centering $\operatorname{log}_2(N)$}} 
& \multicolumn{2}{M{6.5cm}|}{\multirow[c]{2}{6.5cm}[-0.05cm]{\centering $\operatorname{log}_2(N)$}} \\ 
& \multicolumn{2}{c|}{} & \multicolumn{2}{c|}{} \\
\hline

\multirow[c]{2}{=}[-0.05cm]{\centering Pauli Terms} 
& \multirow[c]{2}{4cm}[-0.00cm]{\centering $\operatorname{d}(n, K) + 2^{n-1} \sum\limits_{k=1}^K  \overline{n}_k$} 
& \multirow[c]{2}{=}[-0.05cm]{\centering $2^{n-1}(1+2^n)$} 
& \multirow[c]{2}{4cm}[-0.00cm]{\centering $\operatorname{d}(n, K) + 2^{n-1} \sum\limits_{k=1}^K  \overline{n}_k$} 
& \multirow[c]{2}{=}[-0.05cm]{\centering $2^{n-1}(1+2^n)$} \\
& & & & \\
\hline

\multirow[c]{2}{=}[-0.05cm]{\centering QC Sets} 
& \multirow[c]{2}{=}[-0.00cm]{\centering $1 + \sum\limits_{k=1}^K 2^{\vert b(\bar{k}) \vert}[ 1- 2^{-\overline{n}_k}]$} 
& \multirow[c]{2}{=}[-0.05cm]{\centering $\frac{1}{2} \left( 1 + 3^n \right)$} 
& \multirow[c]{2}{=}[-0.00cm]{\centering $1 + \sum\limits_{k=1}^K \overline{n}_k 2^{\vert g(k-1) \vert}$} 
& \multirow[c]{2}{=}[-0.05cm]{\centering $\frac{1}{2} \left( 1 + 3^n \right)$} \\
& & & & \\
\hline

\multirow[c]{2}{=}[-0.05cm]{\centering DGC Sets} 
& \multirow[c]{2}{=}[-0.00cm]{\centering $1 + \sum\limits_{k=1}^{K} \overline{n}_k$} 
& \multirow[c]{2}{=}[-0.00cm]{\centering $2^n$} 
& \multirow[c]{2}{=}[-0.00cm]{\centering $1 + \sum\limits_{k=1}^{K} \overline{n}_k$} 
& \multirow[c]{2}{=}[-0.00cm]{\centering $2^n$} \\ 
& & & & \\
\hline

\multirow[c]{2}{=}[-0.05cm]{\centering 2QG in Diag. Unitary} 
& \multirow[c]{2}{=}[-0.00cm]{\centering $\frac{1}{2}\sum\limits_{k=1}^{K} \overline{n}_k \left[ 2 \vert b(\overline{k}) \vert - 1 - \overline{n}_k \right]$} 
& \multirow[c]{2}{=}[-0.00cm]{\centering $1 + 2^{n-1}(n-2)$} 
& \multirow[c]{2}{=}[-0.00cm]{\centering $\sum\limits_{k=1}^K \overline{n}_k \vert g_{k-1} \vert $} 
& \multirow[c]{2}{=}[-0.00cm]{\centering $1 + 2^{n-1}(n-2)$} \\ 
& & & &\\
\hline

\multirow[c]{2}{=}[-0.05cm]{\centering Ansatz} & \multicolumn{2}{M{6.5cm}|}{\multirow[c]{2}{6.5cm}[-0.05cm]{\centering $n L$ one-qubit gates + $(n-1)L$ two-qubit gates}} & \multicolumn{2}{M{6.5cm}|}{\multirow[c]{2}{6.5cm}[-0.05cm]{\centering $n L$ one-qubit gates + $(n-1)L$ two-qubit gates}}   \\ 
& \multicolumn{2}{c|}{} & \multicolumn{2}{c|}{} \\
\hline

\end{tabular}
\caption{Number of qubit, Pauli terms, qubit-wise commuting sets and distance-grouped commuting sets for both the binary and Gray code, for a general Hamiltonian matrix of $2K+1$ bandwidth.  Here we use $N = 2^n$ and $\overline{n}_k \coloneqq n - \lceil \operatorname{log}_2(k) \rceil$. The quantity $\operatorname{d}(n, K)$ is defined in \eqref{eq:numDiagTerms}. For the ansatz, $L$ refers to the number of layers, which is chosen large enough beforehand. In the present study, the Hamiltonian for $K=0$ (diagonal potential) is tridiagonal because of the  kinetic energy term, in which case the entries for $K=1$ should be used. Notably, for $K>N/2$, there is a saturation in all quantities, implying that larger diagonal width (better approximation) can be handled without an increase in the complexity of the problem. More details can be found in Sec.~\ref{sec:discussions}.}
\label{tab:details_BGencodings}
\end{table*}
\renewcommand{\arraystretch}{1.2}

\subsection{Number of commuting sets}
\label{sec:numCG}
To measure any operator provided as a linear combination of Pauli strings, 
\begin{equation}
    O = \sum\limits_{i=1}^{N_P} a_i P_i,
\end{equation}
we can measure individual Pauli terms and sum the results, because
\begin{equation}
    \operatorname{Tr}[O\rho] = \sum\limits_{i=1}^{N_P} a_i \operatorname{Tr}[P_i \rho].
\end{equation}
Thus, for an operator consisting of $N_P$ terms, we estimate the measurement statistics for each of the $N_P$ terms. 

Most quantum computers allow for measurements in the computational basis alone, i.e., in the Pauli-$Z$ basis. If the Pauli term does not have Pauli-$Z$ on a particular qubit, the qubit must first be rotated before a computational basis measurement. For example, to measure $X \otimes Z$ on a two-qubit state $\rho$, 
\begin{align}
    \operatorname{Tr}[(X \otimes Z) \rho] &= \operatorname{Tr}[(H \otimes I) (Z \otimes Z)(H \otimes I) \rho] \notag \\
    &= \operatorname{Tr}[(Z \otimes Z)(H \otimes I) \rho (H \otimes I)].
\end{align}
Thus, measuring $X \otimes Z$ is equivalent to applying Hadamard on the first qubit and then measuring in the computational basis.

A method to reduce the number of measurements is to measure commuting observables in their common eigenbasis. The idea of reducing the measurement complexity has led to a large number of advances \cite{VYI20, YVI20, Huggins_2021}. Two commuting Pauli strings are guaranteed to have a common eigenbasis. Consider two Pauli strings~$A$ and $B$ such that $[A, B] = 0$, and denote the common eigenbasis as $\{\ket{\psi_i}\}_i$. Thus,
\begin{equation}
    A = \sum\limits_i a_i \outerprod{\psi_i}{\psi_i} , \qquad 
    B = \sum\limits_i b_i \outerprod{\psi_i}{\psi_i}.
\end{equation}
The elements $\ket{\psi_i}$ are related to the computational basis elements $\ket{i}$ by a unitary transformation. More concretely, 
\begin{equation}
    \ket{\psi_i} = U \ket{i}.
\end{equation}
Thus, to measure $A$ and $B$ simultaneously, we first apply the unitary $U^\dagger$ and then measure in the computational basis:
\begin{align}
    \operatorname{Tr}[A\rho] &= \sum\limits_i a_i \operatorname{Tr}[\outerprod{\psi_i}{\psi_i}\rho] \notag \\
    &= \sum\limits_i a_i \operatorname{Tr}[U\outerprod{i}{i}U^\dagger \rho] \notag \\
    &= \sum\limits_i a_i \operatorname{Tr}[\outerprod{i}{i} U^\dagger \rho U], 
\end{align}
and similarly for $B$:
\begin{equation}
\operatorname{Tr}[B\rho]=\sum\limits_i b_i \operatorname{Tr}[\outerprod{i}{i} U^\dagger \rho U].    
\end{equation}
Thus, we can recreate the measurement statistics of both $A$ and $B$ by measuring the state $U^\dagger \rho U$ in the computational basis. To measure multiple observables, each element needs to pair-wise commute with all other elements. However, splitting a set of observables into commuting sets and finding the common eigenbasis is non-trivial. Furthermore, finding the $U$ that rotates the computational basis into this common eigenbasis is also non-trivial. 

We now look at two alternate simpler strategies. The first strategy is to look at qubit-wise commutativity~(QC). Two Pauli strings qubit-wise commute if  the corresponding operators acting on each qubit commute. For example, $XIZ$ and $XZI$ qubit-wise commute since $[X, X] = [I, Z] = [Z, I] = 0$. Qubit-wise commutativity is a sufficient, but not necessary, condition for general commutativity. For example, $XX$ and $YY$ commute but do not qubit-wise commute. Thus, this strategy leads to a sub-optimal grouping of Pauli strings. However, the grouping itself can be done efficiently, and the unitary that rotates the computational basis into the common eigenbasis is always a tensor product of individual Pauli operators.

In addition, in this paper, we introduce a new strategy that is based on grouping Pauli strings in terms of the distance operators defined in Appendix~\ref{app:LemDef}. We refer to this grouping  as distance-grouped commutativity (DGC). While the precise structure is not relevant here, it leads to a more optimal set of Pauli operators as compared to the qubit-wise scheme. However, the diagonalizing unitary, while simple conceptually, is no longer a tensor-product of individual Pauli operators (see Appendix~\ref{app:LemDef} for further details). 

In this section, we analyze the number of QC and DGC sets that the Pauli terms can be split into, for all encodings. The number of sets as a function of the hyperparameter $K$ and for various model-space sizes $N$ is shown in Figs.~\ref{fig:NumCommutingH_NK}-\ref{fig:CompNumCommutingH_NK}.

\begin{figure}
\includegraphics[width=\columnwidth]{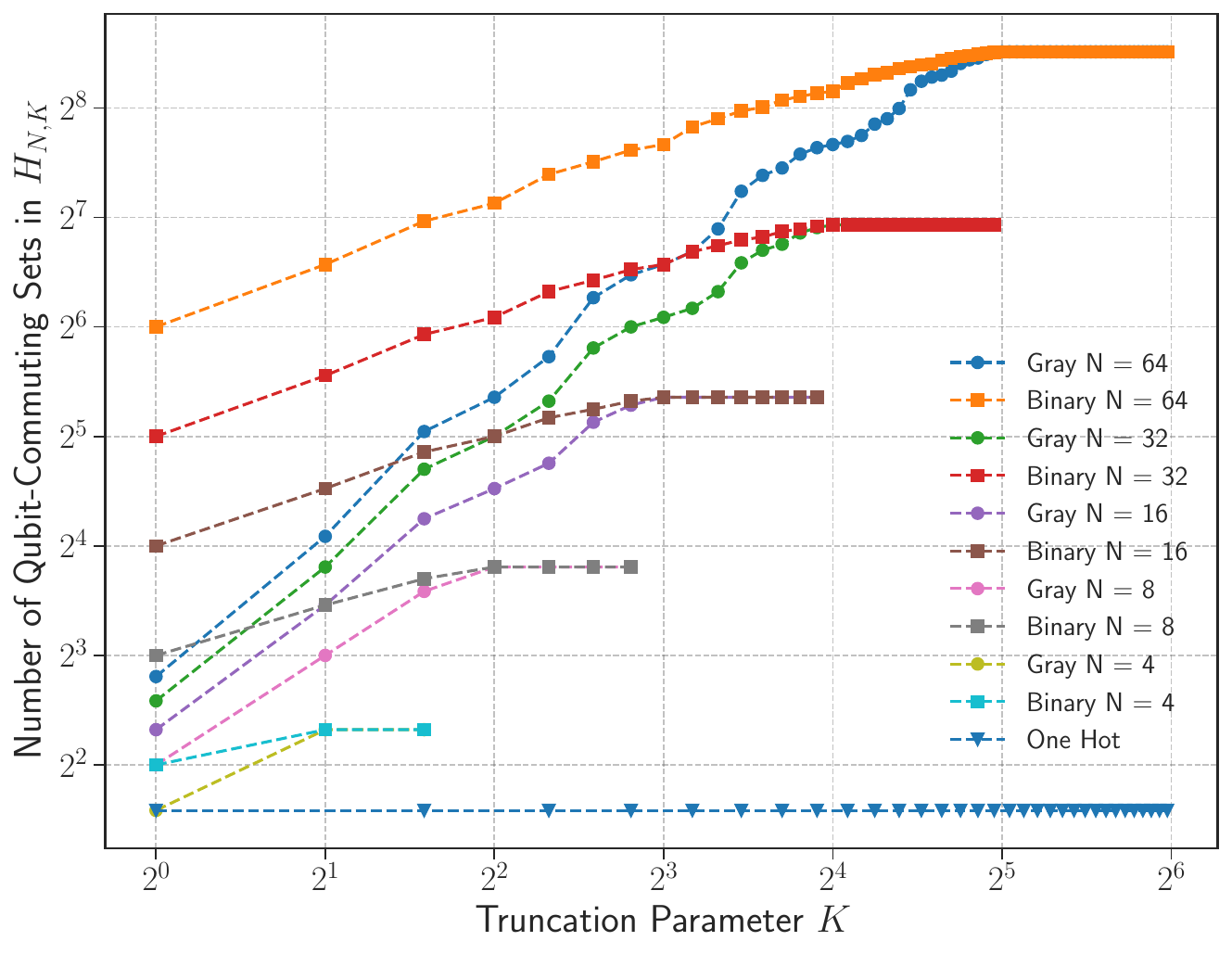}
    \caption{Number of qubit-wise commuting Pauli sets terms for one-hot, binary, and Gray encodings for a general potential of the form $V_K(r) = \sum_{k=0}^K v_k r^{2k}$ and a general (tridiagonal to full) Hamiltonian matrix. }
    \label{fig:NumCommutingH_NK}
\end{figure}

\begin{figure}
\includegraphics[width=\columnwidth]{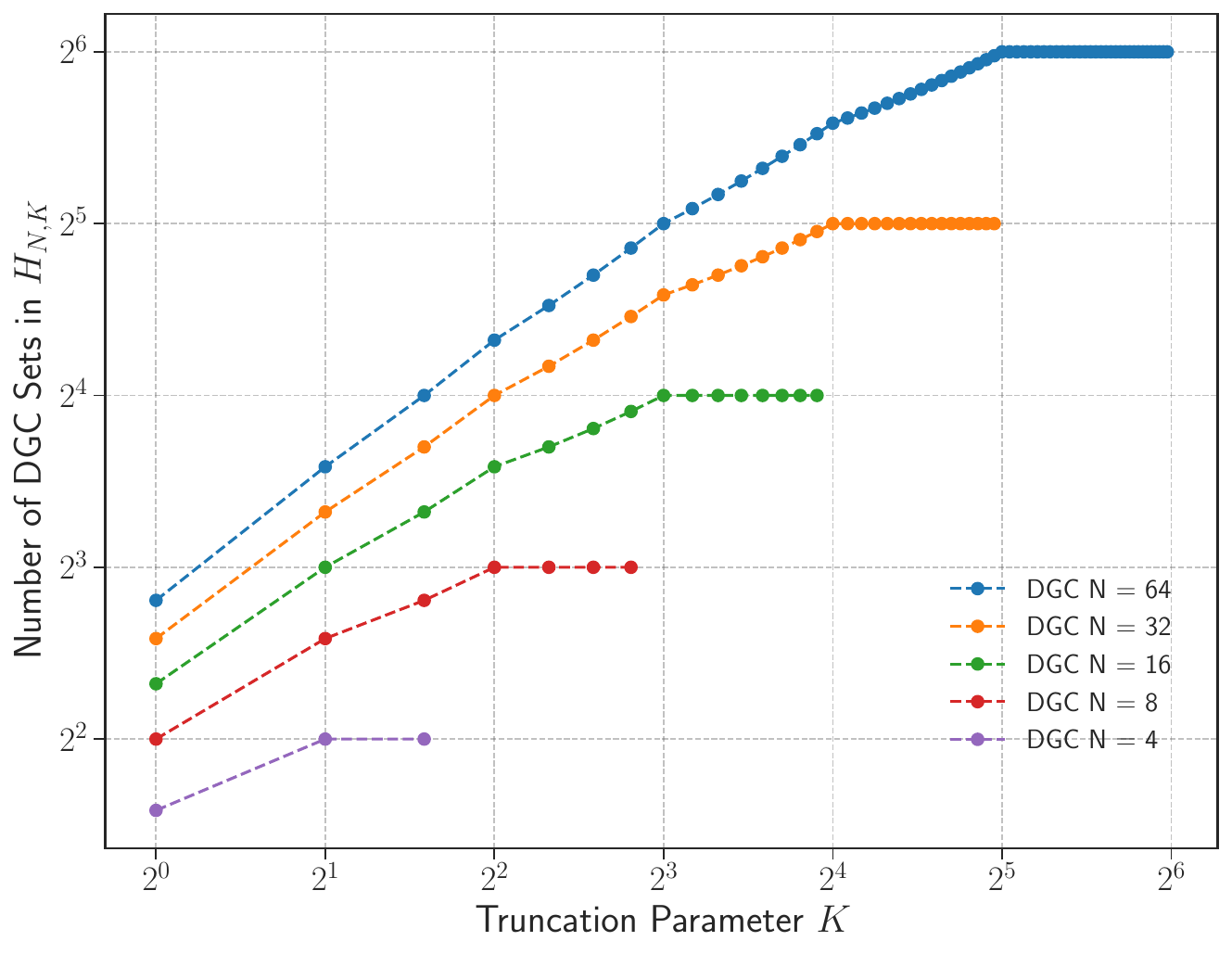}
    \caption{Number of DGC sets for the binary, and Gray encodings for a general potential of the form $V_K(r) = \sum_{k=0}^K v_k r^{2k}$ and a general (tridiagonal to full) Hamiltonian matrix.}
    \label{fig:NumDGCommutingH_NK}
\end{figure}

\begin{figure}
\includegraphics[width=\columnwidth]{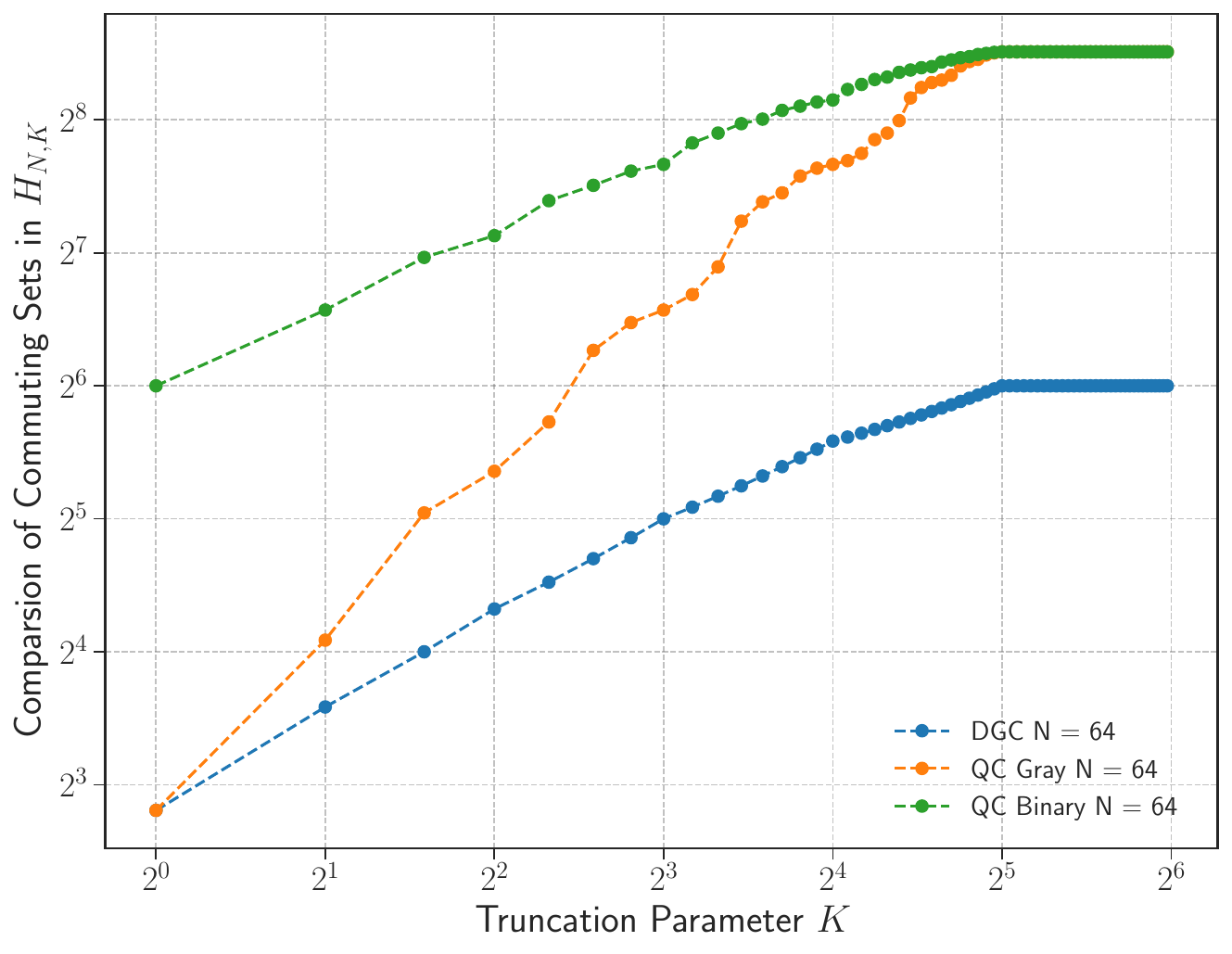}
    \caption{Comparing the two commutativity schemes for a general potential of the form $V_K(r) = \sum_{k=0}^K v_k r^{2k}$ and a general (tridiagonal to full) Hamiltonian matrix. We note that the DGC leads to fewer terms to be measured, at the cost of a more complex rotation gate into a common eigenbasis.}
\label{fig:CompNumCommutingH_NK}
\end{figure}

\subsubsection{One-hot encoding: Qubit-wise commutativity}

As seen in Eq.~\eqref{eq:HN_OneHot}, the first summation results from the number operators. These operators all qubit-wise commute, and their common eigenbasis is the computational basis.
Thus, the statistics of these operators can be inferred from a measurement of $Z^{\otimes N}$. Similarly, the ladder operators can be split into two sets -- one with only $X$ and $I$, and another with only $Y$ and $I$. Within each of these sets, all operators qubit-wise commute. Thus, the measurement statistics of all the operators can be inferred from the measurement of $\{X^{\otimes N}, Y^{\otimes N}\}$. Thus, the number of qubit-wise commuting sets is three.

\subsubsection{Binary encoding: Qubit-wise commutativity}

As stated in Lemma~\ref{lem:NumQCGBinary} in Appendix~\ref{app:LemDef}, the number of qubit-wise commuting sets in a binary encoding is given by
\begin{equation}
    \vert H(N, K) \vert_C
    =  \begin{cases} 
            2^n & K = 0 \\
            1 + \sum\limits_{k=1}^K 2^{\vert b_{\bar{k}} \vert}\left[ 1- 2^{-\overline{n}_k} \right] & 1 \leq K \leq 2^{n-1} \\
            \frac{1}{2} \left( 1 + 3^n \right) & K > 2^{n-1},
        \end{cases}
\end{equation}
where $\vert w \vert$ is the Hamming weight of the string $w$, the variable $\bar{k}$ is defined as $2^n - k$, $\overline{n}_k \coloneqq n - \lceil \operatorname{log}_2(k) \rceil $, and $b_i$ denotes the binary representation of $i$. We first notice that the number of qubit-wise commuting sets for $K > N/2$ is $\mathcal{O}(N^{\operatorname{log}_2(3)})$. 

For the example of $N=4, K=2$ from Eq.~\eqref{eq:example_bin_enc}, qubit-wise commutativity leads to the following sets:
\begin{align}
    &\{II, ZI, IZ, ZZ\}, \notag \\
    &\{IX, ZX\}, \notag \\
    &\{XI, XZ\}, \notag \\
    &\{XX\}, \notag \\
    &\{YY\}.
\end{align}

\subsubsection{Gray encoding: Qubit commutativity}

As stated in Lemma~\ref{lem:NumQCGGray} in Appendix~\ref{app:LemDef}, the number of qubit-wise commuting sets in a Gray encoding is given by
\begin{equation}
    \vert H(N, K) \vert_C
    =  \begin{cases} 
        1 + n & K = 0 \\
        1 + \sum\limits_{k=1}^K 2^{\vert g_{k-1} \vert} \overline{n}_k & 1 \leq K \leq 2^{n-1} \\
        \frac{1}{2} \left( 1 + 3^n \right) & K>2^{n-1},
        \end{cases}
\end{equation}
where $\vert w \vert$ is the Hamming weight of the string $w$, $\overline{n}_k \coloneqq n - \lceil \operatorname{log}_2(k) \rceil $, and $g_k$ denotes the $k$th entry in the Gray basis. We first notice that the number of sets for $K > N/2$ is $\mathcal{O}(N^{\operatorname{log}_2(3)})$. The set of Pauli strings for $K>N/2$ being exactly the same as the binary encoding, leads to the number of commuting sets being equal.
However, we note that for $K<N/2$ the Gray encoding leads to a lower number of qubit-wise commuting sets than the binary encoding. Indeed, the ordering of the computational basis elements in the Gray encoding favors low-weight Pauli strings for lower $K$, leading to a lower number of qubit-wise commuting sets.

For the example of $N=4, K=2$ from Eq.~\eqref{eq:example_gray_enc}, qubit-wise commutativity leads to the following sets:
\begin{align}
    &\{II, ZI, IZ, ZZ\}, \notag \\
    &\{IX, ZX\}, \notag \\
    &\{XI, XZ\}, \notag \\
    &\{XX\}, \notag \\
    &\{YY\}.
\end{align}
Since this example is for the case of $K=N/2$, as expected, the number of QC sets is the same for both binary and Gray encodings.

\subsubsection{Binary/Gray encoding: Distance-grouped commutativity}
As seen in Lemma~\ref{lem:NumDGCBinGray}, the number of distance-grouped commuting sets is given by
\begin{equation}
    \vert H(N, K) \vert_C 
    =  \begin{cases} 
            1+n & K = 0 \\
            1 + \sum\limits_{k=1}^{K} \overline{n}_k & 1 \leq  K \leq 2^{n-1} \\
            2^n & K > 2^{n-1},
        \end{cases}
\end{equation}
where $\overline{n}_k \coloneqq n - \lceil \operatorname{log}_2(k) \rceil $. We note that the number of DGC sets for $K > N/2$ is $\mathcal{O}(N)$. However, each measurement requires a more complex unitary transformation to a common eigenbasis as compared to the qubit-commutative sets. 

For the example of $N=4, K=2$ from Eqs.~\eqref{eq:example_gray_enc} and \eqref{eq:example_bin_enc}, distance-grouped commutativity leads to the following sets:
\begin{align}
    &\{II, ZI, IZ, ZZ\}, \notag \\
    &\{IX, ZX\}, \notag \\
    &\{XI, XZ\}, \notag \\
    &\{XX, YY\}.
\end{align}

While the number of DGC sets for the Gray and binary encoding are exactly the same for all $N$ and $K$, the complexity in the measurement procedure is not the same. We now quantify the complexity of the measurement scheme based on the number of two-qubit gates in the diagonalizing unitaries for both encodings.

As stated in Lemma~\ref{lem:2QG_Binary} in Appendix~\ref{app:LemDef}, the number of two-qubit gates in the diagonalizing unitary using the DGC scheme for the binary code is given by
\begin{align}
    &\vert H(N, K) \vert_{DU} \notag \\
    &=  \begin{cases} 
            0.5n(n-1) & K = 0 \\
            0.5\sum\limits_{k=1}^K \overline{n}_k \left[ 2 \vert b(\overline{k}) \vert - 1 - \overline{n}_k \right] & 1 \leq K \leq 2^{n-1} \\
            1 + 2^{n-1}(n-2) & K > 2^{n-1},
        \end{cases}
\end{align}
where $\overline{n}_k \coloneqq n - \lceil \operatorname{log}_2(k) \rceil $.

Similarly, from Lemma~\ref{lem:2QG_Gray} in Appendix~\ref{app:LemDef}, the number of two-qubit gates in the diagonalizing unitary using the DGC scheme for the Gray code is given by
\begin{equation}
    \vert H(N, K) \vert_{DU} = 
        \begin{cases} 
        0 & K = 0 \\
        \sum\limits_{k=1}^K \overline{n}_k g_{k-1} & 1 \leq K \leq 2^{n-1} \\
        1 + 2^{n-1}(n-2) & K > 2^{n-1},
        \end{cases}
\end{equation}
where $\overline{n}_k \coloneqq n - \lceil \operatorname{log}_2(k) \rceil $, $\bar{k}$ is defined as $2^n - k$, and $\vert b(k) \vert $ is Hamming weight of the binary representation of~$k$. A comparison of the two encoding schemes is given in Fig.~\ref{fig:CompNum2QG_H_NK}.

\begin{figure}
\includegraphics[width=\columnwidth]{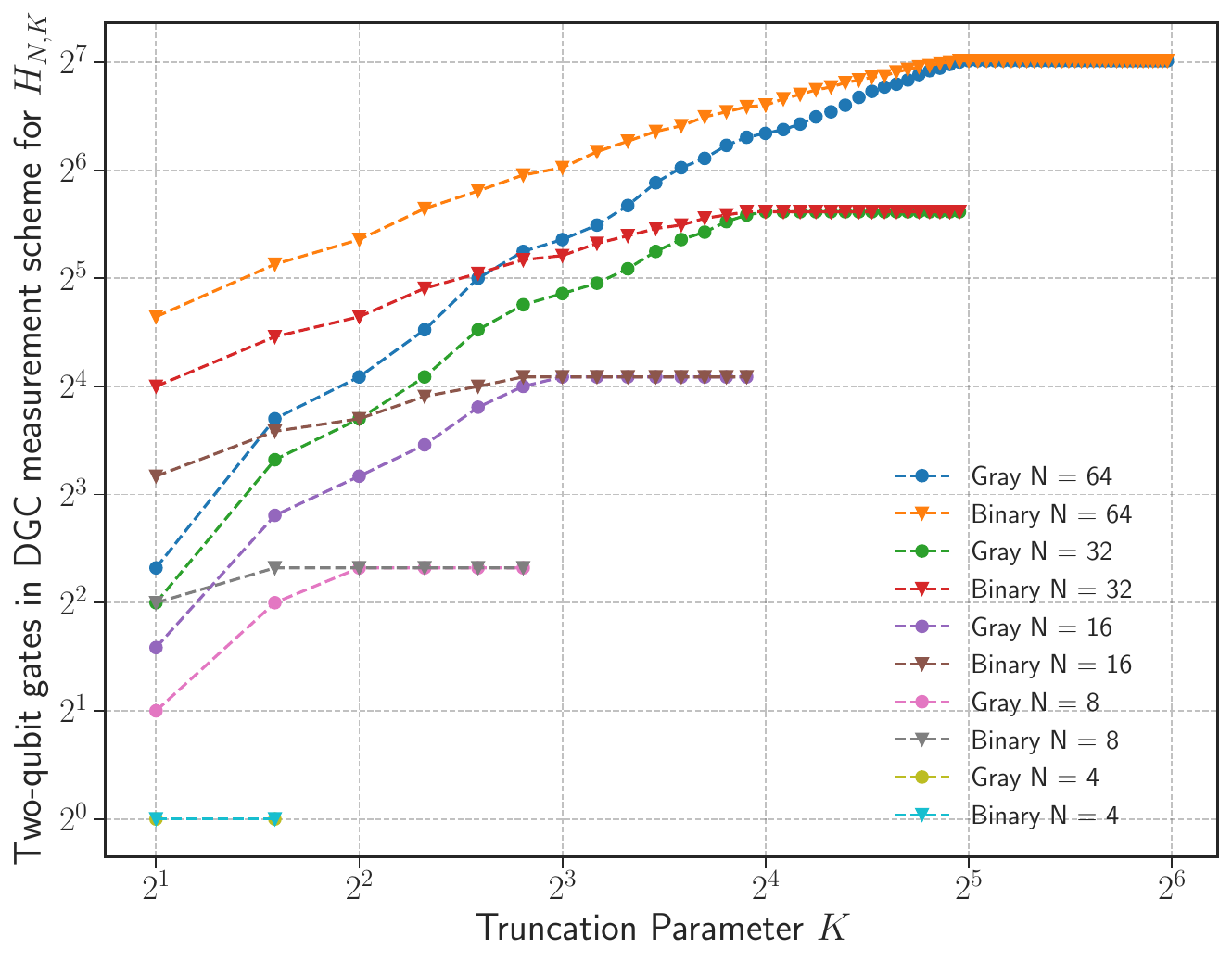}
    \caption{Comparing the number of two-qubit gates needed in the DGC measurement scheme for the binary and Gray codes for a general potential of the form $V_K(r) = \sum_{k=0}^K v_k r^{2k}$ and a general (tridiagonal to full) Hamiltonian matrix.}
    \label{fig:CompNum2QG_H_NK}
\end{figure}

To summarize, compared to the binary encoding, the Gray encoding has the same number of Pauli strings, and for a bandwidth of up to $N$, it has a lower number of QC sets, the same number of DGC sets but a less complex diagonalizing unitary. The advantage of the Gray encoding over other encodings comes from the fact that every basis entry differs from its neighbours on a single bit. This, coupled with the fact that the Hamiltonian must be Hermitian, guarantees $G^1_m + (G^1_m)^\dagger$ to be Pauli-$X$ on the single flipped qubit, and single-qubit projectors on the remaining qubits. This leads to the same number of Pauli strings as compared to the binary encoding, a lower number of QC sets, and the same number of DGC sets but with a lower number of two-qubit gates for low bandwidths. 

For example, using the Gray encoding for $N=8$, the step-$1$ ladder operator $G^1_1$ along with its Hermitian conjugate consists of the following Pauli strings:
\begin{align}
    G^1_1 + (G^1_1)^\dagger &= P^1_1 \otimes X_1 \otimes P^0_2 \notag \\
    &= IXI + IXZ - ZXI - ZXZ,
\end{align}
where the qubit numbers are omitted for clarity. The measurement statistics of all these Pauli strings, using both the QC and DGC scheme, can be inferred from the measurement statistics of $ZXZ$.

In comparison, the binary encoding for the same step-$1$ ladder operator $B^1_1$ along with its Hermitian conjugate consists of the following Pauli strings:
\begin{align}
    &B^1_1 + (B^1_1)^\dagger \notag \\
    &= P^1_0 \otimes \left[(\outerprod{1}{0}_1 \otimes \outerprod{0}{1}_2) + (\outerprod{0}{1}_1 \otimes \outerprod{1}{0}_2)\right], \notag \\
    &= IXX + IYY - ZXX - ZYY,
\end{align}
where the qubit numbers are omitted for clarity. In the QC scheme, we need to measure $ZXX$ and $ZYY$ to infer the measurement statistics of all the above operators. In the DGC scheme, we need to measure $Z_0 \otimes (U^{\operatorname{GHZ}}_{1, 2})^\dagger (Z_1 \otimes Z_2) U^{\operatorname{GHZ}}_{1, 2}$, where $U^{\operatorname{GHZ}}$ is defined in Lemma~\ref{lem:diagUnitaryDGC} in Appendix~\ref{app:LemDef}.

\begin{remark}
    As seen in Table~\ref{tab:details_OHencodings}, the number of commuting sets for the one-hot encoding is always three, independent of system size. On the other hand, for the binary and Gray encodings detailed in Table~\ref{tab:details_BGencodings}, we find that the number of qubit-wise commuting sets and distance-grouped commuting sets depends on $N, K$ and is strictly greater than three. While this gives the impression that the one-hot encoding leads to simpler measurements, we note that these measurements are on $N$ qubits, as opposed to $\operatorname{log}_2(N)$ qubits.
\end{remark}

\section{Quantum simulations and discussions \label{sec:discussions}}

In this section, we perform quantum simulations using the Gray and one-hot encodings, with and without noise effects, and we discuss the results. The aim is to illustrate if nuclear problems with band-diagonal Hamiltonian matrices can be solved on current quantum devices and to compare the two types of encodings. Specifically, we study the lowest $\half ^+$ bound state in the n+$^{10}$C, n+$^{12}$C, and n+$^{14}$C, using an exponential potential given in Eq.~\eqref{eq:Vexp}. We also perform quantum simulations for the lowest $\half^+$ orbit in the \textit{ab initio} deduced n-$\alpha$ local potential\footnote{We note that an optical potential, such as the one derived in the Green's function approach \cite{BurrowsL21}, provides a mean field that, for n+$\alpha$, yields a negative-energy $\half^+$ orbit, occupied by the protons and neutron of the $\alpha$ particle, and is associated with the physics of a hole in $^4$He, that is, with the ground state of  $^3$He. The next $\half^+$ eigensolution (in increasing energy) corresponds to a scattering state in $^5$He.}.  

\begin{figure}
    \includegraphics[width=0.75\columnwidth]{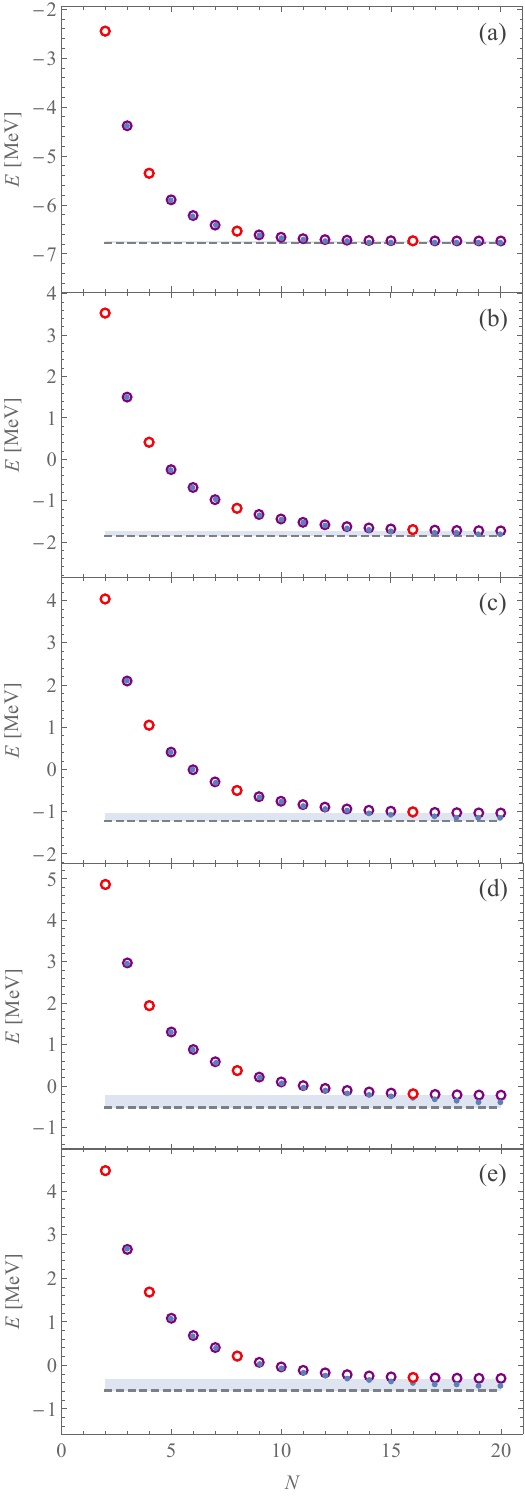}
    \caption{Energy of the lowest $\half^+$ state in (a) n+$^{10}$C, (b) n+$^{12}$C, (c) n+$^{14}$C, (d) n+$^{16}$C, and (e) n+$^{18}$C,  vs. the model-space size $N$ for $V_{\rm E}$ \eqref{eq:Vexp} (blue filled) and its $K=3$ approximation (purple open, for $N$ qubits; red open, for $n=\log_2 N$ qubits), as compared to experiment (dashed line) \cite{KELLEY201288C11,AJZENBERGSELOVE19911,exptC2023ENDSF}. Shaded area provides the error associated with the $K=3$ approximation (the small error band for n+$^{10}$C is not visible on the plot).}
    \label{fig:nC_EvsN}
\end{figure}

\subsection{Exponential potentials for neutron-Carbon dynamics}

For Carbon targets, we use the exponential potential of Eq.~\eqref{eq:Vexp} with parameters given in Table~\ref{tab:Cdata}. These parameterizations yield effective interactions that closely reproduce the experimental energy of the lowest $\half^+$ state in the composite n+C system (cf. the calculated energy~$E_{\rm th}$ and experimental energy $E_{\rm expt}$ in Table~\ref{tab:Cdata}). 

\begin{table}[th]
\centering
\begin{tabular}{c|c|c|c|c|c|c|}
    & $A$ &$E_{\rm expt}$ [MeV]& $E_{\rm th}$ [MeV] & $E_{\rm th}^{K=3}$ [MeV] &$\frac{1}{\hbar \omega}V_0$& $c^{-\half}$\\
    \hline
n+$^{10}$C&	$10$	&	$-6.78$	&	$-6.78$	&	$-6.74$	&	$	-0.650	$ & $	5.43	$	\\
n+$^{12}$C&	$12$	&	$-1.86$	&	$-1.86$	&	$-1.74$	&	$	-0.283	$ & $	5.35	$	\\
n+$^{14}$C&	$14$	&	$-1.22$	&	$-1.22$	&	$-1.04$	&	$	-0.242	$ & $	5.0	$	\\
n+$^{16}$C&	$16$	&	$-0.52$	&	$-0.52$	&	$-0.23$	&	$	-0.175	$ & $	4.7	$	\\
n+$^{18}$C&	$18$	&	$-0.58$	&	$-0.59$	&	$-0.30$	&	$	-0.192	$ & $	4.6	$	\\\end{tabular}
    \caption{Experimental energy for the lowest $\half^+$ state for each neutron-Carbon system, with the corresponding theoretical energy $E_{\rm th}$ of the exponential potential $V_{\rm E}$ \eqref{eq:Vexp} and its $K=3$ approximation $E_{\rm th}^{K=3}$. For each case, the parameters ($V_0$ and $c$) of the Hamiltonian are shown, with 
    $\hbar \omega =\frac{41}{(A+1)^{1/3}}$ MeV, where $A$ is  the mass of the target.}
    \label{tab:Cdata}
\end{table}
 
For Gaussian-like exponential potentials, the truncation parameter $K$ needs to be odd, since for even $K$ values, the potential curves downward with increasing distance and leads to spurious bound states.
Furthermore, the $K=3$ case provides a reasonable approximation, as illustrated by the $K=3$ error band in Fig.~\ref{fig:nC_EvsN}. This error is smaller for the lighter isotopes, where the $\half^+$ state is more deeply bound (larger binding energy) compared to the one for n+$^{16}$C and n+$^{18}$C  (see also $E_{\rm th}^{K=3}$ in Table~\ref{tab:Cdata}).
It is important to note that weakly bound states converge at a slower rate compared to states with larger binding energies. As shown in Fig.~\ref{fig:nC_EvsN}, the $N=8$ energy (or 3 qubits for the Gray encoding) closely agrees with the exact value for n+$^{10}$C, but larger model spaces are needed for the other systems. The model-space size requirements become even larger for resonances. This suggests that the use of the Gray encoding that can reach large model spaces with fewer qubits becomes advantageous for weakly bound states and resonances. To illustrate the advantages of the Gray encoding compared to the one-hot encoding, we perform quantum simulations for n+$^{14}$C using both encodings, which is detailed below.

\subsection{\textit{Ab initio} deduced local optical potential for n\texorpdfstring{$+\alpha$}{+alpha}}
\label{sec:ab_initio_potentials}

The n+$^4$He optical potential is calculated using the \textit{ab initio} symmetry-adapted no-core shell model with Green's function approach (SA-NCSM/GF) \cite{BurrowsL21}, with the chiral NNLO$_{\rm opt}$ nucleon-nucleon potential \cite{Ekstrom13}, at a center-of-mass energy  $E=0$ MeV, and for $15$ HO shells with $\hbar\omega=12$ and $16$ MeV  (see Fig.~\ref{fig:choosing-truncation-k-local}a). The choice for the basis parameters, the total number of HO shells and $\hbar\omega$, is based on a systematic study of large-scale calculations reported in Ref.~\cite{BurrowsL21}; namely, Ref.~\cite{BurrowsL21} has shown that for these parameters, n+$\alpha$ phase shifts are converged with respect to the model-space size (see Fig.~8 of Ref.~\cite{BurrowsL21}), leading to a parameter-free estimate for the total cross section that is shown to  reproduce the experiment (see Fig.~3 of Ref.~\cite{BurrowsL21}). From the \textit{ab initio} optical potential and using Eq.~\eqref{eq:locV}, one can calculate the local potential $V(r)$ for $\hbar\omega=12 $ MeV  (Fig.~\ref{fig:choosing-truncation-k-local}b):
\begin{eqnarray}
V(r)&=&-57.207 + 6.653 r^2 + 0.086 r^4 - 0.013 r^6  \\ \nonumber &-&  
 0.001 r^8 - 1.8 \times10^{-5} r^{10} + 2.3\times10^{-6} r^{12}  \\ \nonumber &+&  
 2.1\times10^{-7} r^{14} + 5.7\times10^{-9 } r^{16} - 3.6\times10^{-10} r^{18} \\ \nonumber &-&  
 4.3\times10^{-11} r^{20} - 1.5\times10^{-12} r^{22} \\ \nonumber &+& 5.0\times10^{-14} r^{24} + \mathcal{O}(r^{26}) \,\,\,\,\,\,\, (r \lesssim 3.5 {\rm fm}),
 \label{eq:Vhw12}
\end{eqnarray}
with the corresponding lowest eigenvalue of the original nonlocal potential $E_0=-18.85$ MeV.
\begin{figure}
(a) n+$^4$He \textit{ab initio} potential \\
\includegraphics[width=0.72\columnwidth]{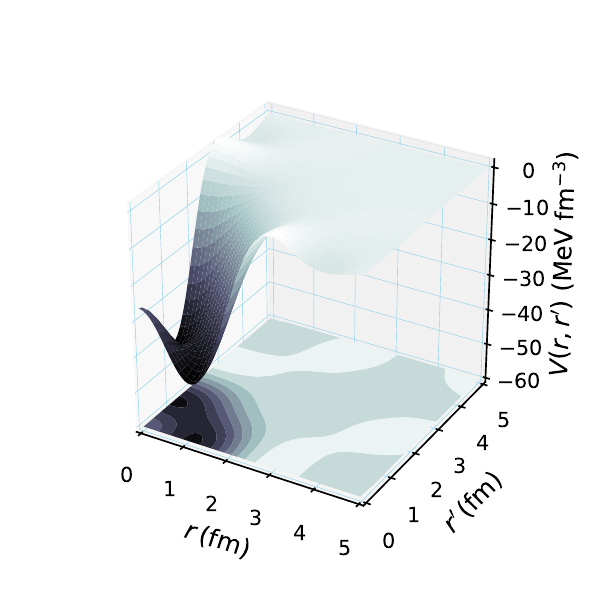}\\
    (b) $\hbar\omega=12$ MeV\\
    \includegraphics[width=0.92\columnwidth]{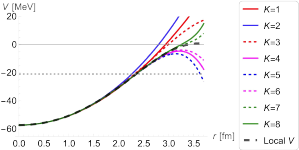}\\
    (c) $\hbar\omega=16$ MeV\\
    \includegraphics[width=0.92\columnwidth]{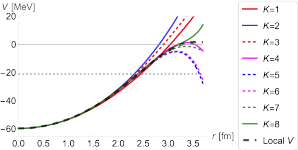}
    \caption{
    (a) \textit{Ab initio} nonlocal n+$^4$He potential as a function of the distance $r$ from the center-of-mass of the $\alpha$ particle for the $S_{\frac12}$ partial wave, calculated in the SA-NCSM/GF with $\hbar\omega=16$ MeV and 15 HO shells, for zero projectile energy (see also Fig.~4 in Ref.~\cite{BurrowsL21} for other energies and partial waves).
    (b) \& (c) Approximate potentials for different truncation parameters $K$, as compared to the \textit{ab initio} deduced local potential labeled as ``Local $V$" (black, solid) for (b) $\hbar\omega=12$ MeV and (c) $\hbar\omega=16$ MeV. The corresponding exact lowest eigenvalue $E_0$ is also shown (gray, dotted).}
    \label{fig:choosing-truncation-k-local}
\end{figure}

The local potential $V(r)$ for $\hbar \omega=16$ MeV (Fig.~\ref{fig:choosing-truncation-k-local}c):
\begin{eqnarray}
V(r)&=&-59.571 + 6.448 r^2 + 0.133 r^4 - 0.007 r^6  \\ \nonumber &-& 
 0.001 r^8 - 4.8 \times 10^{-5} r^{10} + 4.1\times 10^{-7} r^{12}  \\ \nonumber &+&  
 2.3\times 10^{-7} r^{14} + 1.5\times10^{-8} r^{16} + 3.0\times10^{-10} r^{18}  \\ \nonumber &-&  
 3.6\times10^{-11} r^{20} - 3.8\times10^{-12} r^{22} \\ \nonumber &-& 1.5\times10^{-13} r^{24}+ \mathcal{O}(r^{26}) \,\,\,\,\,\,\, (r \lesssim 3.5 {\rm fm}),
 \label{eq:Vhw16}
\end{eqnarray}
with the corresponding lowest eigenvalue of the original nonlocal potential $E_0= -20.84$ MeV. Hence, across the range of $\hbar\omega=12$-16 MeV, the \textit{ab initio} potential yields the lowest $1/2^+$ orbit at energy $E_0=-19.8 \pm 1.0$ MeV, which closely agrees with the corresponding $^3$He experimental energy of $-20.58$ MeV associated with a neutron removal from the $^4$He target.

\begin{figure*}[th]
    \begin{subfigure}[b]{0.49\textwidth}
    \centering
    \includegraphics[width=\textwidth]{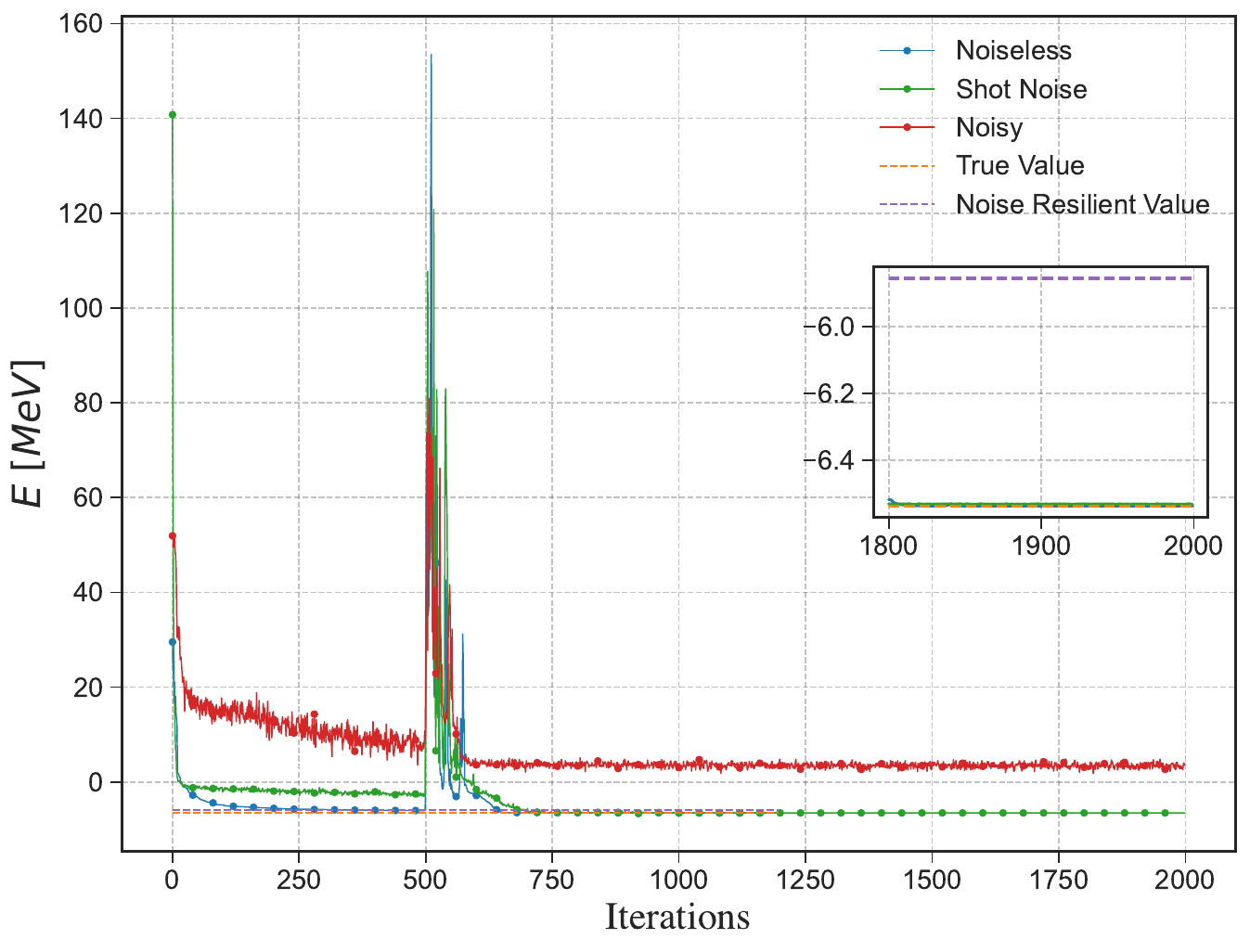}\\ (a)
    \end{subfigure} 
    \hfill
    \begin{subfigure}[b]{0.49\textwidth}
    \includegraphics[width=\textwidth]{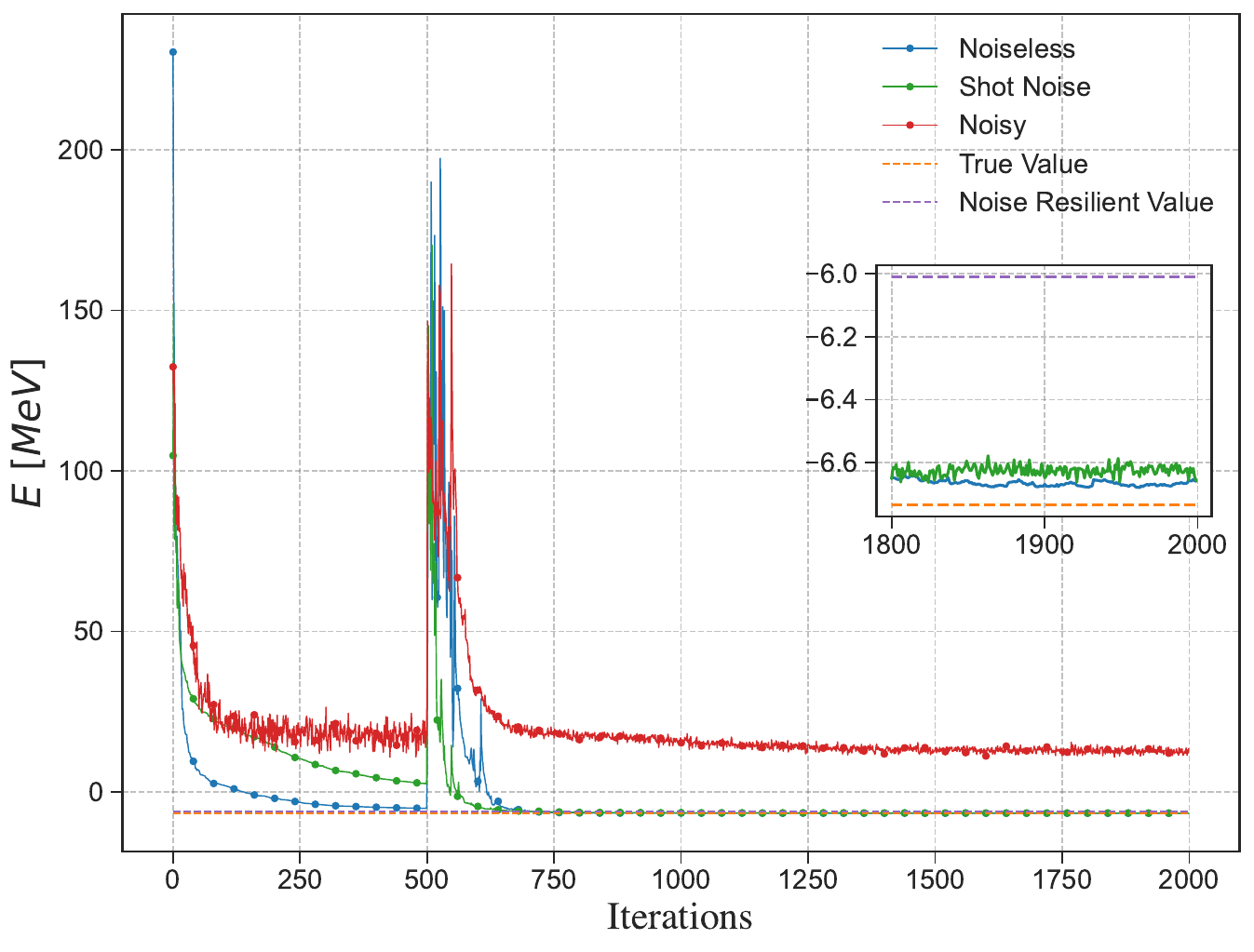}\\ (b)
    \end{subfigure} 
    \caption{Quantum simulation using the Gray encoding for the energy of the lowest $\half^+$ state for n$+^{10}$C modeled by the $V_{\rm E}$ exponential potential Eq.~\eqref{eq:Vexp} with (a) $N=8$ ($n=3$ qubits) and (b) $N=16$ ($n=4$ qubits), with $K=3$ and for different types of simulations detailed in Sec. \ref{sec:simtype}, as compared to the theoretical energy $E_{\rm th}^{K=3}$ labeled as ``True Value" (for the Hamiltonian parameters and $E_{\rm th}^{K=3}$, see Table~\ref{tab:Cdata}).  
    The inset plots are the last 200 iterations.}
    \label{fig:n_10C_plot}
\end{figure*}

\begin{figure*}[bht]
    \begin{subfigure}[b]{0.49\textwidth}
    \centering
    \includegraphics[width=\textwidth]{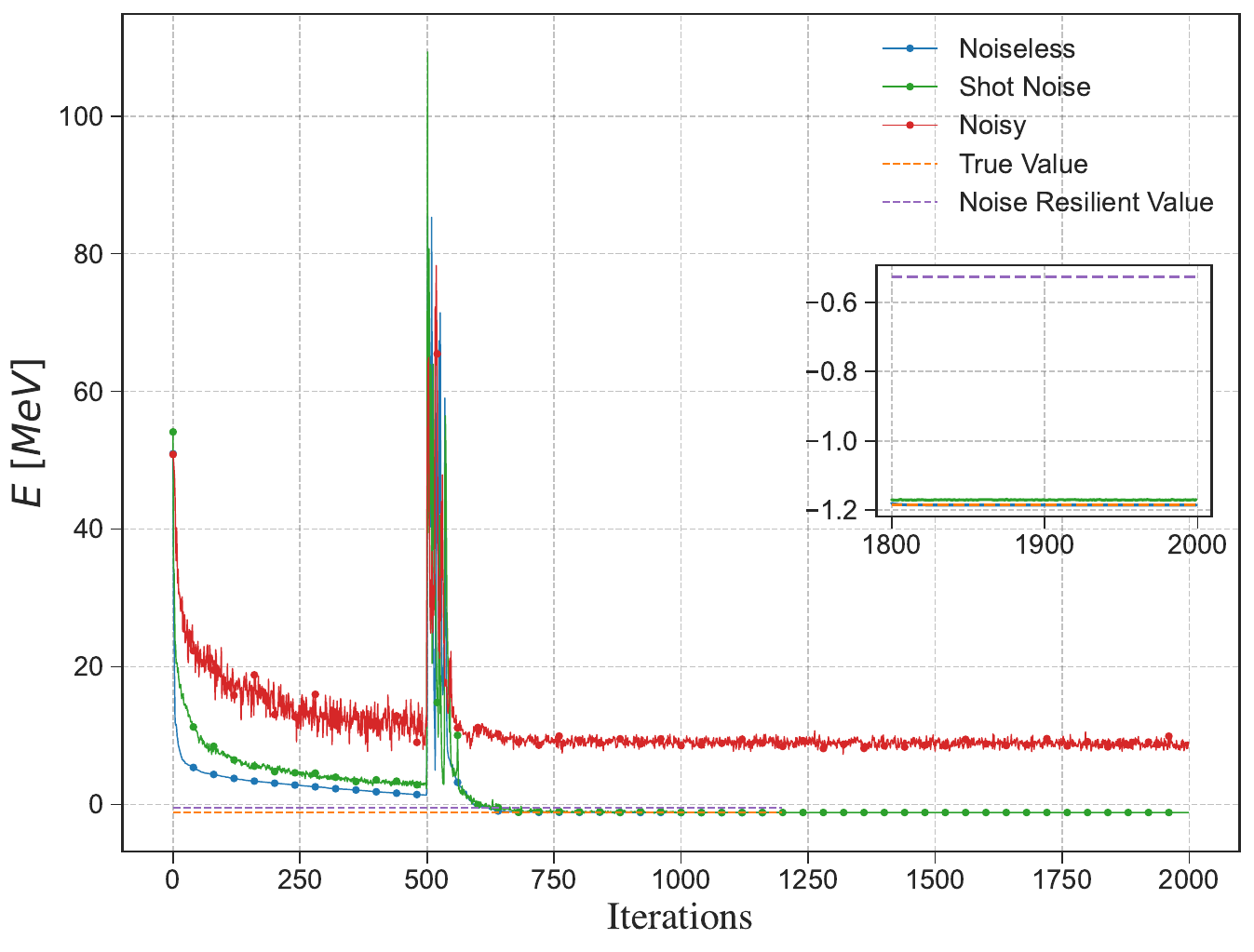}\\ (a)
    \end{subfigure} 
    \hfill
    \begin{subfigure}[b]{0.49\textwidth}
    \includegraphics[width=\textwidth]{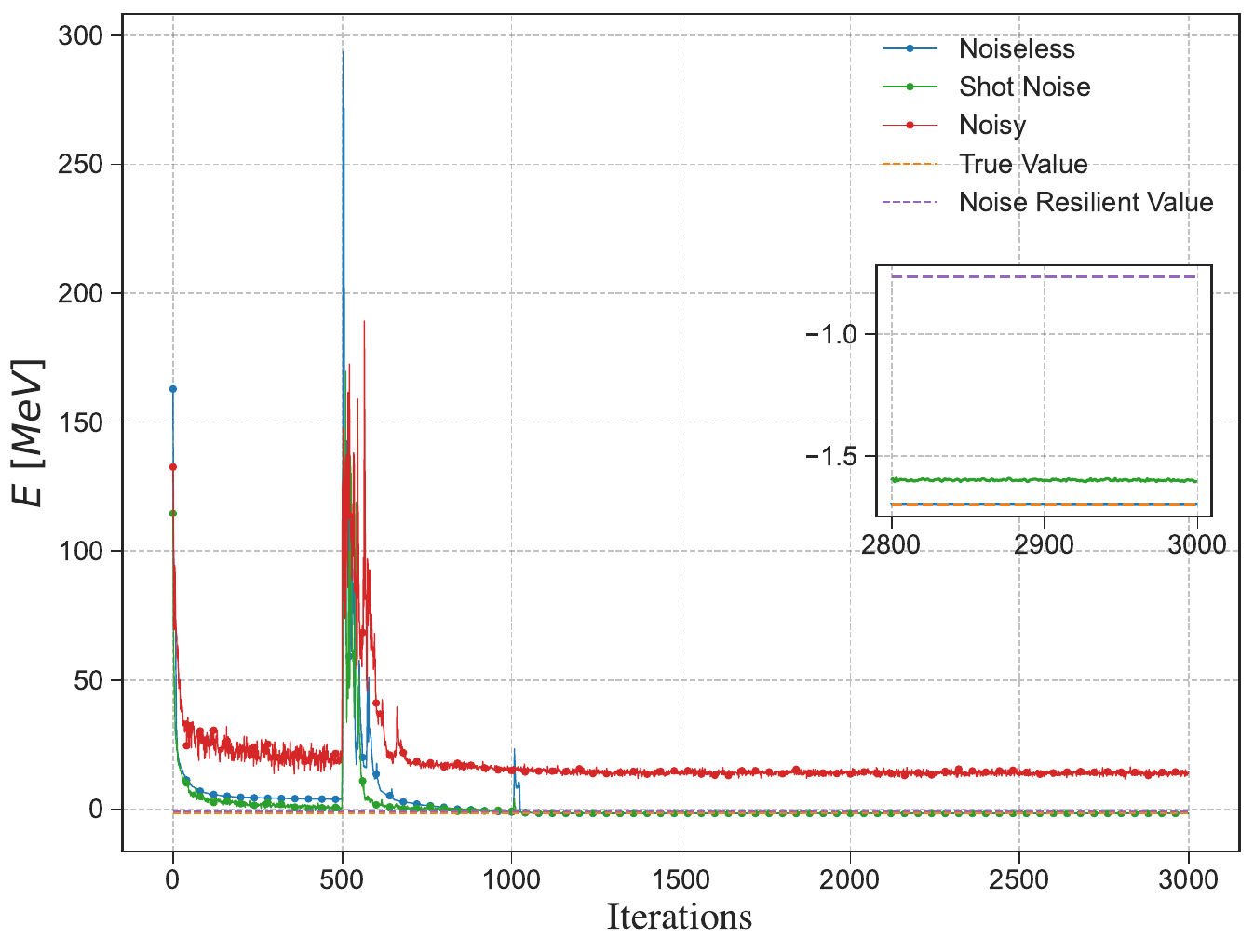}\\ (b)
    \end{subfigure} 
    \caption{
    Quantum simulation using the Gray encoding for the energy of the lowest $\half^+$ state for n$+^{12}$C modeled by the $V_{\rm E}$ exponential potential Eq.~\eqref{eq:Vexp} with (a) $N=8$ ($n=3$ qubits) and (b) $N=16$ ($n=4$ qubits), with $K=3$ and for different types of simulations detailed in Sec. \ref{sec:simtype}, as compared to the theoretical energy $E_{\rm th}^{K=3}$ labeled as ``True Value" (for the Hamiltonian parameters and $E_{\rm th}^{K=3}$, see Table~\ref{tab:Cdata}).  
    The inset plots are the last 200 iterations.
    }
    \label{fig:n_12C_plot}
\end{figure*}

Before running simulations, the hyperparameter $K$ of Eq.~\eqref{eq:Vcut} needs to be chosen. From Figs.~\ref{fig:choosing-truncation-k-local}b \& c, one determines that $K=1,\, 2$, and $7(8)$ represent reasonable approximations to the local potential with $\hbar \omega  =12(16) $ MeV at small distances, without the possibility of introducing spurious bound states. 
Indeed, as seen in Figs.~\ref{fig:choosing-truncation-k-local}b \& c, truncations at other $K$ values yield potentials that are attractive around 3.5 fm (the $K=3$ approximation becomes largely attractive beyond 5 fm). The quantum simulations for n+$\alpha$ discussed below are performed for $K=1$ and $2$, since the low-$K$ regime with the Gray encoding  is expected to  benefit largely from the use of commuting sets (see Sec. \ref{sec:tradeoffs}) on existing and far-term quantum devices.

\begin{figure*}[bht]
    \begin{subfigure}[b]{0.49\textwidth}
    \centering
    \includegraphics[width=\textwidth]{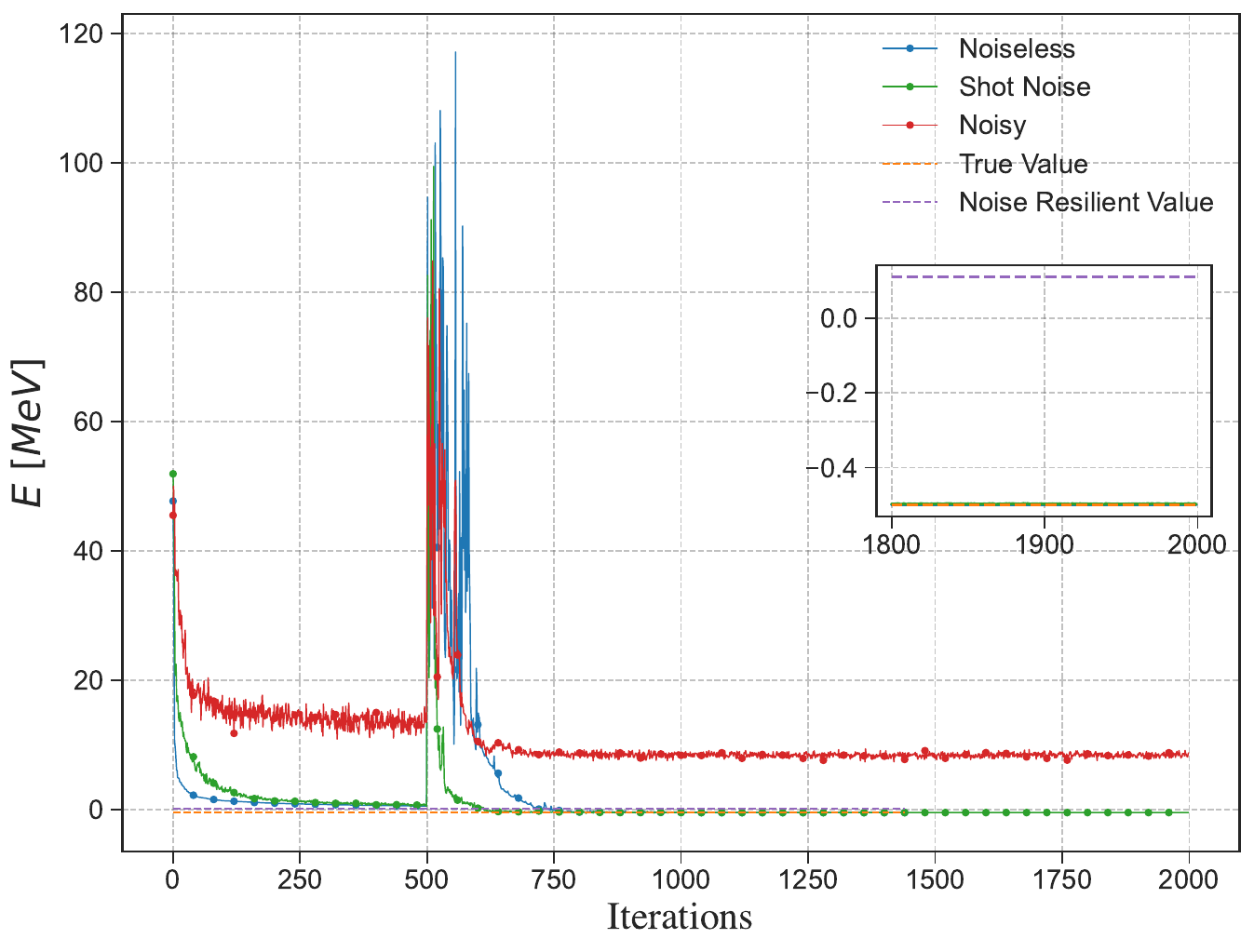}\\ (a)
    \end{subfigure} 
    \hfill
    \begin{subfigure}[b]{0.49\textwidth}
    \includegraphics[width=\textwidth]{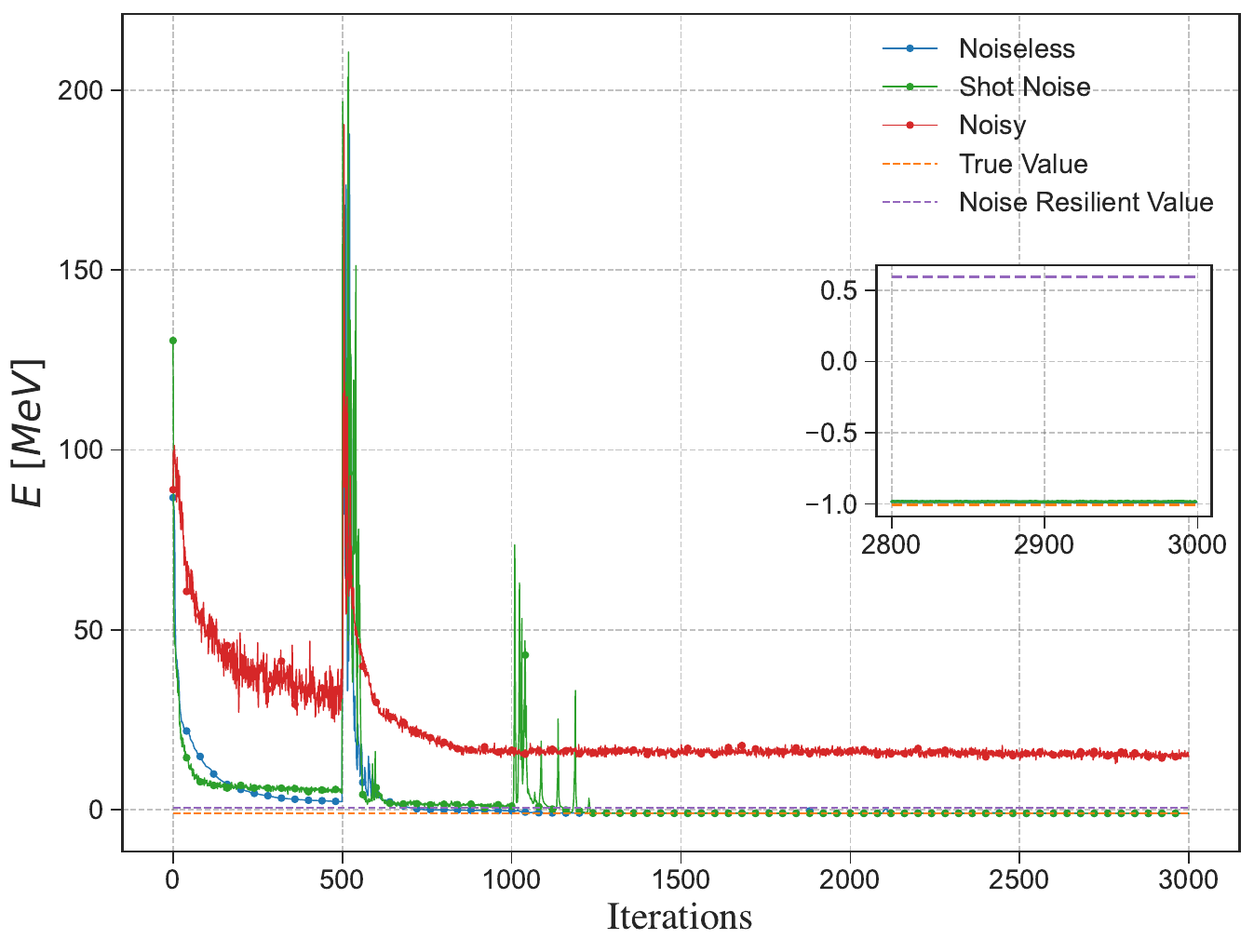}\\ (b)
    \end{subfigure} 
    \caption{Quantum simulation using the Gray encoding for the energy of the lowest $\half^+$ state for n$+^{14}$C modeled by the $V_{\rm E}$ exponential potential Eq.~\eqref{eq:Vexp} with (a) $N=8$ ($n=3$ qubits) and (b) $N=16$ ($n=4$ qubits), with $K=3$ and for different types of simulations detailed in Sec. \ref{sec:simtype}, as compared to the theoretical energy $E_{\rm th}^{K=3}$ labeled as ``True Value" (for the Hamiltonian parameters and $E_{\rm th}^{K=3}$, see Table~\ref{tab:Cdata}).  
    The inset plots are the last 200 iterations.}
    \label{fig:n_14C_plot}
\end{figure*}

\subsection{Description of the quantum simulations}
\label{sec:simtype}
The ansatz choice for the different simulations is given in Sec.~\ref{sec:ansatz}. For the n+C systems with an exponential potential, we perform simulations  with $N=8$ and $N=16$, with $K=3$ (7-diagonal matrix). For the n+$\alpha$ systems, we use $N=8$, with $K=1$ and $K=2$. The type of the simulations are as follows:

\begin{itemize}
    \item Exact diagonalization (referred to as ``true value"): We diagonalize the matrix exactly (using classical methods) and find the minimum eigenvalue, which is used to validate the
    quantum simulation outcomes.
    
    \item Noiseless simulation: We use a perfect noiseless state-vector simulator to simulate ideal behavior.
    
    \item Shot-noise simulation: We perform a shot-noise simulation using a variable number of shots (see Appendices \ref{app:sim_details_carbon} and \ref{app:sim_details_abi}). The shot-noise simulation is useful in the scenario where the quantum device is perfect and noiseless. However, the shot noise from the final measurements is unavoidable. This method gives a good estimate of performance in the far-term error-corrected regime. 
    
    \item Noisy simulation: We perform a noisy quantum simulation with noise models from existing quantum devices. Specifically, in this study, the cost function estimates use a fake IBMQ backend {\tt ibm\_manila}. This method gives a good estimate of performance in the near-term NISQ regime. 
    
    \item Noise-resilient estimation: We use the final parameters $\btheta$ 
    from the noisy simulation and calculate the expectation value of the Hamiltonian on a classical machine to find a noiseless estimate for the minimum energy, which we refer to as the noise-resilient (NR) value. The idea of noise-resilient training was first introduced in \cite{Sharma_2020}. We find that the algorithm exhibits noise resilience; i.e., the NR value is more accurate than the noisy simulation outcome. In other words, training is still possible in a noisy scenario.
\end{itemize}

In all examples, we use a combination of the Simultaneous Perturbation Stochastic Approximation (SPSA) algorithm \cite{Spall1992}, and Gradient Descent to estimate the gradient (see Appendices \ref{app:sim_details_carbon} and \ref{app:sim_details_abi} for specific details). The SPSA method provides an unbiased estimator of the gradient with a runtime independent of the number of parameters. We find that deviation from the true value caused the noise introduced by SPSA to increase with the problem size. However, we find that it is useful to quickly obtain a solution close to the optimal value. From here, we use Gradient Descent to improve the accuracy of
the optimal solution.

For the n+C system, we take as input $N$, $K$, $\hbar\omega$, $V_0$, and $c$. For the n+$\alpha$ system, we take as input $N$, $K$, $\hbar\omega$, and the list of coefficients $\{v_k\}_{k=0}^K$. 

We note that in all simulations, the overall Hamiltonian matrix is constructed and then the expectation value is calculated as an inner product.  Nonetheless, for runs on a quantum device, further advantages will stem from using qubit-wise or distance-grouped commuting sets. For example, in the case of the Gray encoding, for $N=16$ (four qubits) and $K=3$, there are $88$ Pauli strings, $19$ QC sets, and $10$ DGC sets (e.g., see Table~\ref{tab:GSubsetsVsk}).

In our simulations, we use a warm-start initialization for the ansatz parameters, inspired by perturbation theory. For the exponential potential, see Eq.~\eqref{eq:Vexp}, and for the local potentials deduced \textit{ab initio}, see Eqs. \eqref{eq:Vhw12} and \eqref{eq:Vhw16}, each progressive diagonal is scaled by an increasingly smaller coefficient. As a result, we expect each additional diagonal to alter the eigenvalue to a lesser extent than the previous diagonals. Inspired by this, we run the simulations for a lower $K$ value using fewer iterations and shots. This enables the optimization to quickly reach an approximate solution. Then, we use the endpoint of this simulation as the start for the full-scale simulation. We find that, in practice, this leads to finding the optimal solution in fewer iterations. In Figs.~\ref{fig:n_10C_plot}-\ref{fig:abi-n-3-16}, the large peaks in energy during the optimization occur due to this switch from a lower $K$ to the required value. We note that switching from SPSA to gradient descent can also introduce these peaks. However, the initial learning rate of the gradient descent stage can be tuned to remove this source.

\subsection{Quantum simulations for neutron-Carbon dynamics}

In this section we present the results of the different simulations for n-Carbon dynamics. The specific details, including the number of iterations, type of gradient estimator, and number of shots, can be found in Appendix~\ref{app:sim_details_carbon}. We provide simulations that, for the first time, show the efficacy of the Gray code for a Hamiltonian matrix beyond the tridiagonal case, that is, for $K=1$ (bandwidth of 3). With the Gray code, we can utilize only three and four qubits to simulate model spaces of $N=8$ and 16 basis states, respectively. This provides acceptable results that closely agree with the energy for the $K=3$ approximation, which lies near the exact theoretical energy as discussed above (see Fig.~\ref{fig:nC_EvsN} for $N=8$ and $N=16$). The quantum simulations for n$+^{10}$C, n$+^{12}$C, and n$+^{14}$C  are shown in Figs.~\ref{fig:n_10C_plot}-\ref{fig:n_14C_plot}, and the case of  n$+^{14}$C is compared to the one-hot encoding in Fig.~\ref{fig:n_14C_plot_OH}. 

We summarize the energy estimates from the various quantum simulations in Table~\ref{tab:CdataSimulations}.

\begin{table*}[th]
\centering
\begin{tabular}{c|c|c|c|c|c|c|c|c|c}
    & $E_{\rm th}$ & Encoding & Figure & $N$ & $E_{{\rm th}}^{K=3}$ & $E^{K=3}_{\rm NL}$ & $E^{K=3}_{\rm shot}$ & $E^{K=3}_{\rm noisy}$ & $E^{K=3}_{\rm NR}$\\
    & [MeV] & & & & [MeV] & [MeV] & [MeV] & [MeV]& [MeV]\\
    \hline
    \hline
    \multirow{2}{*}{n+$^{10}$C} & \multirow{2}{*}{$-6.78$} & \multirow{2}{*}{Gray} & \multirow{2}{*}{\ref{fig:n_10C_plot}} & $8$ & $-6.5364$ & $-6.5364 \pm 2 \times 10^{-10}$ & $-6.5305 \pm 0.0005$ & $3.5 \pm 0.5$ & $-5.9$ \\
    \cline{5-10}
    & & & & $16$ & $-6.7346$ & $-6.668 \pm 0.007$ & $-6.62 \pm 0.02$ & $12.6 \pm 0.7$ &  $-6.0$\\
    \hline
    
    \multirow{2}{*}{n+$^{12}$C} & \multirow{2}{*}{$-1.86$} & \multirow{2}{*}{Gray} & \multirow{2}{*}{\ref{fig:n_12C_plot}} & $8$ & $-1.18495$ & $-1.184993 \pm 1 \times 10^{-6}$ & $-1.1701 \pm 0.0006$ & $8.7 \pm 0.5$ & $-0.5$ \\
    \cline{5-10}
    & & & & $16$ & $-1.70020$ & $-1.6973 \pm 0.0001$ & $-1.599 \pm 0.003$ & $13.9 \pm 0.7$ & $-0.8$\\
    \hline
    
    \multirow{3}{*}{n+$^{14}$C} & \multirow{3}{*}{$-1.22$} & \multirow{2}{*}{Gray} & \multirow{2}{*}{\ref{fig:n_14C_plot}} & $8$ & $-0.49963$ & $-0.4996 \pm 6 \times 10^{-10}$ & $-0.4966 \pm 0.0005$ & $8.4 \pm 0.3$ & $0.1$ \\
    \cline{5-10}
    & & & & $16$ & $-1.0070$ & $-0.9860 \pm 0.0005$ & $-0.982 \pm 0.002$ & $15.2 \pm 0.5$ & $0.6$\\
    \cline{3-10}

    & & One-hot & \ref{fig:n_14C_plot_OH} & $8$ & $-0.49963$ & $-0.4987 \pm 0.0001$ & $-0.49 \pm 0.03$ & $9.0 \pm 0.6$ & $0.7$ \\
    \end{tabular}
    \caption{Simulation energy for the lowest $\half^+$ state for each neutron-Carbon system, for the $K=3$ approximation with $N=8$ or $N=16$, corresponding to Figs. \ref{fig:n_10C_plot}-\ref{fig:n_14C_plot_OH}: the true value $E_{\rm th}^{K=3}$;  $E^{K=3}_{\rm NL}$ from the noiseless simulations, $E^{K=3}_{\rm shot}$ from the shot-noise simulations, and $E^{K=3}_{\rm noisy}$ from the noisy simulations, reported as the mean and standard deviation of the last 100 iterations; and the NR value $E^{K=3}_{\rm NR}$. The theoretical eigenenergy $E_{\rm th}$ of the exact potential is also shown (cf. Table~\ref{tab:Cdata}).}
    \label{tab:CdataSimulations}
\end{table*}

In particular, in all cases the noiseless and shot-noise simulations with the Gray code yield results with very small errors, mostly, $10^{-10}$-$10^{-3}$.
Importantly, they reproduce the true value within $0$-$2$\%, with the only exception being the $6$\% deviation in $E^{K=3}_{\rm shot}$ for $^{12}$C and $N=16$.  

A very significant result is that the noisy simulations, which are expected to simulate runs on current NISQ processors, provide very reasonable energy estimates through the use of the NR value. In particular, we find that the algorithm exhibits noise resilience; i.e., the NR value $E^{K=3}_{\rm NR}$ is more accurate than the noisy simulation outcome and training of the wavefunctions parameters is still possible in a noisy scenario. Furthermore, for these n-C systems, we show that the $E^{K=3}_{\rm NR}$ energies differ from the corresponding true value only by $600$-$950$ keV, which surpasses the accuracy of many nuclear models. The only exception is the four-qubit simulation for n$+^{14}$C (Fig.~\ref{fig:n_14C_plot}), which predicts an unbound $\half^+$ state at $0.59$ MeV for the weakly bound state at $-1.0$ MeV. While this is still a reasonable estimate, weakly bound states require larger model spaces to achieve convergence, as discussed above, which implies the need for larger number of qubits. This means that special care needs to be taken in the quantum simulations of such systems, including n$+^{16}$C and n$+^{18}$C, when performed on NISQ devices. One way to improve this is to try different ansatz structures, or a larger model space, which remains to be shown in future work. 

We note here that for the example of n+$^{10}$C, with $N=8$ and $K=3$, in addition to the noise resilient final estimate, we report the mean and standard deviation of the noiseless estimates of the last 100 noisy iterations. In other words, we use the parameters from last 100 noisy iterations, calculate the noiseless estimate, and report the mean and standard deviation to be $-6.05 \pm 0.07$. This is $\sim 2\sigma$ away from the $E^{K=3}_{NR}$ value reported in Table~\ref{tab:CdataSimulations} calculated from the last iteration only, which is very reasonable. This further solidifies the idea that the noise resilient method offers a much better estimate as compared to the noisy estimate. However, in any large-scale experiment, we would not recommend running the noiseless simulation for hundreds of iterations. This is because each noiseless simulation is prohibitively expensive. Instead, in this study, we use the standard deviation for n+$^{10}$C as a guidance to the number of the significant digits we report for the noise resilient estimates in all simulations.

Furthermore, even in the case of weakly bound states, the Gray code with three qubits is superior to the one-hot encoding with eight qubits, as illustrated in Fig.~\ref{fig:n_14C_plot_OH} and Table~\ref{tab:CdataSimulations}. Indeed, compared to the Gray-code case, the one-hot simulations are much slower and show worse performance in the presence of noise. In particular, the one-hot simulation yields larger errors [e.g., by six (two) orders of magnitude for the noiseless (shot noise) simulations], as well as a deviation for $E^{K=3}_{\rm NR}$ from the true value that is twice as large as the corresponding Gray-code estimate. In addition, as shown in Figs. \ref{fig:OneHotAnsatz} and \ref{fig:BR_GR_Ansatz} as well as in Tables \ref{tab:details_OHencodings} and \ref{tab:details_BGencodings}, the one-hot ansatz utilizes 13 two-qubit gates compared to only 8 two-qubit (CNOT) gates in the Gray encoding ansatz with $L=4$ used in our simulations. This suggests that the Gray encoding and the use of fewer qubits are indeed highly advantageous for nuclear problems that achieve convergence in larger model spaces, with the case of weakly bound systems presented here being an illustrative example. 

\begin{figure}
\includegraphics[width=\columnwidth]{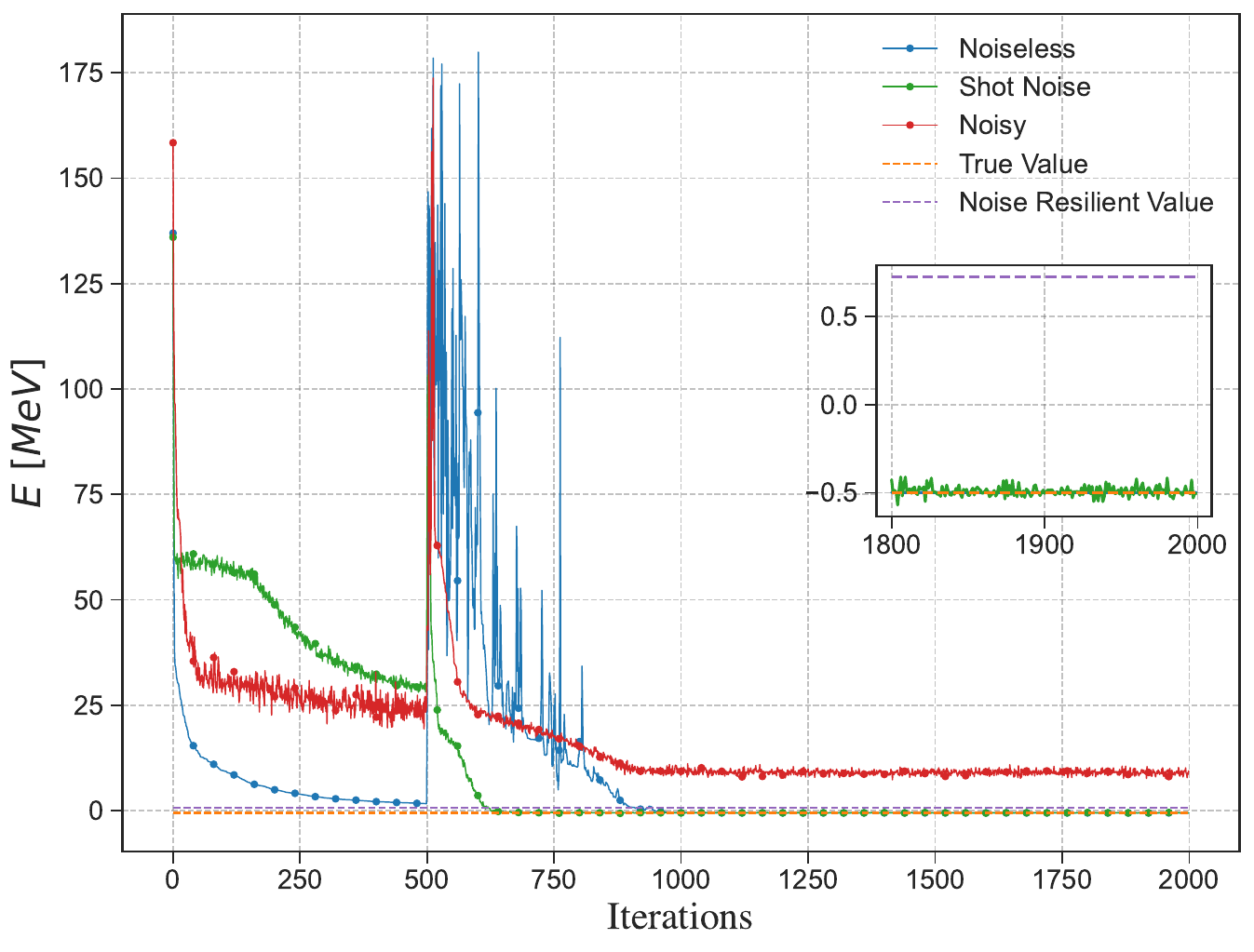}
    \caption{Simulation using the one-hot encoding for the energy of the lowest $\half^+$ state for n$+^{14}$C modeled by the exponential potential Eq.~\eqref{eq:Vexp} with $N=8$ ($n=8$ qubits) and $K=3$. The different types of simulations are detailed in Sec. \ref{sec:simtype}. The inset plots the last $200$ iterations.}
    \label{fig:n_14C_plot_OH}
\end{figure}

\begin{figure*}[th]
    \begin{subfigure}[b]{0.49\textwidth}
    \centering
    \includegraphics[width=\textwidth]{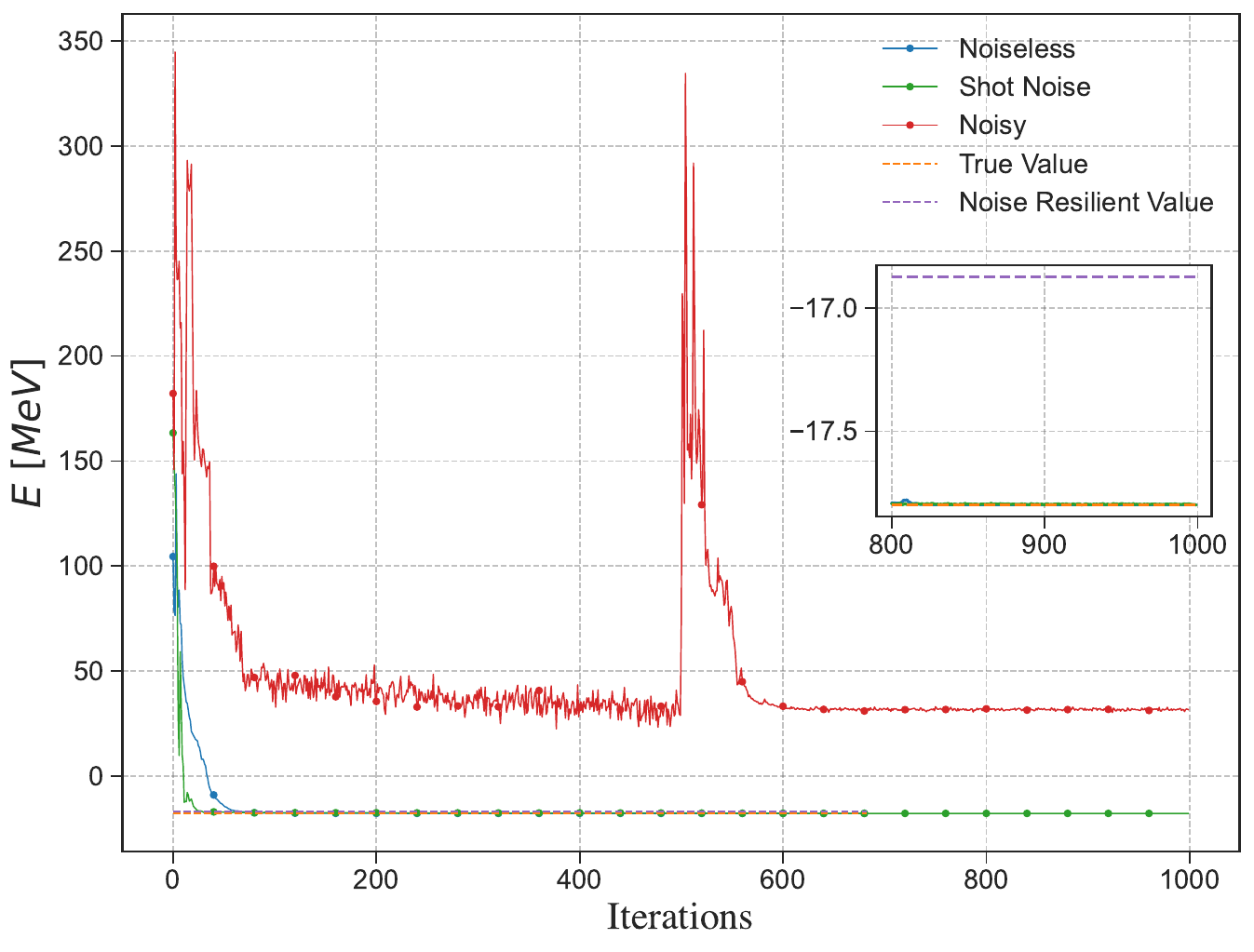}\\ (a)
    \end{subfigure} 
    \hfill
    \begin{subfigure}[b]{0.49\textwidth}
    \includegraphics[width=\textwidth]{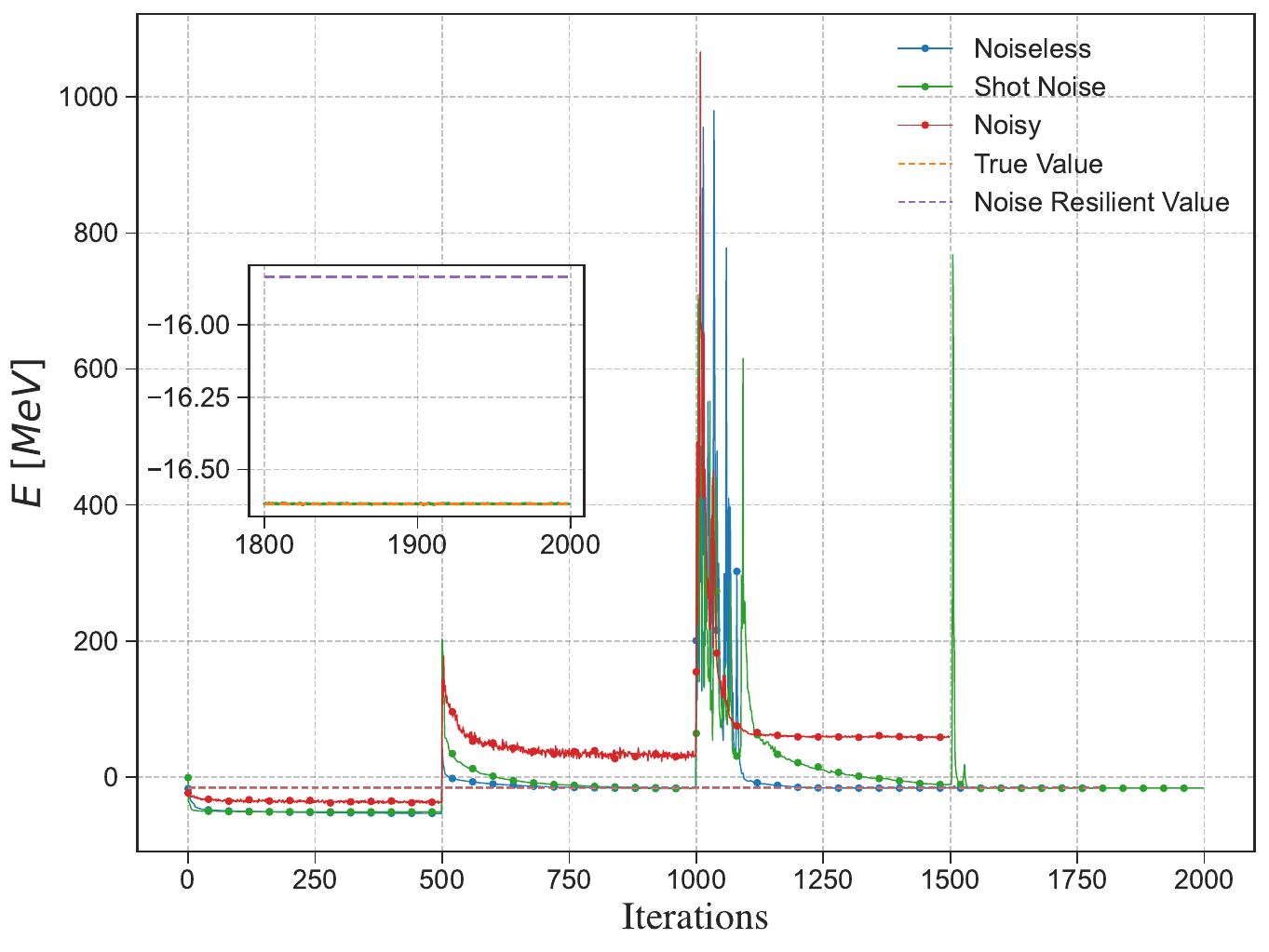}\\ (b)
    \end{subfigure} 
    \caption{Simulation with the Gray encoding for the n-$\alpha$ \textit{ab initio} potential for $\hbar \omega=12$ MeV Eq.~\eqref{eq:Vhw12} with $N=8$ ($n=3$), and (a) $K=1$ and (b) $K=2$. The different types of simulations are detailed in Sec. \ref{sec:simtype}. The inset plots the last 200 iterations.}
    \label{fig:abi-n-3-12}
\end{figure*}

\subsection{Quantum simulations with the Gray encoding for n\texorpdfstring{$+\alpha$}{+alpha} using \textit{ab initio} optical potentials}

In this section, we present the results of the  simulations for the n+$\alpha$ dynamics. The specific details, including the number of iterations, type of gradient estimator, and number of shots, can be found in Appendix~\ref{app:sim_details_abi}. The quantum simulations for n$+\alpha$ are carried out for model spaces of $N=8$ (Fig.~\ref{fig:abi-n-3-12}) and $N=16$ (Fig.~\ref{fig:abi-n-4-12}) for $\hbar\omega=12$ MeV and of $N=8$ (Fig.~\ref{fig:abi-n-3-16}) for $\hbar\omega=16$ MeV. We show the cases of $K=2$ for a potential of $\mathcal{O}(r^4)$ and $K=3$ for a potential of $\mathcal{O}(r^6$).

We summarize the energy estimates from the various quantum simulations in Table~\ref{tab:ab_initio_dataSimulations}. 
The energies $E_{\rm th}^{K}$ from the exact diagonalization for $K=1$ and $2$  converge within the first four to five digits with the increasing model space from $N=8$ to $N=16$, as illustrated in Table~\ref{tab:ab_initio_dataSimulations} for $\hbar\omega=12$ MeV (for $\hbar\omega=16$ MeV, $E_{\rm th}^{K=1}=-20.77$ MeV and $E_{\rm th}^{K=2}=-18.95$ MeV for $N=16$). Hence, the final estimates are based on the $N=8$ results, where the uncertainties are estimated for a 14\% variation of the \hw~values ($\hbar\omega=12$-$16$ MeV), the same range that yields a total cross section for the neutron scattering on $^4$He that is converged and in agreement with experiment, as discussed in Ref.~\cite{BurrowsL21}. 
In addition, the $K=1$ and $K=2$ approximation energies are very close to the corresponding $E_0$ eigenenergy of the original non-local \textit{ab initio} optical potential. In fact, across the \hw~range, the  energy of the lowest $\half^+$ orbit is estimated at $-19.3(1.5)$ MeV for $K=1$ and at $-17.8(1.2)$ MeV for $K=2$, both of which agree within the uncertainties with the non-local estimate of $-19.8(1.0)$ reported above. Exactly the same energy estimates are obtained by the noiseless and shot simulations ($-19.3\pm 1.5$ MeV for $K=1$ and at $-17.8\pm 1.2$ MeV for $K=2$). We note that the errors that stem from these simulations (of order of a few keV) are inconsequential compared to the ones associated with the \hw~variation ($\sim 1-1.5$ MeV). Importantly, while the noisy simulations yield unacceptably large energies, the corresponding NR energies agree with the shot-noise outcomes within $1\sigma$: $-18.6(1.7)$ MeV for $K=1$ and $-17.1(1.2)$ MeV for $K=2$, while leading to slightly larger or comparable error bars. This suggests that deep bound states described by an \textit{ab initio} deduced NA optical potential can be reasonably well approximated by tri- to five-diagonal potentials, and in turn, can be successfully simulated on far-term error-corrected devices (practically yielding the true result) and even on NISQ processors (yielding energies within $1\sigma$ across a 14\% \hw~variation).

\begin{table*}[t]
\centering
\begin{tabular}{c|c|c|c|c|c|c|c|c|c}
    $\hbar \omega$ & $E_0$ & $N$ & Figure & $K$ & $E_{\rm th}^K$ & $E_{\rm NL}$ & $E_{\rm shot}$ & $E_{\rm noisy}$ & $E_{\rm NR}$\\
    & [MeV] & & & & [MeV] & [MeV] & [MeV] & [MeV] & [MeV]\\
    \hline
    \hline
    \multirow{4}{*}{$12$} & \multirow{4}{*}{$-18.85$} & \multirow{2}{*}{$8$} & \multirow{2}{*}{\ref{fig:abi-n-3-12}} & $1$ & $-17.7986$ & $-17.7986 \pm 4 \times 10^{-10}$ & $-17.796 \pm 0.001$ & $31.6 \pm 0.4$ & $-16.9$ \\
    \cline{5-10}
    & & & & $2$ & $-16.6190$  & $-16.6189 \pm 1.7 \times 10^{-13}$ & $-16.618 \pm 0.001$ & $58.7 \pm 0.7$ & $-15.8$ \\
    \cline{3-10}

    & & \multirow{2}{*}{$16$} & \multirow{2}{*}{\ref{fig:abi-n-4-12}} & $1$ & $-17.7987$ & $-17.7852 \pm 0.0008$ & $-17.791 \pm 0.007$ & $68.8 \pm 0.7$ & $-16.8$ \\
    \cline{5-10}
    & & & & $2$ & $-16.6191$  & $-16.615 \pm 0.005$ & $-16.60 \pm 0.02$ & $162 \pm 2$ & $-14.9$ \\
    \hline

    \multirow{2}{*}{$16$} & \multirow{2}{*}{$-20.84$} & \multirow{2}{*}{$8$} & \multirow{2}{*}{\ref{fig:abi-n-3-16}} & $1$ & $-20.7735$ & $-20.7735 \pm 3.2 \times 10^{-11}$ & $-20.7733 \pm 0.0005$ & $8.8 \pm 0.3$ & $-20.3$ \\
    \cline{5-10} 
    
    & & & & $2$ & $-18.9470$ & $-18.9470 \pm 1.8 \times 10^{-13}$ & $-18.9470 \pm 0.0006$ & $43.2 \pm 0.6$ & $-18.3$ \\
    
    \end{tabular}
    \caption{Simulation energy for the lowest $\half^+$ orbit of the n$-\alpha$ \textit{ab initio} deduced local potential, using the Gray code with $N=8$ or $N=16$, corresponding to Figs. \ref{fig:abi-n-3-12}-\ref{fig:abi-n-3-16}: the true value $E_{\rm th}^{K}$;  $E_{\rm NL}$ from the noiseless simulations, $E_{\rm shot}$ from the shot-noise simulations, and $E_{\rm noisy}$ from the noisy simulations, reported as the mean and standard deviation of the last 100 iterations; and the NR value $E_{\rm NR}$. The exact theoretical eigenenergy $E_{0}$ of original non-local \textit{ab initio} potential is also shown.}
    \label{tab:ab_initio_dataSimulations}
\end{table*}

\begin{figure*}[bht]
    \begin{subfigure}[b]{0.49\textwidth}
    \centering
    \includegraphics[width=\textwidth]{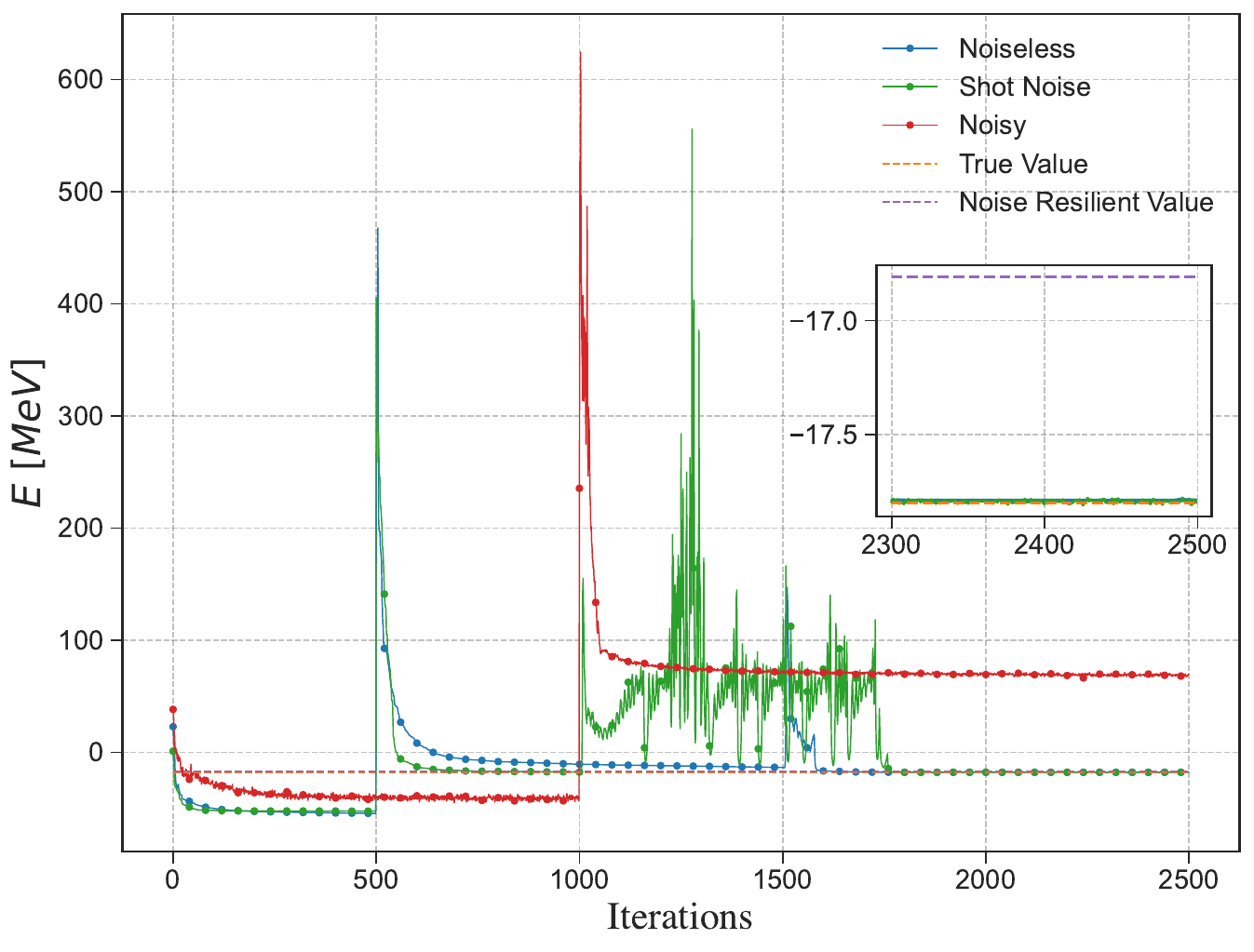}\\ (a)
    \end{subfigure} 
    \hfill
    \begin{subfigure}[b]{0.49\textwidth}
    \includegraphics[width=\textwidth]{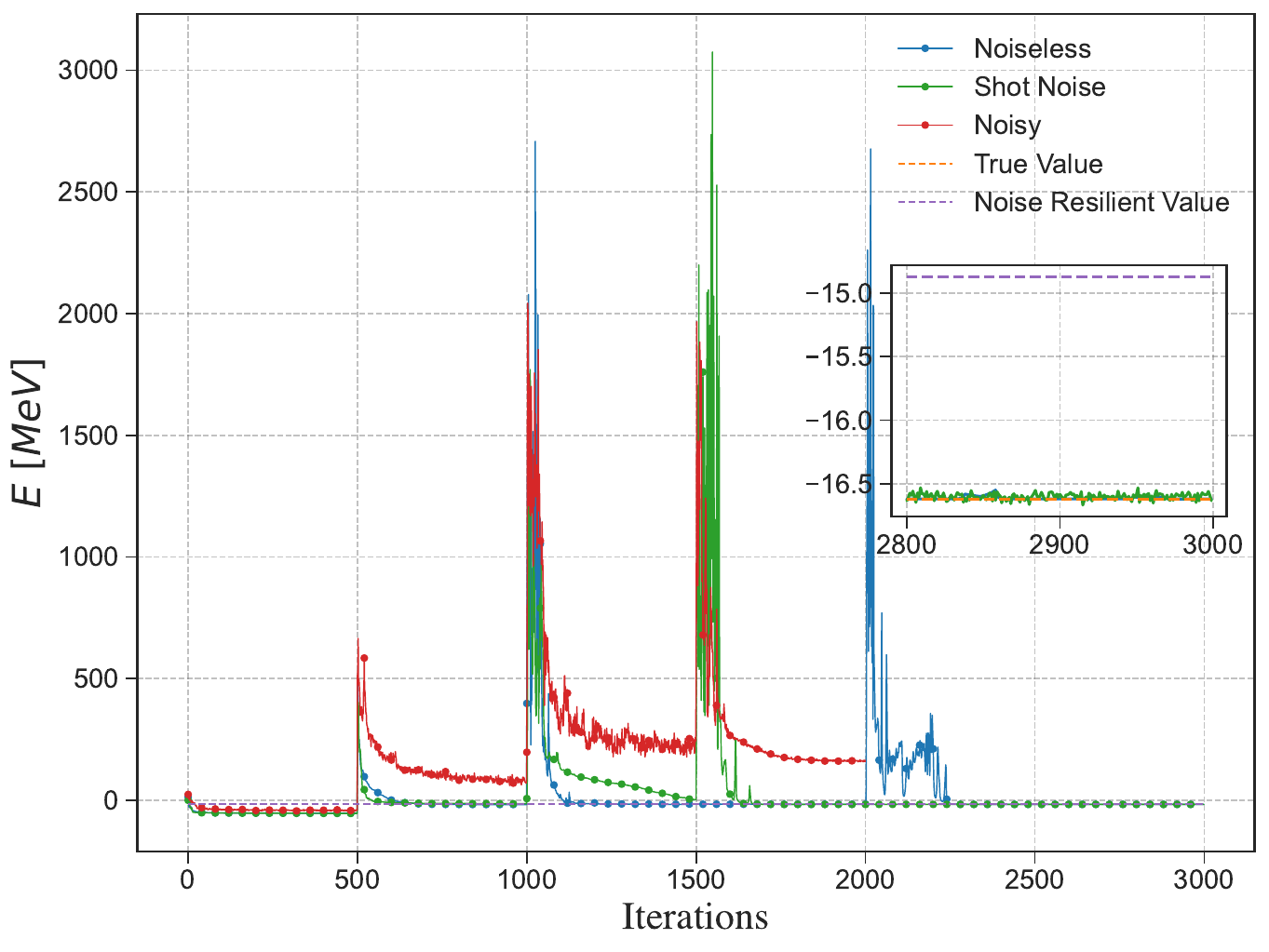}\\ (b)
    \end{subfigure} 
    \caption{Simulation with the Gray encoding for the n-$\alpha$ \textit{ab initio} potential for $\hbar \omega=12$ MeV Eq.~\eqref{eq:Vhw12} with $N=16$ ($n=4$), and (a) $K=1$ and (b) $K=2$. The different types of simulations are detailed in Sec. \ref{sec:simtype}. The inset plots the last 200 iterations.}
    \label{fig:abi-n-4-12}
\end{figure*}

\begin{figure*}[bht]
    \begin{subfigure}[b]{0.49\textwidth}
    \centering
    \includegraphics[width=\textwidth]{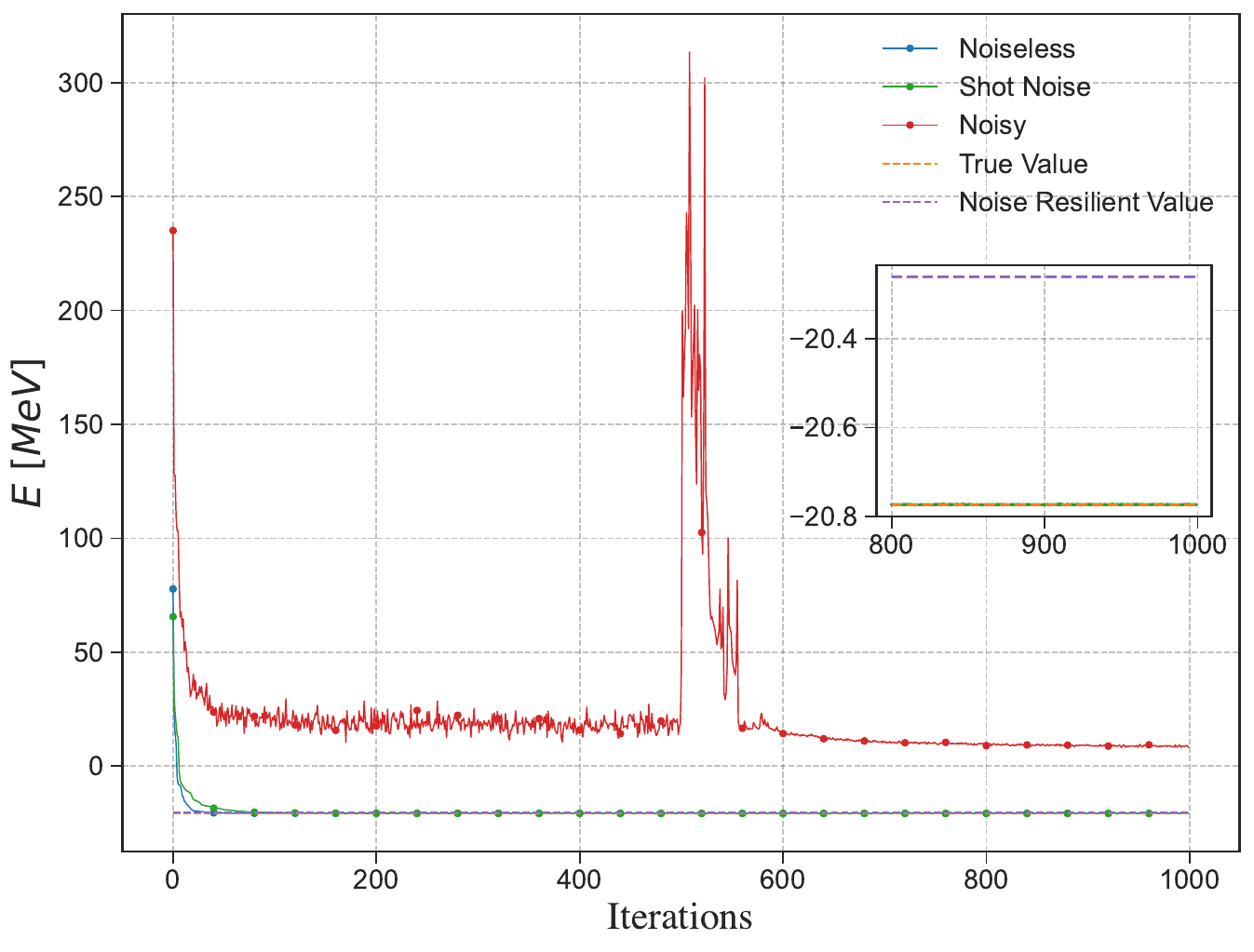}\\ (a)
    \phantomsubcaption
    \end{subfigure} 
    \hfill
    \begin{subfigure}[b]{0.49\textwidth}
    \includegraphics[width=\textwidth]{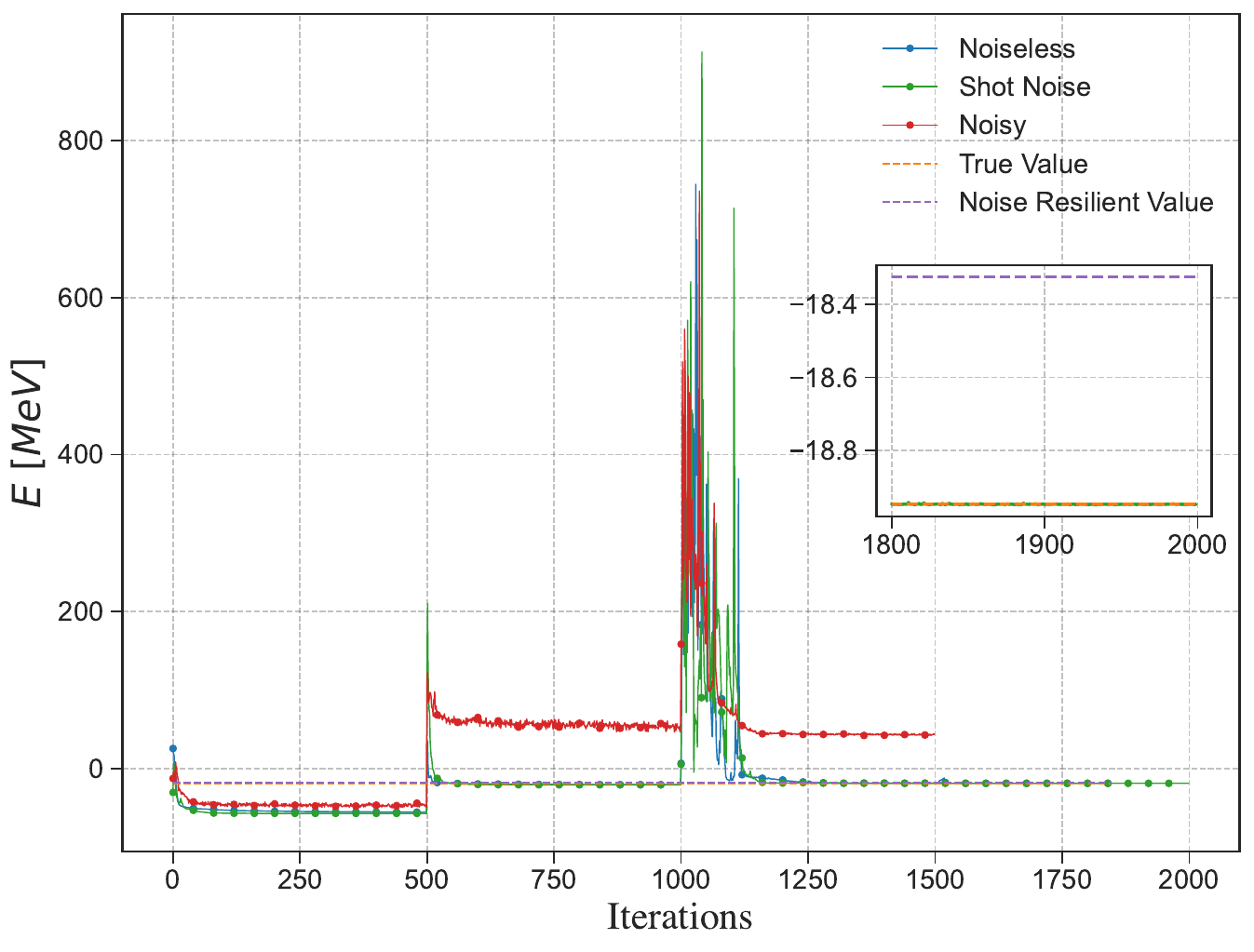}\\ (b)
    \end{subfigure} 
    \caption{Simulation for the n-$\alpha$ \textit{ab initio} potential for $\hbar \omega=16$ MeV Eq.~\eqref{eq:Vhw16} with $N=8$, and (a) $K=1$ and (b) $K=2$. The different types of simulations are detailed in Sec. \ref{sec:simtype}. The inset plots the last 200 iterations.}
    \label{fig:abi-n-3-16}
\end{figure*}

\subsection{Comparing QC and DGC schemes}
In this section we compare the efficacy of QC and DGC schemes using both the Gray and binary encodings. We perform both noiseless and shot noise simulations for all cases. In the shot noise runs, we keep the total number of shots per variational run to be fixed. This is to simulate a fixed amount of ``quantum" resources. Furthermore, in contrast with the preceding simulations in this work, here we only use gradient descent with a varying learning rate scheme to estimate the gradient. This removes the effects of the SPSA noise and gives a clearer picture of the inherent shot noise. However, this also leads to results of the noiseless simulations in this section being slightly different from those reported in Table \ref{tab:ab_initio_dataSimulations}, since the noise introduced by SPSA can take the system out of a local minima and converge to a better solution.

We first compare QC and DGC using the Gray encoding (Fig.~\ref{fig:qc_dgc_abi}). We perform simulations for the n$+\alpha$ system using a model space of $N=8$, $\hbar \omega = 12$MeV, and $K=1$ for a potential of $\mathcal{O}(r^2)$. We expect the noiseless QC and DGC to be exactly the same, and the plots reflect this. However, since DGC leads to a more optimal grouping of Pauli terms, each group is allocated more shots. Thus, we expect DGC to outperform QC in the shot noise runs, which is what the data in Fig.~\ref{fig:qc_dgc_abi} indicates. The results are summarized in Table~\ref{tab:QCvs.DGC_Gray}.

\begin{table}[th]
\centering
\begin{tabular}{c|c}
    Simulation & Mean and Standard Deviation \\
    \hline
    Noiseless + QC & $-17.78384 \pm 0.00009$ \\
    Noiseless + DGC & $-17.78384 \pm 0.00009$ \\
    Shot Noise + QC & $-17.67 \pm 0.14$ \\
    Shot Noise + DGC & $-17.72 \pm 0.13$
\end{tabular}
    \caption{Comparing the QC and DGC schemes using the Gray encoding, in the case of the n$+\alpha$ system for $N=8$ (three qubits), $\hbar \omega = 12$, and $K=1$, corresponding to Figure~\ref{fig:qc_dgc_abi}. The total number of shots per variational run is $10^6$. The mean and standard deviation reported are calculated using the last $100$ iterations.}
    \label{tab:QCvs.DGC_Gray}
\end{table}

\begin{figure*}[bht]
    \includegraphics[width=0.75\textwidth]{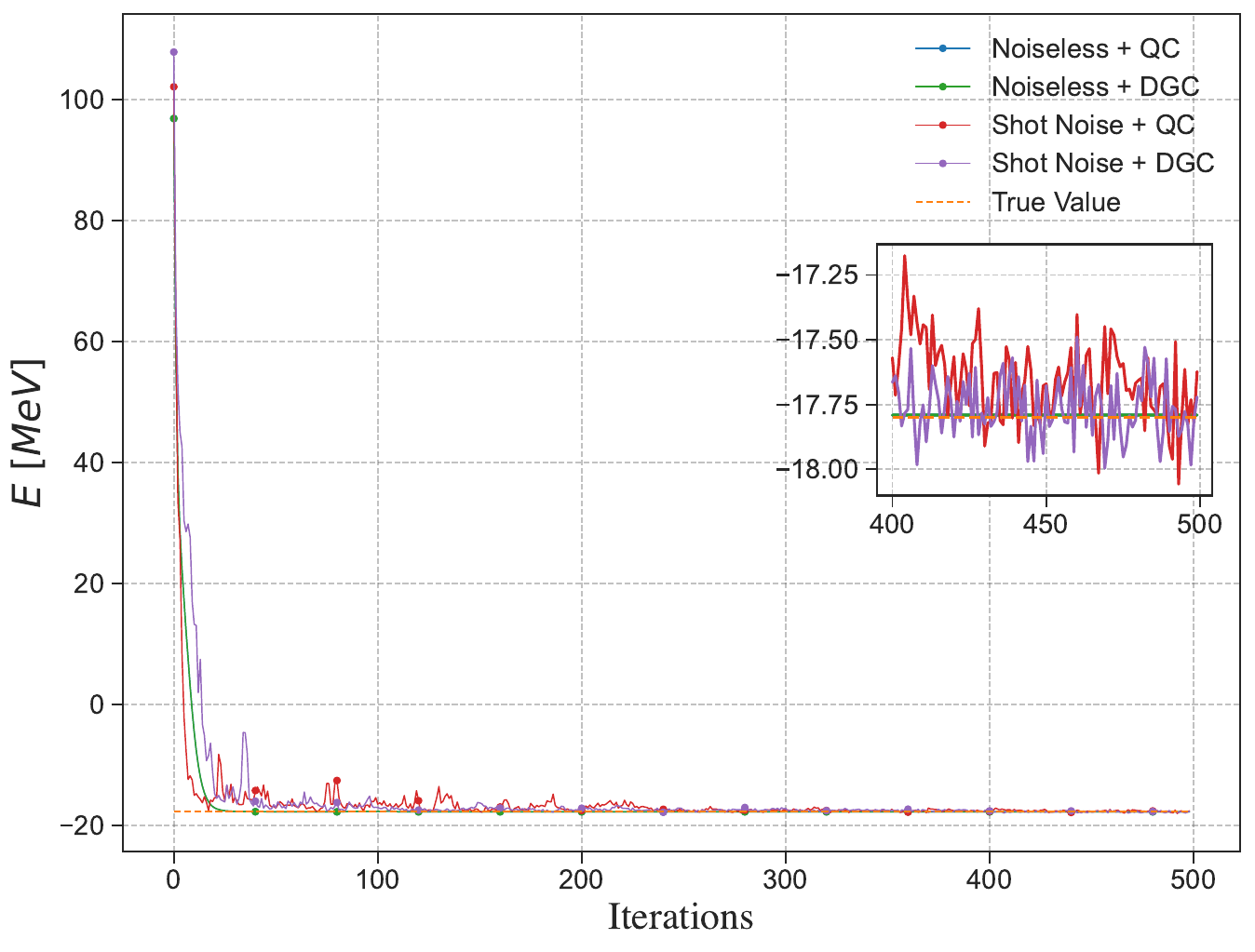}
    \caption{Comparison of the QC and DGC schemes for simulations with the Gray encoding of the n$+\alpha$ system using a model space of $N=8$, $K=1$, and $\hbar \omega = 12$MeV. The different types of simulations are detailed in Sec. \ref{sec:simtype} (the curves for the two noiseless simulations are indistinguishable). The inset plots the last 100 iterations.}
    \label{fig:qc_dgc_abi}
\end{figure*}

Next, we compare the Gray and binary encoding using the QC scheme. From the results of Sec. \ref{sec:numCG}, we expect the Gray encoding to outperform the binary encoding, and the data in Fig.~\ref{fig:qc_gvb_abi} clearly corroborates this. The outcomes of these simulations are summarized in Table~\ref{tab:QC_Gray_vs_binary}. While the Gray encoding agrees with the noiseless result within $1\sigma$, the binary encoding leads to a larger standard deviation and  at least a $2\sigma$ agreement.

\begin{table}[th]
\centering
\begin{tabular}{c|c}
    Simulation & Mean and Standard Deviation \\
    \hline
    Noiseless & $-16.5990  \pm 8 \times 10^{5}$ \\
    Shot Noise + QC + Gray & $-16.31 \pm 0.34$ \\
    Shot Noise + QC + binary & $-15.80 \pm 0.57$ \\
\end{tabular}
    \caption{Comparing the Gray and binary encodings using the QC scheme, in the case of the n$+\alpha$ system for, $N=8$ (three qubits), $\hbar \omega = 12$, and $K=2$, corresponding to Figure~\ref{fig:qc_gvb_abi}. The total number of shots per variational run is $10^6$. The mean and standard deviation reported are calculated using the last $100$ iterations. }
    \label{tab:QC_Gray_vs_binary}
\end{table}

\begin{figure*}[bht]
    \includegraphics[width=0.75\textwidth]{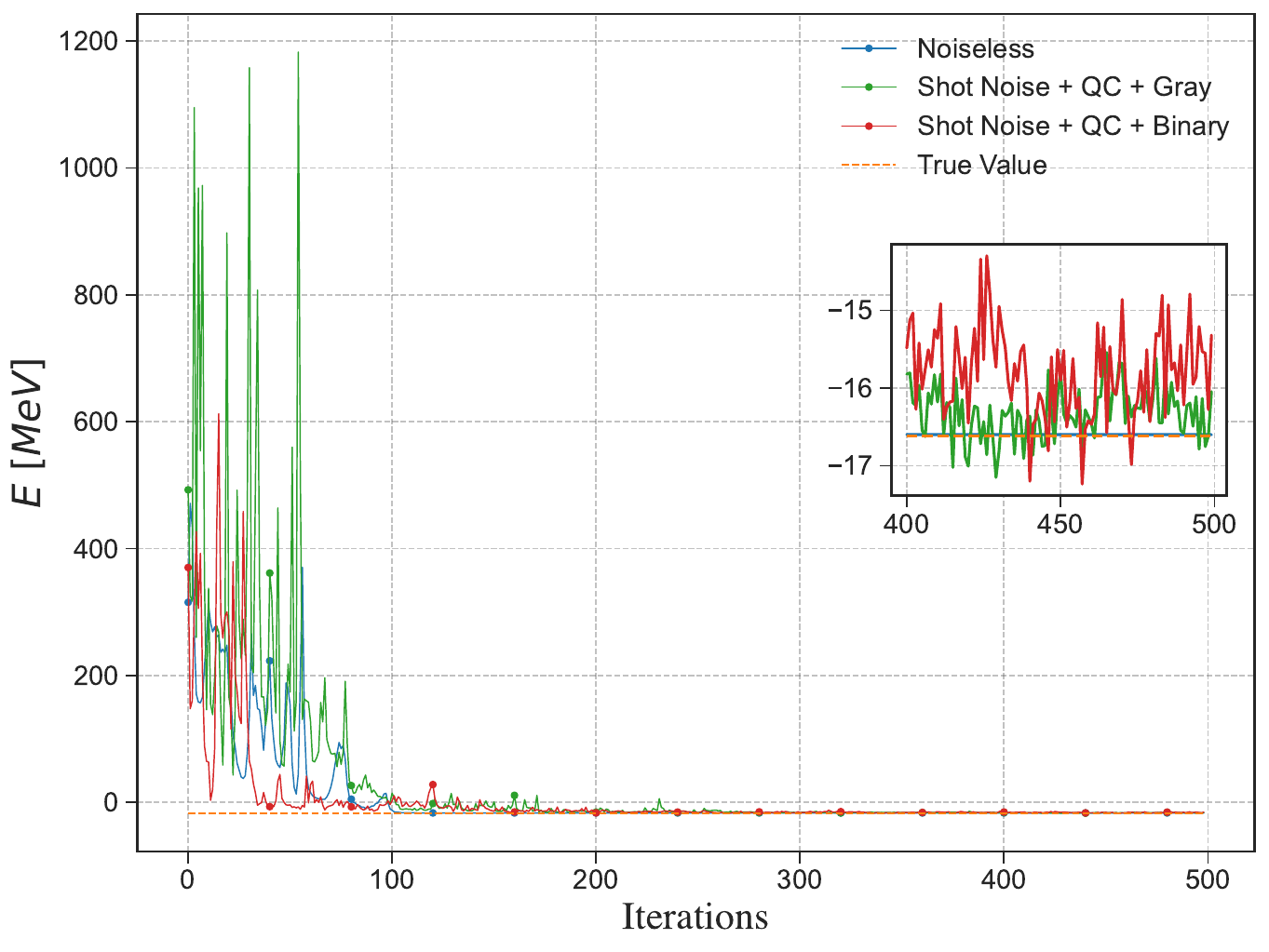}
    \caption{Comparison of the Gray and binary encodings using the QC scheme for simulations of the n$+\alpha$ system using a model space of $N=8$, $K=2$, and $\hbar \omega = 12$MeV. The different types of simulations are detailed in Sec. \ref{sec:simtype}. The inset plots the last 100 iterations.}
    \label{fig:qc_gvb_abi}
\end{figure*}

\section{Conclusions and future directions}

In this work, we developed a quantum algorithm to simulate neutron-nucleus dynamics on a quantum processor. We generalize the form of the nuclear Hamiltonian to any band-diagonal to full  matrices, which can accommodate a general central potential and a complete form of the chiral NN potential. We compare and contrast three encoding schemes, namely, the one-hot, binary, and Gray encodings. We show that the Gray encoding remains more resource efficient beyond the tridiagonal case, resolving an open problem posed in Ref.~\cite{PhysRevA.103.042405}. 

To estimate the measurement statistics, which is of key importance to successful simulations on quantum devices, we provide an extensive numerical analysis of the number of Pauli terms and qubit-wise commuting sets in the Hamiltonian as a function of the matrix size and the number of off-diagonals. We show that 
when the number of off-diagonals $2K+1$ exceeds the size of the matrix $N$, the number of Pauli terms and commuting sets saturate, with the Gray/binary encoding having fewer Pauli terms than the one-hot encoding. Beyond this point, more off-diagonals can be added to the problem, improving the approximation, without further load to the quantum device. 

We also introduce a new commutativity scheme, DGC, that allows for a more optimal grouping of Pauli strings at the cost of a more complex diagonalizing unitary, as compared to the qubit-commutativity scheme. We show that for small bandwidths ($K<N/2$), the Gray encodings leads to the same number of Pauli strings as compared to the binary encoding, a lower number of QC sets, and the same number of DGC sets but with a lower number of two-qubit gates. While the number of commuting
sets for the one-hot encoding is always three, we note that these measurements are on $N$ qubits, as opposed to $\operatorname{log}_2(N)$ qubits.

To demonstrate the efficacy of the Gray encoding, the one-hot and Gray encodings are compared in quantum simulations of the lowest $\half^+$ state of n+$^{14}$C, only bound by $1.04$ MeV in the $K=3$ approximate potential, for $N=8$ basis states. Indeed, compared to the Gray-code simulations with three qubits, the one-hot simulations with eight qubits are much slower and show worse performance in the presence of noise. 
In addition, we perform quantum simulations using a shot-noise and noisy simulator to inform the suitability of these simulations in the far-term error-corrected and NISQ regimes, respectively. It is remarkable that for  n+$^{10,12}$C and n+$\alpha$, we find that the shot-noise energies practically reproduce the corresponding exact value, whereas the noisy simulations coupled with the noise-resilient training method yield energies that deviate by less than one MeV. Finally, for the bound $\half^+$ orbit of the neutron-alpha optical potential deduced \textit{ab initio}, we report energy from the noisy quantum simulation that lies only within $1\sigma$ of the shot-noise outcome across a 14\% \hw\ variation. For this case, we also show that simulations with the DGC scheme outperform those using the QC scheme, and that the Gray encoding (for the QC scheme) leads to better precision and agreement with the noiseless outcome compared to the binary encoding.

Going forward from here, one could simulate systems of three clusters or larger (multi-channel reaction descriptions or multiple pairs of nucleons for reaching heavy nuclei) to give a better understanding of the scaling of the problem and whether the Gray efficacy shown here for two clusters propagates to more complex systems. It is also important to explore the dependence on the energy scale, and in particular, to seek further improvements to manage very weakly bound states and resonances. Another open theoretical question is about the optimality of the Gray code. In this work, we use the binary reflective Gray code, but any Gray code will lead to the same performance. 

\section{Code Availability}

The code for the simulations can be found as a Zenodo repository \cite{rethinasamy_2024_13696326}.

\section{Acknowledgements}

SR thanks Prerna Agarwal, Stav Haldar, Kaiyuan Ji, Dhrumil Patel, Aby Philip, and Vishal Singh for helpful and illuminating discussions. SR specifically thanks Kaiyuan Ji for the idea behind the proof of Lemma~\ref{lem:subending2}. This work was supported by the U.S.~Department of Energy under award DE-SC0023694. We are grateful to M.~Burrows for calculating the \textit{ab initio} n+$\alpha$ non-local optical potentials, which were enabled by high performance computational resources provided by LSU, the National Energy Research Scientific Computing Center (under the DOE award DE-AC02-05CH11231), as well as the Frontera computing project (under the NSF award OAC-1818253). SR and MMW acknowledge support
from the School of Electrical and Computer Engineering
at Cornell University, from the National Science Foundation under
Grant No.~2315398, and from AFRL under agreement no.~FA8750-23-2-0031.

This material is based on research sponsored by Air Force Research Laboratory under agreement number FA8750-23-2-0031. The U.S.~Government is authorized to reproduce and distribute reprints for Governmental purposes notwithstanding any copyright notation thereon. The views and conclusions contained herein are those of the authors and should not be interpreted as necessarily representing the official policies or endorsements, either expressed or implied,
of Air Force Research Laboratory or the U.S. Government.

\bibliography{Ref,lsu_latest}
\appendix

\setcounter{secnumdepth}{1}
\section{Definitions and Lemmas}
\label{app:LemDef}

In this section, we introduce new notation that we use throughout the paper. We also provide a wide range of lemmas and proofs for the results from the main text. 

\begin{lemma}
\label{lem:OpCOB}
Let $O_i$ denote the projector onto the computational basis element $i$:
\begin{equation}
    O_i \coloneqq \outerprod{i}{i},
\end{equation}
where the right-hand side is understood to be the binary representation of $i$. For example, $O_4 = \outerprod{100}{100}$. Furthermore, define $\tilde{O}_j$ to be the Pauli string composed of $I$ and  $Z$  operators such that the bits of~$j$ determine if the operator at each position is $Z$ or $I$. For a concrete example,  consider that for $j=6_{10} = 110_{2}$, 
each $1$ is represented by $Z$, and each $0$ by $I$. The operator is then given by
\begin{equation}
    \tilde{O}_6 = ZZI.
\end{equation}
Then the following equality holds:
\begin{equation}
    \label{eq:OpCOB}
    O_i = \frac{1}{2^n} \sum_{j=0}^{2^n-1} (-1)^{i \cdot j} \tilde{O}_j,
\end{equation}
where $n$ is the size of the register for $O$ and $\tilde{O}$, and $i \cdot j$ denotes the bit-wise dot-product modulo 2 of the binary representations of $i$ and $j$.
\end{lemma}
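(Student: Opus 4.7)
\textbf{Proof plan for Lemma~\ref{lem:OpCOB}.} The plan is to reduce the $n$-qubit identity to a tensor product of single-qubit identities and then expand. First, writing the binary representation of $i$ as $i_0 i_1 \cdots i_{n-1}$, I would observe that the computational-basis projector factors as
\begin{equation}
    O_i = \outerprod{i}{i} = \bigotimes_{k=0}^{n-1} \outerprod{i_k}{i_k}.
\end{equation}
The single-qubit projectors can be written in terms of $I$ and $Z$ as $\outerprod{0}{0} = \tfrac{1}{2}(I+Z)$ and $\outerprod{1}{1} = \tfrac{1}{2}(I-Z)$, which combine into the uniform formula $\outerprod{b}{b} = \tfrac{1}{2}(I + (-1)^b Z)$ for $b \in \{0,1\}$, consistent with Eq.~\eqref{eq:Proj01}.

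Next, I would substitute this identity into the tensor product and distribute, obtaining
\begin{equation}
    O_i = \frac{1}{2^n} \bigotimes_{k=0}^{n-1}\bigl(I + (-1)^{i_k} Z\bigr) = \frac{1}{2^n} \sum_{j_0,\ldots,j_{n-1} \in \{0,1\}} \prod_{k=0}^{n-1}(-1)^{i_k j_k}\, \bigotimes_{k=0}^{n-1} Z^{j_k},
\end{equation}
where each term in the expansion is indexed by a choice $j_k \in \{0,1\}$ of whether the $k$th factor contributes $I$ (when $j_k=0$) or $(-1)^{i_k} Z$ (when $j_k=1$).

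Finally, I would identify the index tuple $(j_0,\ldots,j_{n-1})$ with the $n$-bit integer $j \in \{0,\ldots,2^n-1\}$ whose binary representation it is. The sign then collects as $\prod_{k}(-1)^{i_k j_k} = (-1)^{\sum_k i_k j_k} = (-1)^{i\cdot j}$, matching the bit-wise dot product defined in the statement. The operator $\bigotimes_{k} Z^{j_k}$ is by construction the Pauli string with $Z$ at exactly the positions where the binary representation of $j$ has a $1$, i.e., it is $\tilde O_j$. Assembling these identifications gives Eq.~\eqref{eq:OpCOB}.

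I do not anticipate any substantive obstacle; the only thing that requires care is fixing the bit-ordering convention consistently between the two sides of Eq.~\eqref{eq:OpCOB}, so that ``bit $k$ of $i$'' and ``bit $k$ of $j$'' refer to the same qubit position on both sides. Once the same convention is used for $O_i$, $\tilde O_j$, and the dot product $i\cdot j$, the computation above is a direct tensor-product expansion.
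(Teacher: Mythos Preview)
Your proposal is correct and follows essentially the same approach as the paper's proof: factor $O_i$ into single-qubit projectors, rewrite each as $\tfrac{1}{2}(I+(-1)^{i_k}Z)$, and expand the tensor product to obtain the sum over $\tilde O_j$ with signs $(-1)^{i\cdot j}$. The paper's version is simply a terser presentation of exactly these steps.
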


\begin{proof}
    Let $a = a_1a_2\cdots a_n$. Then
    \begin{align}
        O_a &= \outerprod{a}{a} \notag \\
        &= \outerprod{a_1}{a_1} \otimes \cdots \otimes \outerprod{a_n}{a_n} \notag \\
        &= \bigotimes_{i=1}^n \frac{I + (-1)^{a_i}Z}{2} \notag \\
        &= \frac{1}{2^n} \sum_{c=0}^{2^n-1} (-1)^{a \cdot c} \tilde{O}_c,
    \end{align}
    concluding the proof.
\end{proof} 

\begin{definition}[Projection Operators]
\label{def:projOp}
The projector onto a basis element $a$ on qubits $f$ is defined as
\begin{align*}
    P^a_{f} &\coloneqq \outerprod{a}{a}_{f} \\ 
    &= \bigotimes_{i=1} \outerprod{a_i}{a_i}_{f_i}.
\end{align*}
We also define a projector onto a subspace labelled by $b$ and $\bar{b}$ on qubits $f$ as follows:
\begin{equation}
    P^{b, \bar{b}}_{f} \coloneqq \left( \outerprod{b}{b} + \outerprod{\bar{b}}{\bar{b}} \right)_{f}.
\end{equation}
\end{definition}
\begin{definition}[Distance-$k$ operators]
\label{def:disOp}
In an $n$-qubit Gray or binary code (see Sec.~\ref{sec:encodings} for definitions), the set of distance-$k$ operators consists of all operators that connect two bitstrings that differ on $k$ bits:
\begin{equation}
    D(n, k) \coloneqq \left\{\outerprod{a}{b} + \outerprod{b}{a} : w(a \oplus b) = k\right\},
\end{equation}
where $a, b$ are bitstrings of length $n$, the variable $w$ denotes the Hamming weight, and $\oplus$ denotes bit-wise addition modulo 2.
For a two-qubit Gray code, an example of a distance-$1$ operator is
\begin{equation}
    \outerprod{01}{11} + \outerprod{11}{01}.
\end{equation}
We see that the bitstrings $01$ and $11$ differ in one position. The Hermitian operator above connects the two basis elements. The number of distinct distance-$k$ operators is given by
\begin{equation}
    \left\vert D(n, k) \right\vert = \binom{n}{k} 2^{n - 1}.
\end{equation}
To establish this equality, we first pick the $k$ bits that are different, which can be done in $\binom{n}{k}$ ways, the unflipped bits are in one of $2^{n-k}$ bitstrings, and the flipped bits can be chosen in $2^{k-1}$ ways since the operators connect two bitstrings with $k$-bits flipped. 
\end{definition}

\begin{remark}
\label{lem:dist-k-split}
    The set $D(n, k)$ of distance-$k$ operators  can be split into subsets depending on the set of $k$ flipped qubits. For example, consider $n = 3$ and $k=2$. The set $D(3, 2)$ consists of the following operators:
\begin{align}
    \outerprod{000}{011} &+ \outerprod{011}{000} ,\notag \\
    \outerprod{001}{010} &+ \outerprod{010}{001} ,\notag \\
    \outerprod{100}{111} &+ \outerprod{111}{100} ,\notag \\
    \outerprod{101}{110} &+ \outerprod{110}{101} ,\notag \\
    \outerprod{000}{110} &+ \outerprod{110}{000} ,\notag \\
    \outerprod{010}{100} &+ \outerprod{100}{010} ,\notag \\
    \outerprod{001}{111} &+ \outerprod{111}{001} ,\notag \\
    \outerprod{011}{101} &+ \outerprod{101}{011} ,\notag \\
    \outerprod{000}{101} &+ \outerprod{101}{000}, \notag \\
    \outerprod{001}{100} &+ \outerprod{100}{001}, \notag \\
    \outerprod{010}{111} &+ \outerprod{111}{010} ,\notag \\
    \outerprod{011}{110} &+ \outerprod{110}{011}.
\end{align}

As seen in Definition~\ref{def:disOp}, $\vert D(3, 2) \vert = 12$. We can split up the set of operators into three sets based on which two qubits are flipped -- $\{2, 3\}$, $\{1, 3\}$, $\{1, 2\}$. Thus, the operators of $D(3, 2)$ are then split as 
\begin{align}
    \{2, 3\} &= \bigg\{ \outerprod{000}{011} + \outerprod{011}{000}, 
    \outerprod{001}{010} + \outerprod{010}{001}, \notag \\
    &\outerprod{100}{111} + \outerprod{111}{100}, 
    \outerprod{101}{110} + \outerprod{110}{101}.\bigg\}, \notag \\
    \{1, 2\} &= \bigg\{ \outerprod{000}{110} + \outerprod{110}{000} , 
    \outerprod{010}{100} + \outerprod{100}{010} , \notag \\
    &\outerprod{001}{111} + \outerprod{111}{001} , 
    \outerprod{011}{101} + \outerprod{101}{011}. \bigg\}, \notag \\
    \{1, 3\} &= \bigg\{ \outerprod{000}{101} + \outerprod{101}{000},
    \outerprod{001}{100} + \outerprod{100}{001}, \notag \\
    &\outerprod{010}{111} + \outerprod{111}{010} ,
    \outerprod{011}{110} + \outerprod{110}{011}. \bigg\}.
\end{align}

Since $k=2$, each of the above operators can be thought of as two-qubit flips in a particular subspace. For example, $\outerprod{000}{011} + \outerprod{011}{000}$ acts like $IXX$ in the subspace spanned by $\{\ket{000}, \ket{011}\}$ since its action on any state in this subspace is the same as the action of $IXX$. This is a unifying feature---any distance-$\ell$ operator can be thought of as a product of $I$ (on the unchanged qubits) and $X$ (on the $\ell$ flipped qubits) acting on a particular subspace. Thus, an alternate way to represent the distance operators is as a composition of a projection onto the subspace of interest, followed by a string consisting of $X$ and $I$. For example,
\begin{align}
    \outerprod{000}{011} &+ \outerprod{011}{000} = (IXX) \circ (P^0_{\{1\}} \otimes P^{00, 11}_{\{2, 3\}}) , \notag \\
    \outerprod{001}{010} &+ \outerprod{010}{001} = (IXX) \circ (P^0_{\{1\}} \otimes P^{01, 10}_{\{2, 3\}}),
\end{align}
and the remaining operators can be constructed similarly.
\end{remark}

\begin{definition}
\label{def:FlDisOp}
In Remark~\ref{lem:dist-k-split}, we saw that $D(n, k)$ can be split into subsets depending on the set of $k$ flipped qubits. We label these subsets with a set $f$ of flipped qubits  of size $k$. Furthermore, we saw that the distance-$k$ operators can be written as a composition of a projector onto a particular subspace and a  Pauli string consisting of $X$ and $I$. Thus, we define the following sets of distance-$k$ operators for a fixed set of $k$ flipped qubits labelled by $f$:
\begin{multline}
    D(n, k, f) \coloneqq  \left\{ \left(I_{\bar{f}} \otimes X_f \right) P^a_{\bar{f}} \otimes P^{b, \bar{b}}_f \right. \\
    \left. : \forall a \in \{0, 1\}^{\vert \bar{f} \vert },\ \forall b \in \{0, 1\}^{\vert f \vert} \right\},
\end{multline}
where $P^a_{\bar{f}}$ and $P^{b, \bar{b}}_{f}$ are defined in Definition~\ref{def:projOp}.

For example, $D(3, 2, \{2, 3\})$ is the set consisting of the following operators:
\begin{align}
    \outerprod{000}{011} &+ \outerprod{011}{000} = IXX(P^0_{\{1\}} \otimes P^{00, 11}_{\{2, 3\}}) , \notag \\
    \outerprod{001}{010} &+ \outerprod{001}{010} = IXX(P^0_{\{1\}} \otimes P^{01, 10}_{\{2, 3\}}) , \notag \\
    \outerprod{100}{111} &+ \outerprod{111}{100} = IXX(P^1_{\{1\}} \otimes P^{00, 11}_{\{2, 3\}}) , \notag \\
    \outerprod{101}{110} &+ \outerprod{101}{110} =  IXX(P^1_{\{1\}} \otimes P^{01, 10}_{\{2, 3\}}) .
\end{align}
Motivated by the example above, we also use a shorthand to refer to a particular set $D(n, k, f)$ of operators  -- we define a string of I and X such that for all flipped qubits in~$f$, we label them by X. Therefore, $D(3, 2, \{2, 3\}) \equiv IXX$. We use the notations interchangeably.
\end{definition}

\begin{lemma} 
\label{lem:DisOpPosDep}
For an $n$-qubit Gray or binary code, the distance-$k$ operators are expressed as a linear combination of a set of Pauli strings that only depends on the set of the $k$ flipped  qubits. Alternatively, a set $f$ of flipped qubits  completely determines the set of Pauli strings.
\end{lemma}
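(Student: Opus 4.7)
The plan is to start from the explicit form given in Definition~\ref{def:FlDisOp}, namely that every element of $D(n,k,f)$ has the shape
\begin{equation}
\bigl(I_{\bar f}\otimes X_f\bigr)\bigl(P^a_{\bar f}\otimes P^{b,\bar b}_f\bigr)
= P^a_{\bar f}\otimes\bigl(X_f P^{b,\bar b}_f\bigr),
\end{equation}
and then decompose the two tensor factors separately by applying Lemma~\ref{lem:OpCOB}. On the $\bar f$ side, the lemma immediately expresses $P^a_{\bar f}$ as a signed sum over \emph{all} $I/Z$ Pauli strings supported on the qubits of $\bar f$; only the signs depend on the bitstring $a$, while the underlying set of Pauli strings is determined entirely by $\bar f$ (equivalently by $f$).

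On the $f$ side, I would first write $P^{b,\bar b}_f = P^b_f+P^{\bar b}_f$ and apply Lemma~\ref{lem:OpCOB} to each term. Using $\bar b\cdot c = b\cdot c \oplus w(c)\pmod 2$, the odd-weight $\tilde O_c$ contributions cancel and the even-weight ones double, so
\begin{equation}
P^{b,\bar b}_f \;=\; \frac{1}{2^{k-1}}\sum_{\substack{c\in\{0,1\}^k\\ w(c)\text{ even}}}(-1)^{b\cdot c}\,\tilde O_c,
\end{equation}
which is a linear combination of $I/Z$ strings on $f$ having an even number of $Z$'s, with signs that depend on $b$ but a string set that does not. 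Multiplying on the left by $X_f$ sends each $\tilde O_c$ to a Pauli string on $f$ whose entries are $X$ where $c_i=0$ and (up to an overall phase $(-i)^{w(c)}$) $Y$ where $c_i=1$. Because $w(c)$ is even throughout the sum, the resulting operators are $X/Y$ strings on $f$ with an even number of $Y$'s, with the global phase reducing to $\pm 1$.

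Combining the two factors, any element of $D(n,k,f)$ lives in the linear span of Pauli strings that are $I$ or $Z$ on every qubit of $\bar f$ and $X$ or $Y$ on every qubit of $f$ (with an even number of $Y$'s among the latter). That support pattern depends only on the set $f$ of flipped qubits; the parameters $a$ and $b$ that pick out a particular distance-$k$ operator affect only the signed coefficients. Since both the binary and Gray encodings are computational-basis relabelings to which the above derivation applies verbatim, the claim holds for either code.

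The only mildly subtle step will be the parity book-keeping in the decomposition of $P^{b,\bar b}_f$, since one has to see that the odd-$w(c)$ terms cancel and that the phases produced by $XZ=-iY$ combine with the remaining $w(c)$ to give only real overall signs; everything else is immediate from Lemma~\ref{lem:OpCOB} and the tensor structure of Definition~\ref{def:FlDisOp}.
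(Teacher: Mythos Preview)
Your proposal is correct and follows essentially the same route as the paper: separate the operator into its $\bar f$ and $f$ tensor factors, apply Lemma~\ref{lem:OpCOB} to each, and observe that on the flipped qubits only the even-parity $\tilde O_c$ survive so that the set of Pauli strings is fixed by $f$ while $a,b$ affect only the signs. Your treatment is slightly more explicit than the paper's in that you carry the $X_f$ multiplication through to identify the surviving strings as $X/Y$ strings on $f$ with an even number of $Y$'s (and check the phase is real), whereas the paper stops at ``$X^{\otimes k}\tilde O_j$ with even-parity $j$''; this is a cosmetic difference, not a different argument.
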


\begin{proof}
Consider the set $D(n, k, f)$ of operators where $f$ is the set of flipped qubits. From Definition~\ref{def:FlDisOp}, we know that the operators in this set are of the form
\begin{equation}
    \left(I_{\bar{f}} \otimes X_f \right) P^a_{\bar{f}} \otimes P^{b, \bar{b}}_f,
\end{equation}
where $a$ and $b$ are bitstrings of length $\sof = k$ and $\sofbar = n-k$, respectively. On the flipped qubits, the operator is of the form
\begin{equation}
    X^{\otimes k} (\outerprod{b_1 \cdots b_k}{b_1 \cdots b_k} + \outerprod{\overline{b_1} \cdots \overline{b_k}}{\overline{b_1} \cdots \overline{b_k}}),
\end{equation}
where $b_i \in \{0, 1\}$. We now show that all these operators lead to the same set of Pauli strings, independent of the values of $\{b_i\}_i$. To show this, consider that
\begin{align}
    &X^{\otimes k}(\outerprod{b_1 \cdots b_k}{b_1 \cdots b_k} + \outerprod{\overline{b_1} \cdots \overline{b_k}}{\overline{b_1} \cdots \overline{b_k}}) \notag \\
    &= X^{\otimes k}(O_b + O_{\overline{b}}) \notag \\
    &= \frac{1}{2^k} \sum_j \left( (-1)^{b\cdot j} + (-1)^{\overline{b}\cdot j} \right) X^{\otimes k} \tilde{O}_j,
\end{align}
where the second equality follows from  Lemma~\ref{lem:OpCOB}. Upon expanding, we see that the coefficient of any $\tilde{O}_j$ is non-zero if and only if the binary representation of $j$ has even parity. The signs of the different $\tilde{O}_j$ depend on $b$, but the set of surviving $\tilde{O}_j$ is independent of $b$. The number of terms left is $2^{k-1}$.

Next, if we consider the unflipped qubits, the operators are of the form
\begin{equation}
    \outerprod{a}{a} = \bigotimes_i \outerprod{a_i}{a_i}.
\end{equation}
Since the two possible cases $\outerprod{0}{0}$ and $\outerprod{1}{1}$ are both linear combinations of $I$ and $Z$, and only differ by a negative sign, the set of Pauli strings is independent of $a$.

To summarize, for a given $n$ and $k$, all the operators in the set $D(n, k, f)$ are composed of the same Pauli strings and different sets $f$ leads to different Pauli strings. Thus, the set of Pauli strings depends only on the set $f$, i.e., on the position of the flipped qubits.
\end{proof} 

\begin{corollary}
\label{cor:samePauliSubset}
As a result of Lemma~\ref{lem:DisOpPosDep}, we see that if a Hamiltonian contains an operator from the set $D(n, k, f)$, any other operator from the same set can be added to the Hamiltonian without an increase in the number of Pauli terms. Since every operator from the set $D(n, k, f)$ consists of the same Pauli strings, adding another operator from the same set changes only the coefficients, and not the set of Pauli strings themselves.
\end{corollary}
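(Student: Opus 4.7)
The proof is essentially a one-line consequence of Lemma~\ref{lem:DisOpPosDep}, so my plan is simply to unpack what that lemma already proved. First I would invoke the lemma to assert the existence of a common Pauli set $\mathcal{P}(n,k,f)$ such that every operator $O \in D(n,k,f)$ expands as $O = \sum_{P \in \mathcal{P}(n,k,f)} \alpha_{O,P}\, P$, with the support of the expansion (the collection of $P$ for which $\alpha_{O,P} \neq 0$) being identical across all $O \in D(n,k,f)$; only the signs $\alpha_{O,P} \in \{+c_P, -c_P\}$ vary from one operator to another. Inspecting the proof of the lemma, this common support consists precisely of the Pauli strings given by $X^{\otimes k}$ on the flipped qubits $f$ (multiplied by an even-parity $I/Z$ string on those same qubits), tensored with an arbitrary $I/Z$ string on the unflipped qubits $\bar{f}$.

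With that in hand, I would argue as follows. Suppose the Hamiltonian $H$ already contains a term $c_1 O_1$ with $O_1 \in D(n,k,f)$. Then the Pauli decomposition of $H$ includes each $P \in \mathcal{P}(n,k,f)$ with coefficient $c_1\,\alpha_{O_1,P}$ (possibly combined with contributions from other summands of $H$ that happen to produce the same $P$). Adjoining a further term $c_2 O_2$ with $O_2 \in D(n,k,f)$ then merely updates each such coefficient by $c_2\,\alpha_{O_2,P}$ and contributes nothing outside of $\mathcal{P}(n,k,f)$. Hence the set of distinct Pauli strings appearing in $H$ cannot grow, which is precisely the statement of the corollary.

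The only subtlety I anticipate is the possibility that an updated coefficient $c_1\,\alpha_{O_1,P} + c_2\,\alpha_{O_2,P}$ might accidentally vanish for some $P$; but since the corollary asserts only ``no increase'' rather than ``no change'', any such cancellation only strengthens the conclusion and so poses no real obstacle. I would close with a brief remark tying the result back to the main text: this corollary is precisely the structural fact underlying the saturation of the Pauli-term count observed in Fig.~\ref{fig:NumPauliH_NK} and Tables~\ref{tab:details_OHencodings}--\ref{tab:details_BGencodings} once $K$ exceeds $N/2$, since each additional off-diagonal then introduces only operators drawn from sets $D(n,k,f)$ whose Pauli support is already represented in $H$.
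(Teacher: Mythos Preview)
Your proposal is correct and mirrors the paper's own reasoning: the corollary in the paper carries no separate proof, the justification being embedded in its statement (``Since every operator from the set $D(n,k,f)$ consists of the same Pauli strings, adding another operator from the same set changes only the coefficients''), which is exactly what you unpack by invoking Lemma~\ref{lem:DisOpPosDep}. Your additional remarks on possible cancellations and the connection to the saturation at $K>N/2$ are sound elaborations but go beyond what the paper records here.
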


\begin{lemma}
    The set of Pauli strings corresponding to $D(n, k, f)$ is of size $2^{n-1}$.
    \label{lem:numPauli_f_subset}
\end{lemma}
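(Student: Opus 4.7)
The plan is to extract the count directly from the computation already carried out in the proof of Lemma~\ref{lem:DisOpPosDep}, which decomposed each operator in $D(n,k,f)$ into a tensor product of an action on the flipped qubits (those in $f$) and an action on the unflipped qubits (those in $\bar f$). Since the set of Pauli strings is determined independently on these two factors, I would count the strings on each factor and multiply.

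First I would handle the flipped qubits. In Lemma~\ref{lem:DisOpPosDep} it was shown, using Lemma~\ref{lem:OpCOB}, that on the $k$ flipped qubits the operator $X^{\otimes k}(O_b + O_{\bar b})$ expands as
\begin{equation}
    \frac{1}{2^{k}} \sum_{j} \left( (-1)^{b\cdot j} + (-1)^{\bar b \cdot j} \right) X^{\otimes k}\, \tilde O_j ,
\end{equation}
and that the coefficient of $\tilde O_j$ is nonzero iff $j$ has even Hamming weight. Since $\bar b = b \oplus 1^k$, we have $(-1)^{\bar b\cdot j} = (-1)^{b\cdot j}(-1)^{|j|}$, so the coefficient is nonzero precisely when $|j|$ is even. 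The number of bitstrings of length $k$ with even Hamming weight is exactly $2^{k-1}$, giving $2^{k-1}$ Pauli strings supported on the flipped qubits.

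Next I would handle the unflipped qubits. There the operator is $P^{a}_{\bar f} = \outerprod{a}{a}$ on $n-k$ qubits, and by Lemma~\ref{lem:OpCOB} this expands as $\tfrac{1}{2^{n-k}}\sum_{j}(-1)^{a\cdot j}\tilde O_j$, with every $\tilde O_j$ appearing with nonzero coefficient. This contributes $2^{n-k}$ distinct $I/Z$ Pauli strings on the unflipped qubits.

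Finally, since the overall Pauli string is a tensor product of a string on $f$ and a string on $\bar f$, and every combination of the two factors actually occurs as a tensor product term in the expansion, the total number of distinct Pauli strings is $2^{k-1}\cdot 2^{n-k} = 2^{n-1}$, as claimed. I do not anticipate a real obstacle here: the whole proof is a bookkeeping argument on top of the expansions already performed in Lemma~\ref{lem:DisOpPosDep}; the only point requiring mild care is verifying the even-parity characterization of the surviving $\tilde O_j$ on the flipped qubits, which follows immediately from $\bar b = b \oplus 1^k$.
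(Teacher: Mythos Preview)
Your proof is correct and follows essentially the same approach as the paper: split into flipped and unflipped qubits, count $2^{k-1}$ Pauli strings on the flipped factor and $2^{n-k}$ on the unflipped factor, and multiply. Your write-up is in fact more careful than the paper's, since you explicitly justify the even-parity characterization via $\bar b = b\oplus 1^k$, whereas the paper simply cites Lemma~\ref{lem:DisOpPosDep} for the $2^{k-1}$ count and phrases the $2^{n-k}$ count somewhat loosely.
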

\begin{proof}
    Consider the set of operators $D(n, k, f)$, where $f$ is the set of flipped qubits. From Definition~\ref{def:FlDisOp}, we know that the operators in this set are of the form
    \begin{equation}
        \left(I_{\bar{f}} \otimes X_f \right) P^a_{\bar{f}} \otimes P^{b, \bar{b}}_f,
    \end{equation}
    where $a$ and $b$ are bitstrings of length $\sof = k$ and $\sofbar = n-k$, respectively. There are $2^{n - k}$ possible choices for $a$ and the flipped operator is of size $2^{k-1}$, as discussed in Lemma~\ref{lem:DisOpPosDep}. Thus, the total number of operators is 
    \begin{equation}
        2^{n-k} \times 2^{k-1} = 2^{n-1},
    \end{equation}
    concluding the proof.
\end{proof} 

\begin{definition}[Alternate Representation]
     An encoding represents a one-to-one mapping between Fock basis elements and computational basis elements. As seen in the main text, a Gray basis $\mathcal{G}_n$ on $n$ bits  is a list of $2^n$ basis elements:
    \begin{equation}
        \mathcal{G}_n = (g_0, g_1, \ldots, g_{2^n - 1}),
    \end{equation}
    where each $g_i$ differs from its neighbors by a single bit. Another example considered in the main text is the binary encoding. The binary basis $\mathcal{B}_n$ on $n$ bits is a list of $2^n$ basis elements:
    \begin{equation}
        \mathcal{B}_n = (b_0, b_1, \ldots, b_{2^n - 1}),
    \end{equation}
    where $b_i$ is the binary representation of the integer $i$. 
    
    A basis encoding can alternatively be represented using a sequence of flipped bits. This alternate representation is defined as $S_n$. Thus, for an encoding of size $2^n$, the alternate representation is of size $2^n-1$. 
    
    The alternate representation for the Gray code is straightforward. Since any two neighboring entries only have a single bit flipped, the entries of the alternate representation are the flipped bits. For example, the entry that connects $010$ and $011$ is $3$. 
    
    For the binary code, we use a different notation. Each entry of the alternate representation 
    is the decimal equivalent of the bit-wise addition modulo 2 of the two entries of the  encoding. For example, the entry that connects $1011$ and $1100$ is $0111_2 \equiv 7_{10}$. 
    
    For $N=8$, the alternate representation for the two encodings can be seen in Table~\ref{tab:AltRep}.
    \begin{table}[h]
        \centering
        \begin{tabular}{S|S|>{\centering\arraybackslash}p{0.1\textwidth}} \hline
        Basis & Gray & Binary \\ \hline
        $\ket{0} \leftrightarrow \ket{1}$ & $1$ & $1$ \\
        $\ket{1} \leftrightarrow \ket{2}$ & $2$ & $3$ \\
        $\ket{2} \leftrightarrow \ket{3}$ & $1$ & $1$ \\
        $\ket{3} \leftrightarrow \ket{4}$ & $3$ & $7$ \\
        $\ket{4} \leftrightarrow \ket{5}$ & $1$ & $1$ \\
        $\ket{5} \leftrightarrow \ket{6}$ & $2$ & $3$ \\
        $\ket{6} \leftrightarrow \ket{7}$ & $1$ & $1$ \\
        \hline
        \end{tabular}
        \caption{Alternate representations for the Gray and binary code on three qubits.}
        \label{tab:AltRep}
\end{table}
\end{definition}

\begin{lemma}
    \label{lem:reverse_alt_rep}
    The alternate representation for $\overline{\mathcal{G}_n}$ and $\overline{\mathcal{B}_n}$ is the same as $\mathcal{G}_n$ and $\mathcal{B}_n$, respectively.
\end{lemma}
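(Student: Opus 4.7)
The plan is to reduce the claim to a palindrome property and then handle the two encodings separately. The key observation is that for any ordered list $L=(\ell_0,\ell_1,\ldots,\ell_{M-1})$ of distinct bitstrings whose consecutive entries differ in a way that admits an alternate-representation entry (single flipped bit for Gray, bitwise XOR for binary), the entry at position $i$ of $\mathcal{A}(L)$ depends only on the \emph{unordered} pair $\{\ell_i,\ell_{i+1}\}$. Reversing the list preserves each such pair but visits them in reverse order, so
\begin{equation*}
    \mathcal{A}(\overline{L}) \;=\; \operatorname{reverse}\bigl(\mathcal{A}(L)\bigr).
\end{equation*}
Consequently, $\mathcal{A}(\overline{L})=\mathcal{A}(L)$ is equivalent to $\mathcal{A}(L)$ being a palindrome, so it suffices to prove that $\mathcal{A}(\mathcal{G}_n)$ and $\mathcal{A}(\mathcal{B}_n)$ are palindromes.

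For the Gray code, I would proceed by induction on $n$. The base case $n=1$ gives $\mathcal{A}(\mathcal{G}_1)=(1)$, trivially palindromic. For the inductive step, the recursion $\mathcal{G}_n=(\mathcal{G}_{n-1}\cdot 0,\,\overline{\mathcal{G}_{n-1}}\cdot 1)$ splits the transitions into three groups: transitions within the first half (which agree with $\mathcal{A}(\mathcal{G}_{n-1})$ because the appended bit is constant and equal to $0$), the single transition between the two halves (which flips only the newly appended bit at position $n$), and transitions within the second half (which agree with $\mathcal{A}(\overline{\mathcal{G}_{n-1}})$). Therefore
\begin{equation*}
    \mathcal{A}(\mathcal{G}_n) \;=\; \mathcal{A}(\mathcal{G}_{n-1})\,\cdot\,(n)\,\cdot\,\mathcal{A}(\overline{\mathcal{G}_{n-1}}).
\end{equation*}
By the induction hypothesis $\mathcal{A}(\mathcal{G}_{n-1})$ is a palindrome, so by the reversal observation above $\mathcal{A}(\overline{\mathcal{G}_{n-1}})=\operatorname{reverse}(\mathcal{A}(\mathcal{G}_{n-1}))=\mathcal{A}(\mathcal{G}_{n-1})$. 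Hence $\mathcal{A}(\mathcal{G}_n)$ has the form $P\cdot(n)\cdot P$ with $P$ palindromic, which is itself palindromic.

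For the binary code I would argue by direct computation. Going from $i$ to $i+1$ flips precisely the maximal block of trailing $1$s of $i$ together with the subsequent $0$ bit, so if $t(i)$ denotes the number of trailing $1$s in the binary representation of $i$, then the $i$-th entry of $\mathcal{A}(\mathcal{B}_n)$ equals $2^{t(i)+1}-1$. The palindrome property then reduces to the identity $t(i)=t(2^n-2-i)$ for $0\le i\le 2^n-2$, which I would verify by examining the last $t(i)+1$ bits: if $i\equiv 2^{t(i)}-1\pmod{2^{t(i)+1}}$ then
\begin{equation*}
    2^n-2-i \;\equiv\; -2-(2^{t(i)}-1) \;\equiv\; 2^{t(i)}-1 \pmod{2^{t(i)+1}},
\end{equation*}
so $2^n-2-i$ also has exactly $t(i)$ trailing $1$s (the bit immediately above the trailing block is $0$ in both cases). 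Applying this symmetrically gives equality, and the palindrome claim follows.

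The step I expect to require the most care is the binary-code computation, since it hinges on a short but easily mis-stated modular arithmetic argument about trailing bits; the Gray case is essentially bookkeeping once the reversal observation is in hand.
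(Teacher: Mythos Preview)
Your proposal is correct but takes a different route from the paper. You reduce the claim to showing that the alternate representation is a palindrome, then establish this by induction (Gray) and a trailing-ones computation (binary). The paper instead argues structurally: it observes that $\overline{\mathcal{G}_n}$ coincides with $\mathcal{G}_n$ after flipping the appended last bit in every entry, and that $\overline{\mathcal{B}_n}$ coincides with $\mathcal{B}_n$ after complementing all bits in every entry; since the alternate representation records only \emph{which} bits change between consecutive entries, any global bit complementation leaves it invariant, and the result follows in one line for each code. The paper's argument is shorter and avoids both the induction and the modular-arithmetic check; on the other hand, your approach explicitly exhibits the palindrome structure $\mathcal{A}(\mathcal{G}_n)=P\cdot(n)\cdot P$, which is essentially the content of the paper's subsequent pivot lemma, so you are effectively proving that structural fact along the way rather than relying on it later.
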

\begin{proof}
    The binary reflective Gray code on $n$ bits is given by
    \begin{equation}
        \mathcal{G}_{n} = (\mathcal{G}_{n-1} \cdot 0, \overline{\mathcal{G}_{n-1}}, \cdot 1),
    \end{equation}
    where $\overline{G}_{n}$ is the Gray code on $n$ bits with the entries in reverse order. Reversing the entire code, we find that
    \begin{equation}
        \overline{\mathcal{G}_n} = (\mathcal{G}_{n-1} \cdot 1, \overline{\mathcal{G}_{n-1}}, \cdot 0).
    \end{equation}
    Thus, the reversed code has the same structure as the existing code with the first half entries ending with $1$, and the second half ending with $0$. The entries of alternate representation indicate which bits are flipped and therefore, is unaffected if $0 \leftrightarrow 1$. Thus, the alternate representation for the reversed Gray code is the same as the original.

    For the binary code $\mathcal{B}_n$, the reversed binary code is the same as the original with all zeros and ones flipped. Again, the entries of alternate representation indicate which bits are flipped and therefore, is unaffected if $0 \leftrightarrow 1$. Thus, the alternate representation for the reversed binary code is the same as the original.
\end{proof}

\begin{definition}[Subsequences]
    An entry in the alternate representation of an encoding represents the flipped qubits between the two corresponding entries of the encoding. We now define a subsequence of the alternate representation as an ordered subset that connects the two corresponding entries of the encoding. For example, in the Gray encoding, if $3$ connects $010$ and $011$, and $1$ connects $011$ and $111$, then $\underline{3, 1}$ connects $010$ and $111$. In this work, we represent subsequences using an underscore. For example, in the Gray encoding, the following subsequence
    \begin{equation}
        (1, 2, \underline{1, 3, 1, 2}, 1)
    \end{equation}
    connects the basis states $110$ and $101$. 
\end{definition}

\begin{lemma}
\label{rmk:equivalance}
    There exists an equivalence between subsequences of the alternate representation of an $n$-qubit encoding and Pauli strings of the form $\{I, X\}^{\otimes n} \setminus I^{\otimes n}$.
\end{lemma}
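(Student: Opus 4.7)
The plan is to construct the correspondence explicitly via the cumulative XOR of a subsequence's entries. Define the map
\[
\Phi: \text{subsequences of the alternate representation} \to \{I,X\}^{\otimes n},
\]
as follows. Each entry of the alternate representation is naturally identified with a non-zero $n$-bit string (a one-hot string in the Gray case, a general non-zero string in the binary case). Given a subsequence $(s_1, s_2, \ldots, s_\ell)$, let $p \coloneqq s_1 \oplus s_2 \oplus \cdots \oplus s_\ell \in \{0,1\}^n$, and set $\Phi(s_1,\ldots,s_\ell)$ to be the Pauli string obtained by placing $X$ in every position where $p$ has a $1$ and $I$ elsewhere.

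First I would show that $\Phi$ always outputs an element of $\{I,X\}^{\otimes n}\setminus I^{\otimes n}$. If the subsequence connects encoding entries $\mathcal{E}_a$ and $\mathcal{E}_b$, then by induction on $\ell$ (peeling off one flip at a time), the cumulative XOR $p$ equals $\mathcal{E}_a \oplus \mathcal{E}_b$; this is immediate for the Gray code (one-bit flips compose by XOR) and follows from the definition of the alternate representation in the binary case (each entry is precisely the XOR of two consecutive binary encodings, so the telescoping XOR of entries $s_{i+1}\oplus \cdots \oplus s_j$ equals $\mathcal{E}_i\oplus \mathcal{E}_j$). Since a subsequence by definition connects two distinct positions of the encoding, $\mathcal{E}_a\neq\mathcal{E}_b$, so $p\neq 0^n$ and $\Phi$ lands in $\{I,X\}^{\otimes n}\setminus I^{\otimes n}$.

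Next I would prove surjectivity. Fix any $Q\in\{I,X\}^{\otimes n}\setminus I^{\otimes n}$, and let $p_Q\in\{0,1\}^n\setminus\{0^n\}$ encode the positions of its $X$'s. Because both the Gray and binary encodings enumerate every element of $\{0,1\}^n$ exactly once (this is the defining feature of each code, and for the Gray code it is recorded in the recursive construction $\mathcal{G}_n = (\mathcal{G}_{n-1}\cdot 0,\overline{\mathcal{G}_{n-1}}\cdot 1)$ appearing in the excerpt), we may choose positions $a<b$ in the encoding such that $\mathcal{E}_a = 0^n$ and $\mathcal{E}_b = p_Q$ (or any other pair with $\mathcal{E}_a\oplus\mathcal{E}_b = p_Q$). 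The contiguous subsequence $(s_{a+1},\ldots,s_b)$ of the alternate representation then satisfies $\Phi(s_{a+1},\ldots,s_b)=Q$ by the previous paragraph, exhibiting the desired preimage.

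The main obstacle, which is not really an obstacle but a potential point of confusion, is that $\Phi$ is generally many-to-one: distinct subsequences can share the same cumulative XOR, and Lemma~\ref{lem:reverse_alt_rep} shows that reversing order preserves the alternate representation, producing further coincidences. Hence ``equivalence'' here is to be interpreted as the existence of a well-defined, surjective correspondence between subsequences and non-identity $\{I,X\}^{\otimes n}$ strings, rather than a bijection. Once this is clarified, the two steps above establish the claim.
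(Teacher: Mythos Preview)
Your proof is correct and takes essentially the same approach as the paper: both define the correspondence by taking the cumulative XOR of the entries of a subsequence and reading off the resulting bitstring as an $\{I,X\}$ Pauli string. Your version is more thorough than the paper's (which simply states the map and gives examples), since you explicitly verify that the image avoids $I^{\otimes n}$, establish surjectivity using the bijectivity of the encoding, and clarify that ``equivalence'' here means a well-defined surjection rather than a bijection.
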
    
\begin{proof}
    In the alternate representation of the Gray code, each entry $i$ indicates that the operator connecting the corresponding basis state has Pauli $X$ acting on qubit $i$. For example, in an $n=3$ qubit Gray code, an entry $2$ indicates that the two basis elements differ on the second qubit; i.e., the operator connecting them is $IXI$. Subsequences, therefore, encode a string from $\{I, X\}^{\otimes n}$ that connect the endpoints of the subsequence. For example,   
    \begin{equation}
        (1, 2, \underline{1, 3, 1, 2}, 1)
    \end{equation}
    corresponds to $X_1 X_3 X_1 X_2 = IXX$.
    
    Similarly, for the binary code, we act with a bit-wise addition modulo 2 between the binary representation of every element in the subsequence. The resulting binary string is then translated into a Pauli string of $\{I, X\}$ -- each $1$ is mapped to $X$, and each $0$ is mapped to $I$. For example,   
    \begin{equation}
        (1, 3, \underline{1, 7, 1, 3}, 1) 
    \end{equation}
    corresponds to $100$ and ultimately, $XII$.
\end{proof} \medskip

\begin{lemma}
    \label{lem:pivot_def}
    The alternate representation for the Gray and binary code on $n$ qubits can be expressed in the form 
    \begin{equation}
        S_{n} = (S_{n-1}, P_{n}, S_{n-1}),
    \end{equation}
    where $P_i$ stands for $i$ and $2^i-1$ in the Gray and binary codes, respectively, and we refer to each $P_i$ as a pivot. The term pivot refers to the fact that about $P_{n}$, the alternate representation $S_{n}$ is symmetric.
\end{lemma}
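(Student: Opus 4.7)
I plan to read off $S_n$ directly from the recursive definitions of $\mathcal{G}_n$ and $\mathcal{B}_n$, splitting the list of $2^n-1$ bit-flip transitions into an inner-left block, a single pivot transition, and an inner-right block. Combined with Lemma~\ref{lem:reverse_alt_rep}, each of the two outer blocks will turn out to equal $S_{n-1}$, so no separate induction on $n$ is needed; the recursive definitions already exhibit the $(S_{n-1},P_n,S_{n-1})$ shape in a single step.

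For the Gray code, I start from $\mathcal{G}_n=(\mathcal{G}_{n-1}\cdot 0,\,\overline{\mathcal{G}_{n-1}}\cdot 1)$. Inside the first half, every consecutive pair shares the trailing $0$, so the flipped bit is the same as in the corresponding transition of $\mathcal{G}_{n-1}$, yielding $S_{n-1}$. Inside the second half, the trailing $1$ is shared, so the transitions are those of $\overline{\mathcal{G}_{n-1}}$, which by Lemma~\ref{lem:reverse_alt_rep} agree (in the same order) with $S_{n-1}$. The pivot goes from the last entry of $\mathcal{G}_{n-1}\cdot 0$ to the first entry of $\overline{\mathcal{G}_{n-1}}\cdot 1$: both take the last word of $\mathcal{G}_{n-1}$ as their first $n-1$ bits, so the only bit that flips is the appended one at position $n$, giving $P_n=n$.

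For the binary code, I instead split $\mathcal{B}_n$ as the concatenation of $\mathcal{B}_{n-1}$ prefixed by the high bit $q_0=0$ and $\mathcal{B}_{n-1}$ prefixed by $q_0=1$. On each half $q_0$ is constant and the low $n-1$ bits cycle through $\mathcal{B}_{n-1}$, so consecutive XORs within a block agree with the $\mathcal{B}_{n-1}$-XORs (padded with a leading $0$); read as integers, these give $S_{n-1}$ in both outer blocks. The pivot transition is from $0\,1\cdots 1$ to $1\,0\cdots 0$, whose bitwise XOR is the all-ones string of length $n$ with decimal value $2^n-1$, so $P_n=2^n-1$ as stated.

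I do not expect a hard step. The only bookkeeping to watch is that the right half of the Gray recursion is built from the reversed sub-code, so the list of transitions there is naively read backwards; Lemma~\ref{lem:reverse_alt_rep} is precisely what removes this apparent asymmetry and makes the two outer blocks literally equal rather than mirror images. The binary case is then essentially a one-line XOR computation once the high-bit split is chosen.
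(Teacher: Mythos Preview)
Your proposal is correct and follows essentially the same approach as the paper: both arguments read off $S_n$ directly from the recursive definitions $\mathcal{G}_n=(\mathcal{G}_{n-1}\cdot 0,\overline{\mathcal{G}_{n-1}}\cdot 1)$ and $\mathcal{B}_n=(0\cdot\mathcal{B}_{n-1},1\cdot\mathcal{B}_{n-1})$, identify the two outer blocks of transitions with $S_{n-1}$ (invoking Lemma~\ref{lem:reverse_alt_rep} for the reversed half in the Gray case), and compute the single pivot transition explicitly. Your write-up is slightly more explicit about why a constant appended/prepended bit leaves the inner XORs unchanged, but the logic is identical to the paper's.
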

\begin{proof}
    Consider the form of the binary reflective Gray code on $n$ qubits
    \begin{equation}
        \mathcal{G}_{n} = (\mathcal{G}_{n-1} \cdot 0, \overline{\mathcal{G}_{n-1}} \cdot 1),
    \end{equation}
    where $\overline{G}_{n}$ is the Gray code on $n$ qubits with the entries in reverse order. Thus the alternate representation for the code is given by
    \begin{equation}
        S_{n} = (S_{n-1}, n, \tilde{S}_{n-1}),
    \end{equation}
    where $\tilde{S}_n$ is the alternate representation of the reversed code $\mathcal{G}_n$. Using Lemma~\ref{lem:reverse_alt_rep}, the alternate representation is
    \begin{equation}
        S_{n} = (S_{n-1}, n, S_{n-1}).
    \end{equation}

    The binary code on $n$ bits is given by 
    \begin{equation}
        \mathcal{B}_{n} = (0 \cdot \mathcal{B}_{n-1}, 1 \cdot \mathcal{B}_{n-1}).
    \end{equation}
    The last entry of $\mathcal{B}_{n-1}$ is $1^{n-1}$, and the first entry of $\mathcal{B}_{n-1}$ is $0^{n-1}$. In the first and second half of the overall code, the first bit is never flipped, and between the halves all bits are flipped. Since the decimal  representation of $1^n$ is $2^n -1$, the alternate representation is then given by
    \begin{equation}
        S_{n} = (S_{n-1}, 2^{n}-1, S_{n-1}),
    \end{equation}
    concluding the proof.
\end{proof}

\medskip
In the next two lemmas, we prove that we only need to consider subsequences that end at a power of two index. We show that those subsequences are optimal, and for all other subsequences, there exists a shorter (or equal length) subsequence ending at a power of two index that maps to the same Pauli string of $\{I, X\}$ as defined in  Lemma~\ref{rmk:equivalance}.

\begin{lemma}
\label{lem:subending2}
Consider any subsequence of the alternate representation $S_n$ on $n$ qubits. Let $H$ be the largest entry of the subsequence. Using Lemma~\ref{rmk:equivalance}, we know that the subsequence corresponds to a Pauli string consisting of $X$ and~$I$.

Then the corresponding Pauli string can be formed by another subsequence that ends at the first instance of $H$ (which is guaranteed to occur at an index that is a power of two) and has a length less than or equal to the original length. 
\end{lemma}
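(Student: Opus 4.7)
The plan is to prove the lemma by induction on $n$, exploiting the recursive palindromic structure $S_n = (S_{n-1}, P_n, S_{n-1})$ established in Lemma~\ref{lem:pivot_def}. First I would settle the parenthetical claim. By unrolling the recursion, the unique pivot $P_n$ of $S_n$ sits at position $2^{n-1}$, and more generally the leftmost $S_h$ sub-block of $S_n$ has its pivot at position $2^{h-1}$. Since the value $P_h$ never occurs outside pivots of $S_h$ sub-blocks, the first occurrence of any value $H = P_h$ in $S_n$ lies at the power-of-two index $2^{h-1}$.

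For the inductive step, I decompose $S_n = (S_{n-1}^{L}, P_n, S_{n-1}^{R})$ and split the given subsequence as $s = (s^L, s^M, s^R)$, where $s^M$ is either empty or the singleton $\{P_n\}$. The central tool is a folding map: because $S_{n-1}^L$ and $S_{n-1}^R$ are positionally identical (the $j$-th entry of each equals the $j$-th entry of $S_{n-1}$), I translate every position of $s^R$ by $-2^{n-1}$ into $S_{n-1}^L$ and then take the symmetric difference with the positions of $s^L$, calling the result $t$. Since every single-entry operator is a Pauli in $\{I,X\}^{\otimes n}$ by Lemma~\ref{rmk:equivalance}, these operators pairwise commute and square to identity, so the Pauli induced by a subsequence depends only on the parity with which each position is included. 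Consequently $\phi(t) = \phi(s^L)\cdot \phi(s^R)$, and $|t| \leq |s^L| + |s^R|$ since the symmetric difference drops paired duplicates.

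I would then handle two cases. In Case A, when $H < P_n$, we have $s^M = \emptyset$, so $t$ is a subsequence of (a copy of) $S_{n-1}$ with $\phi(t) = \phi(s)$ and $|t| \leq |s|$; applying the induction hypothesis to $t$ yields an equivalent subsequence of $S_{n-1}$ ending at the first instance of $H$ in $S_{n-1}$, which coincides with the first instance of $H$ in $S_n$ by the first paragraph. In Case B, when $H = P_n$, I append the pivot entry at position $2^{n-1}$ to $t$; this produces a subsequence of $S_n$ ending at the unique, and hence first, instance of $P_n$, with Pauli $\phi(t)\cdot X_{\text{on the qubits flipped by }P_n} = \phi(s)$ and length $|t|+1 \leq |s^L|+|s^R|+1 = |s|$.

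The main obstacle I anticipate is a degenerate subcase of Case A in which the folding pairs up and cancels every occurrence of $H$, so that the largest entry of $t$ drops to some $H' < H$; the induction hypothesis then delivers a subsequence ending at the first instance of $H'$ rather than $H$. I would resolve this by slightly strengthening the inductive claim so that the target endpoint is the first instance of the largest value actually \emph{appearing in the new subsequence}, and then argue via Corollary~\ref{cor:samePauliSubset} that such a canonical representative is precisely what is needed wherever Lemma~\ref{lem:subending2} is invoked downstream; alternatively, one may restrict attention to subsequences in which the maximal value contributes nontrivially to the Pauli, which is the context of the counting lemmas that follow in this appendix.
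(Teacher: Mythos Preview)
The core difficulty is that in this paper a ``subsequence'' is a \emph{contiguous} interval $[L,R]$ of positions in $S_n$---this is how the paper's own proof parameterises it (``endpoint indices $L$ and $R$ \ldots\ $\len = R-L+1$''), and it is what makes the length of a subsequence equal the step $k$ of the corresponding ladder operator in the downstream counting lemmas. Your symmetric-difference folding does not preserve contiguity. In Case~B, $s^L=[L,2^{n-1}-1]$ is a suffix of the left block and the translated $s^R=[1,R-2^{n-1}]$ is a prefix; their symmetric difference is generically a union of two disjoint intervals, so neither $t$ nor $t\cup\{2^{n-1}\}$ is a subsequence in the required sense, and you cannot invoke the induction hypothesis on it or return it as the answer. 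The paper's proof avoids this by a finer six-case analysis of the position of $[L,R]$ relative to the three pivots of $S_{n+1}=(S_{n-1},P_n,S_{n-1},P_{n+1},S_{n-1},P_n,S_{n-1})$, in each case constructing an explicitly contiguous replacement by shifting one end by $\pm 2^{n-1}$ or by collapsing the palindromic block $(P_n,S_{n-1},P_{n+1},S_{n-1},P_n)$ to the single entry $P_{n+1}$.

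Incidentally, once contiguity is enforced your ``degenerate subcase of Case~A'' evaporates: if $H<P_n$ a contiguous $s$ cannot contain the central pivot, hence lies entirely in one half, so exactly one of $s^L,s^R$ is nonempty and no cancellation ever occurs. The cancellation worry---and the proposed fix of weakening the endpoint to the first instance of the largest \emph{surviving} entry---is an artefact of reading ``subsequence'' as an arbitrary subset of positions rather than an interval.
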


Let us consider a few examples before we go into the proof. For the Gray code,
\begin{align}
    (1, 2, \underline{1, 3, 1, 2}, 1) \equiv (\underline{1, 2, 1, 3}, 1, 2, 1), \notag \\
    (\underline{1, 2, 1, 3, 1}, 2, 1) \equiv (1, \underline{2, 1, 3}, 1, 2, 1), \notag \\
    (\underline{1, 2, 1}, 3, 1, 2, 1) \equiv (1, \underline{2}, 1, 3, 1, 2, 1).
\end{align}
Similarly, for the binary code
\begin{align}
    (1, 3, \underline{1, 7, 1, 3}, 1) \equiv (\underline{1, 3, 1, 7}, 1, 3, 1), \notag \\
    (\underline{1, 3, 1, 7, 1}, 3, 1) \equiv (1, \underline{3, 1, 7}, 1, 3, 1), \notag \\
    (\underline{1, 3, 1}, 7, 1, 3, 1) \equiv (1, \underline{3}, 1, 7, 1, 3, 1).
\end{align}

\begin{figure*}
     \centering
     \begin{subfigure}[b]{0.49\textwidth}
         \centering
         \includegraphics[width=.95\linewidth]{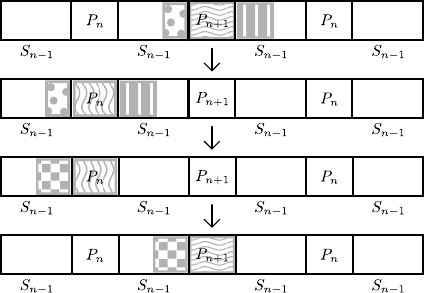}
         \caption{Case~$3$: We construct a new subsequence with the dotted section, $P_n$, and the striped section. Using the induction hypothesis, a new subsequence ending at $P_n$ can be found. Lastly, we replace $P_n$ with $P_{n+1}$ again.}
         \label{fig:case3}
     \end{subfigure}
     \vspace{1cm}
     \begin{subfigure}[b]{0.49\textwidth}
         \centering
         \includegraphics[width=.95\linewidth]{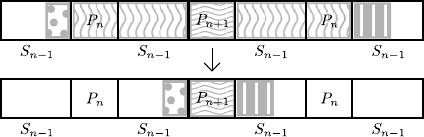}
         \caption{Case~$4$: The vertical wave section of the subsequence can be canceled out. We construct a new subsequence with the dotted section, $P_{n+1}$, and the striped section, which reduces to Case~$3$.}
         \label{fig:case4}
     \end{subfigure}
     \begin{subfigure}[b]{0.85\textwidth}
         \centering
         \includegraphics[width=.55\linewidth]{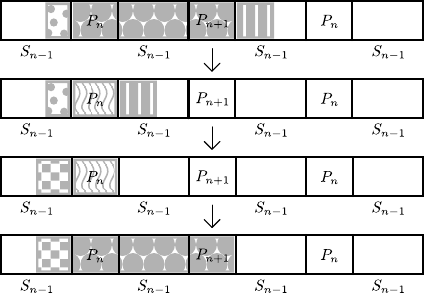}
         \caption{Case~$5$: We preserve the filled circle section of the subsequence and construct a new subsequence with the dotted section, $P_n$, and the striped section. Using the induction hypothesis, a new subsequence ending at $P_n$ can be found. The checkered section can now be appended to the preserved filled circle section.}
         \label{fig:case5}
     \end{subfigure}
    \caption{Proof of Lemma~\ref{lem:subending2} can be broken down into multiple cases. We label sections of the subsequence with different colors. $P_i$ stands for $i$ and $2^i-1$ in the Gray and binary codes respectively.}
    \label{fig:proofLemsubending2}
\end{figure*}

\begin{proof}[Proof of Lemma~\ref{lem:subending2}]
    We prove the result using an inductive approach on $n$, where $n$ is the number of qubits. We have two base cases, $n=1$ and $n=2$.
    For $n=1$, $S_1 = \{1\}$. There exists only one possible subsequence and it ends at a power of two. For $n=2$, $S_2 = \{1, 2, 1\}$ or $\{1, 3, 1\}$ for the Gray and binary code, respectively. There are six possible subsequences:
    \begin{align}
        (\underline{1}, 2, 1) &\equiv (\underline{1}, 2, 1), &\quad (\underline{1}, 3, 1) &\equiv (\underline{1}, 3, 1) , \notag \\
        (\underline{1, 2}, 1) &\equiv (\underline{1, 2}, 1) , &\quad (\underline{1, 3}, 1) &\equiv (\underline{1, 3}, 1) , \notag \\
        (\underline{1, 2, 1}) &\equiv (1, \underline{2}, 1) , &\quad (\underline{1, 3, 1}) &\equiv (1, \underline{3}, 1) , \notag \\
        (1, \underline{2}, 1) &\equiv (1, \underline{2}, 1) , &\quad (1, \underline{3}, 1) &\equiv (1, \underline{3}, 1) , \notag \\
        (1, \underline{2, 1}) &\equiv (\underline{1, 2}, 1) , &\quad (1, \underline{3, 1}) &\equiv (\underline{1, 3}, 1) , \notag \\
        (1, 2, \underline{1}) &\equiv (\underline{1}, 2, 1) , &\quad (1, 3, \underline{1}) &\equiv (\underline{1}, 3, 1),
    \end{align}
    where the columns represent a Gray code and binary code, respectively. We see that there always exists a subsequence with length less than or equal to the original length, ending at a power of two index.
    
    Next, we state the induction hypothesis: Lemma~\ref{lem:subending2} holds for some positive integer $n$. The induction step is to now show that it holds for $n+1$. From Lemma~\ref{lem:pivot_def}, we see that
    \begin{align}
        S_n & = (S_{n-1}, P_n, S_{n-1}),\\
        S_{n+1} & =  (S_{n}, P_{n+1}, S_{n})\notag \\
        &  = (S_{n-1}, P_n, S_{n-1}, P_{n+1}, S_{n-1}, P_n, S_{n-1}),
    \end{align}
    where $P_i$ stands for $i$ and $2^i-1$ in the Gray and binary codes, respectively, and we refer to each $P_i$ as a pivot. The pivots $P_n$, $P_{n+1}$, and $P_n$ occur at indices $2^{n-1}$, $2^n$, and $3 \cdot 2^{n-1}$, respectively. 
    Consider an arbitrary subsequence
    \begin{equation}
        a\coloneqq (a[L], \ldots, a[R])
    \end{equation}
    with endpoint indices $L$ and $R$, where $L \leq R$. Let $\len \coloneqq R - L + 1$ be the length of this subsequence. Based on the values of $L$ and $R$, we split the problem into multiple cases.

    Note that in the following cases, we use the different brackets to indicate if the endpoint of the interval is included -- square brackets indicate that the endpoint is included and regular parentheses indicate that the endpoint is not included. For example, $(5, 11]$ is the set $\{6, 7, 8, 9, 10, 11\}$.
    
    \begin{itemize}
        \item Case~$1$: $L, R \in [0, 2^n)$. Since both $L$ and $R$ are in the left half of the sequence, the subsequence must exist in $S_n = (S_{n-1}, P_n, S_{n-1})$. Using the induction hypothesis, there exists an alternate subsequence that lies in $S_n$ ending at a power of two index. The same subsequence thus exists in the left half of $S_{n+1}$.
        
        \item Case~$2$: $L, R \in (2^n, 2^{n+1})$. Since both $L$ and $R$ are in the right half of the sequence, the subsequence must exist in $S_n = (S_{n-1}, P_n, S_{n-1})$. Thus, this case is similarly covered by the induction hypothesis; i.e., there exists an alternate subsequence that lies in $S_n$ ending at a power of two index. The same subsequence thus exists in the left half of $S_{n+1}$.
        
        \item Case~$3$: $L \in (2^{n-1}, 2^n)$, $R \in (2^n, 3 \cdot 2^{n-1})$. A pictorial representation of the proof can be found in Fig.~\ref{fig:case3}. Intuitively, this subsequence, with $P_{n+1}$ replaced with $P_n$, must already exist in the left half of the overall sequence. To formalize this notion, we now construct a new subsequence
        \begin{equation*}
            \quad \qquad (a[L], \ldots, a[2^n-1], P_n, a[2^n+1], \ldots, a[R]).
        \end{equation*}
        In essence, we have replaced $P_{n+1}$ with $P_n$ in the original subsequence. This subsequence is now guaranteed to exist in $S_n = (S_{n-1}, P_n, S_{n-1})$. By the inductive hypothesis, there exists a subsequence $(a[k], \ldots, a[2^{n-1}-1], P_n)$ of length $\leq \len$, where $k \in [0, 2^{n-1})$. By symmetry, the following subsequence 
        \begin{equation*}
            \qquad (a[k+2^{n-1}], \ldots, a[2^n-1], P_{n+1}),
        \end{equation*}
        where we have replaced $P_n$ with $P_{n+1}$ again, must exist, has length $\leq \len$, and ends at $n+1$.

        \item Case~$4$: $L \in [0, 2^{n-1}]$, $R \in [3 \cdot 2^{n-1}, 2^{n+1})$. A pictorial representation of the proof can be found in Fig.~\ref{fig:case4}. Intuitively, this subsequence contains $(P_n, S_{n-1}, P_{n+1}, S_{n-1}, P_n)$ as a part of it. This part can be effectively reduced to just $P_{n+1}$ since every other entry occurs an even number of times. Thus, we create a new subsequence
        \begin{multline*}
            \qquad (a[L], \ldots, a[2^{n-1}-1], P_{n+1}, \\ a[3 \cdot 2^{n-1} + 1], \ldots, a[R]).
        \end{multline*}
        By symmetry, this subsequence is the same as 
        \begin{multline*}
            \qquad (a[L + 2^{n-1}], \ldots, a[2^n-1], P_{n+1}, \\ a[2^n + 1], \ldots, a[R - 2^{n-1}]),
        \end{multline*}
        where we have shifted the indices of the left and right half by $2^{n-1}$ up and down respectively. This then reduces to Case~$3$, handled earlier.

        \item Case~$5$: $L \in [0, 2^{n-1}]$, $R \in [2^n, 3 \cdot 2^{n-1})$. A pictorial representation of the proof can be found in Fig.~\ref{fig:case5}. In this case, we preserve the subsequence $(P_n, S_{n-1}, P_{n+1})$ and create a new subsequence of the form 
        \begin{multline*}
            \qquad (a[L], \ldots, a[2^{n-1}-1], P_n, \\ a[2^{n-1} + 1], \ldots, a[R - 2^{n-1}]).
        \end{multline*}
        This subsequence is guaranteed to exist in $S_n = (S_{n-1}, P_n, S_{n-1})$. By the inductive hypothesis, there exists a subsequence $(a[k], \ldots, a[2^{n-1}-1], P_n)$. We now reconstruct the original subsequence as follows
        \begin{equation*}
            \qquad (a[k], \ldots, a[2^{n-1}-1], P_n, S_{n-1}, P_{n+1}).
        \end{equation*}
        The length of this subsequence is $\leq \len$ and it ends at a power of two.

        \item Case~$6$: $L \in (2^{n-1}, 2^n]$, $R \in [3 \cdot 2^{n-1}, 2^{n+1}]$. By symmetry, this subsequence can be reflected about the midpoint $n+1$. This then reduces to Case~$5$, handled earlier.
    \end{itemize}
    Thus, in all possible cases, we have shown that the inductive step holds if we assume the inductive hypothesis to be true.
\end{proof}

\begin{lemma}
\label{lem:shorterSubseq}
Given a subsequence that ends at an index which is a power of two, there does not exist another subsequence of a shorter length  that leads to the same Pauli string. 
\end{lemma}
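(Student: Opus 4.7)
The plan is to reduce the claim to Lemma~\ref{lem:subending2} by establishing a uniqueness property: for each non-identity Pauli string $Q$ on $n$ qubits, there is exactly one subsequence of $S_{n}$ ending at a power-of-two index whose associated Pauli string equals $Q$. Once uniqueness is in hand, the present lemma is immediate. Given $a$ ending at a power-of-two index with Pauli string $Q$, and supposing for contradiction a strictly shorter subsequence $b$ with the same Pauli string, Lemma~\ref{lem:subending2} supplies an equivalent subsequence $b^{*}$ ending at a power-of-two index with $|b^{*}|\leq|b|<|a|$; uniqueness forces $b^{*}=a$, contradicting $|b^{*}|<|a|$. So the task reduces to proving uniqueness, for which two ingredients are needed.

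\textbf{Step 1 (counting and classification of power-of-two ending subsequences).} The first ingredient is a classification of the Pauli strings arising from power-of-two ending subsequences. By Lemma~\ref{rmk:equivalance}, the subsequence $[L,R]$ yields the Pauli string $g(L-1)\oplus g(R)$ in the Gray case (respectively $b(L-1)\oplus b(R)$ in the binary case). Using the recursion of Lemma~\ref{lem:pivot_def}, I would compute $g_{n}(2^{k-1})$ and $b_{n}(2^{k-1})$ explicitly and show that, as $L$ ranges over $\{1,\ldots,2^{k-1}\}$, the resulting Pauli strings are precisely the $2^{k-1}$ strings whose highest nonzero bit sits at position $k$. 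Summing over $k\in\{1,\ldots,n\}$ exhausts the $2^{n}-1$ non-identity Pauli strings exactly once, which is the uniqueness. In particular, the unique power-of-two ending subsequence realizing a given $Q$ ends at $R=2^{k^{*}-1}$, where $k^{*}$ denotes the highest nonzero bit position of $Q$.

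\textbf{Step 2 (the largest pivot pins down $k^{*}$).} The second ingredient ensures that Lemma~\ref{lem:subending2}, when applied to an arbitrary $b$ with Pauli string $Q$, lands at the \emph{same} power of two as $a$. I would argue that if $P_{k}$ is the largest pivot appearing in $b$, then $k=k^{*}$. By Lemma~\ref{lem:pivot_def}, $S_{n}$ decomposes as $2^{n-k}$ copies of $S_{k}$ interleaved with the pivots $P_{j}$ for $j>k$; any contiguous $b$ whose largest entry is $P_{k}$ therefore lies inside a single copy of $S_{k}$, in which $P_{k}$ appears exactly once. In the Gray case $P_{k}$ flips qubit $k$ and no smaller pivot touches qubit $k$; in the binary case $P_{k}$ flips bits $1$ through $k$ and is the only pivot in $b$ whose binary representation has bit $k$ set. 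Either way, bit $k$ of $Q$ equals $1$ and bits $k+1,\ldots,n$ equal $0$, so $k=k^{*}$.

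\textbf{Main obstacle.} With Steps 1 and 2 in place, the argument closes as in the first paragraph. The main obstacle is the explicit calculation in Step 1: extracting from the recursive definitions of $\mathcal{G}_{n}$ and $\mathcal{B}_{n}$ a clean description of $g_{n}(2^{k-1})$ and $b_{n}(2^{k-1})$, and then verifying that the set $\{g_{n}(L-1)\oplus g_{n}(2^{k-1}):1\leq L\leq 2^{k-1}\}$ equals precisely the set of bit patterns whose highest set bit is at position $k$. The binary case of Step 2 is a secondary subtlety, since $P_{k}$ flips several bits simultaneously; one must verify that no smaller pivot $P_{j}$ with $j<k$ can create an additional flip at bit $k$, which it cannot because the binary representation of $P_{j}=2^{j}-1$ has zero at bit $k$ for $j<k$.
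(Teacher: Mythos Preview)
Your approach is correct and its core matches the paper's. The paper argues by a two-case analysis: a hypothetical shorter subsequence either lies entirely in $[1,2^m)$, where no entry touches the bit that $P_{m+1}$ flips, or it ends at $2^m$, where injectivity of $L\mapsto e(L-1)\oplus e(2^m)$ (with $e$ the encoding bijection) forces it to equal the original. Your global-bijection framing is a repackaging of this, and your Step~2 is in fact more careful than the paper on one point: the paper's two cases do not explicitly treat shorter subsequences that end beyond index $2^m$, and your pivot argument is exactly what reduces that situation (via Lemma~\ref{lem:subending2}) back to the second case. Finally, your ``main obstacle'' is lighter than you anticipate: you do not need explicit formulas for $g_n(2^{k-1})$ or $b_n(2^{k-1})$. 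Since the Pauli string of $[L,R]$ equals $e(L-1)\oplus e(R)$, for each fixed endpoint $R=2^{k-1}$ the $2^{k-1}$ choices of $L$ give $2^{k-1}$ distinct strings, each with highest set bit $k$ by your own Step~2 applied to $[L,2^{k-1}]$; pigeonhole then yields the bijection without any further computation.
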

\begin{proof}
Given a subsequence ending at index $2^m$, the corresponding Pauli string must have $X$ acting on qubit $m+1$. We now provide a proof by contradiction. We first assume that a shorter subsequence exists. The two possible cases are:
\begin{itemize}
    \item Case~$1$: The shorter subsequence ends before index $2^m$. More concretely, the shorter subsequence exists within $[0, 2^m)$ and using Lemma~\ref{lem:subending2}, there exists an equivalent subsequence that ends at index $2^{m-1}$. No possible subsequence in this region can have its corresponding Pauli string with $X$ acting on qubit $m+1$.
    
    \item Case~$2$: A shorter subsequence ends at index $2^m$. This subsequence has $X$ acting on qubit $m+1$. There are $2^m$ subsequences of the form $(k, \ldots, 2^m)$ for $k \in \{1, \ldots, 2^m\}$. These subsequences map to all possible Pauli strings of length~$m$. Thus, two different subsequences ending at $2^m$ cannot lead to the same Pauli string. 
\end{itemize}
Thus, we see that there cannot exist any other subsequence of a shorter length that leads to the same Pauli string.
\end{proof} 

\medskip
We note that in the present study, the Hamiltonian for $K=0$ (diagonal potential) is tridiagonal because of the kinetic energy term. As a result, the entries for $K=1$ in the following lemmas and remarks are used for $K=0$.

\begin{lemma}
\label{lem:zero_diag_terms}
    Let $i, j$ be two $n$-bit binary strings, and let $N=2^n$. Then, the following statement is true
    \begin{equation}
        P(n, K) \colon \summ{n} \dopr{i}{j} i^K = 0 \iff \vert j \vert > K,
    \end{equation}
    for all $0 \leq K \leq N$ and $n>0$.
\end{lemma}
\begin{proof}
    We prove the statement using induction. More concretely, we use simple induction on $n$ and strong induction on $K$. Thus, we assume $P(n-1, K), P(n-1, K-1), \ldots, P(n-1, 0)$ to be true, and use them to prove $P(n, K)$. We first show the sets of base cases $P(1, K)$ and $P(n, 0)$.\\
    
    \textbf{Base Case 1 : }$P(1, K)$. The left hand side now becomes
    \begin{align}
        &\dopr{0}{j} 0^K + \dopr{1}{j} 1^K \notag \\
        &= 0^K + (-1)^j = 0 \iff K=0, j=1,
    \end{align}
    where we use the fact that $0^0 = 1$. Thus, $\vert j \vert > K$.\\
    
    \textbf{Base Case 2 : } $P(n, 0)$. The left hand side now becomes
    \begin{equation}
        \summ{n} \dopr{i}{j} i^0 = 0,
    \end{equation}
    which is true $\forall j \neq 0$.
    Thus, $\vert j \vert > 0$.\\
    
    To prove the equivalence $P(n, K)$, we start from the right hand side $\vert j \vert > K$, and show that it is equivalent to the left hand side. Let the binary expansion of $j$ be given by $j_1 j_2 \ldots j_n$.\\

    \textbf{Case 1:} $j_1 = 0$. Thus, $\vert j_2 \ldots j_n \vert > K$. Using the inductive hypotheses and the fact that $\vert j_2 \ldots j_n \vert > K$ satisfies the right hand side of all the hypotheses, we have the following equalities:
    \begin{align}
        \summ{n-1} \dopr{i}{(j_2 \ldots j_n)} i^K &= 0 \qquad \text{Using }P(n-1, K) \label{eq:P_n-1_k_1}\\
        \summ{n-1} \dopr{i}{(j_2 \ldots j_n)} i^{K-1} &= 0 \qquad \text{Using }P(n-1, K-1) \label{eq:P_n-1_k-1_1}\\
        &\shortvdotswithin{=}
        \summ{n-1} \dopr{i}{(j_2 \ldots j_n)} i^{0} &= 0 \qquad \text{Using }P(n-1, 0) \label{eq:P_n-1_0_1}.
    \end{align}
    In each of the statements above, $i$ is an $(n-1)$-bit binary string. Now consider the following equation:
    \begin{align}
        &  \binom{K}{0} (2^n)^0 \left[ \summ{n-1} \dopr{i}{(j_2 \ldots j_n)} i^K \right] \notag \\
        &+ \binom{K}{1} 2^n \left[ \summ{n-1} \dopr{i}{(j_2 \ldots j_n)} i^{K-1} \right] \notag \\
        &+ \binom{K}{2} (2^n)^2 \left[ \summ{n-1} \dopr{i}{(j_2 \ldots j_n)} i^{K-2} \right] \notag \\
        &+ \cdots + \binom{K}{K} (2^n)^K \left[ \summ{n-1} \dopr{i}{(j_2 \ldots j_n)} i^{0} \right] \notag \\
        &= 0.
    \end{align}
    The equality is because each term within square brackets is zero (using \eqref{eq:P_n-1_k_1}-\eqref{eq:P_n-1_0_1}). Thus, using the binomial theorem, 
    \begin{equation}
        \left[ \summ{n-1} \dopr{i}{(j_2 \ldots j_n)} (2^n + i)^K \right] = 0.
    \end{equation}
    Combining with the first step of the inductive tree \eqref{eq:P_n-1_k_1},
    \begin{multline}
    \left[ \summ{n-1} \dopr{i}{(j_2 \ldots j_n)} (i)^K \right] \\ 
    + \left[ \summ{n-1} \dopr{i}{(j_2 \ldots j_n)} (2^n + i)^K \right] = 0.
    \end{multline}

    In the above two summations, $i$ is an $n-1$ bit binary string. We now create an $n$ bit binary string $i'$ by appending either $0$ or $1$ to the front of $i$ and append $j_1$ to the front of $j_2 \ldots j_n$. Since $j_1=0$, this extra bit has no effect. Thus, the equation becomes 
    \begin{multline}
    \left[ \summ{n-1} \dopr{0i}{(j_1 j_2 \ldots j_n)} (i)^K \right] \\
    + \left[ \summ{n-1} \dopr{1i}{(j_1 j_2 \ldots j_n)} (2^n + i)^K \right] = 0.
    \end{multline}
    Now, we notice that $1i$ is the binary expansion of $2^n + i$. Thus, the summation can be written as 
    \begin{multline}
    \left[ \summ{n-1} \dopr{0i}{(j_1 j_2 \ldots j_n)} (i)^K \right] \\
    + \left[ \sum_{i' = 2^{n-1}}^{2^n-1} \dopr{i'}{(j_1 j_2 \ldots j_n)} (i')^K \right] = 0.
    \end{multline}
    Since $i'$ is just a dummy index, we replace it with $i$ and combine it with the first term, finally leading to
    \begin{equation}
        \summ{n} \dopr{i}{j} i^K = 0.
    \end{equation}
    \\
    
    \textbf{Case 2:} $j_1 = 1$. Thus, $\vert j_2 \ldots j_n \vert > K-1$. Using the inductive hypotheses and the fact that $\vert j_2 \ldots j_n \vert > K-1$ satisfies the right hand side of all the hypotheses, we have the following equalities:
    \begin{align}
        \summ{n-1} \dopr{i}{(j_2 \ldots j_n)} i^{K-1} &= 0 \qquad \text{Using }P(n-1, K-1)\label{eq:P_n-1_k-1_2} \\
        \summ{n-1} \dopr{i}{(j_2 \ldots j_n)} i^{K-2} &= 0 \qquad \text{Using }P(n-1, K-2)\label{eq:P_n-1_k-2_2} \\
        &\shortvdotswithin{=}
        \summ{n-1} \dopr{i}{(j_2 \ldots j_n)} i^{0} &= 0 \qquad \text{Using }P(n-1, 0) \label{eq:P_n-1_0_2}.
    \end{align}
    In each of the statements above, $i$ is an $(n-1)$-bit binary string. Now consider the following equation:
    \begin{align}
        &-\binom{K}{1} 2^n \left[ \summ{n-1} \dopr{i}{(j_2 \ldots j_n)} i^{K-1} \right] \notag \\
        &-\binom{K}{2} (2^n)^2 \left[ \summ{n-1} \dopr{i}{(j_2 \ldots j_n)} i^{K-2} \right] \notag \\
        &- \cdots - \binom{K}{K} (2^n)^K \left[ \summ{n-1} \dopr{i}{(j_2 \ldots j_n)} i^{0} \right] \notag \\
        = 0.
    \end{align}
    The equality is because each term within square brackets is zero (using \eqref{eq:P_n-1_k-1_2}-\eqref{eq:P_n-1_0_2}).  Adding and subtracting the following term, 
    \begin{equation}
        \left[ \summ{n-1} \dopr{i}{(j_2 \ldots j_n)} i^{K} \right],
    \end{equation}
    we get 
    \begin{align}
        &\left[ \summ{n-1} \dopr{i}{(j_2 \ldots j_n)} i^{K} \right] \notag \\
        &- \left\{ \binom{K}{0} (2^n)^0 \left[ \summ{n-1} \dopr{i}{(j_2 \ldots j_n)} i^{K} \right] \right.\notag \\
        &+ \binom{K}{1} 2^n \left[ \summ{n-1} \dopr{i}{(j_2 \ldots j_n)} i^{K-1} \right] \notag \\
        &+ \binom{K}{2} (2^n)^2 \left[ \summ{n-1} \dopr{i}{(j_2 \ldots j_n)} i^{K-2} \right] \notag \\
        &+ \left. \cdots + \binom{K}{K} (2^n)^K \left[ \summ{n-1} \dopr{i}{(j_2 \ldots j_n)} i^{0} \right] \right\}= 0.
    \end{align}
    Combining the terms within the curly brackets using the binomial theorem, we get
    \begin{multline}
        \left[ \summ{n-1} \dopr{i}{(j_2 \ldots j_n)} i^{K} \right] \\
        - \left[ \summ{n-1} \dopr{i}{(j_2 \ldots j_n)} (2^n + i)^{K} \right] = 0.
    \end{multline}
    In the above two summations, $i$ is an $n-1$ bit binary string. We now create an $n$ bit binary string $i'$ by appending either $0$ or $1$ to the front of $i$ and append $j_1$ to the front of $j_2 \ldots j_n$. Since $j_1=1$, this extra bit accounts for the negative sign. Thus, the equation becomes  
    \begin{multline}
    \left[ \summ{n-1} \dopr{0i}{(j_1 j_2 \ldots j_n)} (i)^K \right] \\
    + \left[ \summ{n-1} \dopr{1i}{(j_1 j_2 \ldots j_n)} (2^n + i)^K \right] = 0.
    \end{multline}
    Now, we notice that $1i$ is the binary expansion of $2^n + i$. Thus, the summation can be written as 
    \begin{multline}
    \left[ \summ{n-1} \dopr{0i}{(j_1 j_2 \ldots j_n)} (i)^K \right] \\
    + \left[ \sum_{i' = 2^{n-1}}^{2^n-1} \dopr{i'}{(j_1 j_2 \ldots j_n)} (i')^K \right].
    \end{multline}
    Since $i'$ is just a dummy index, we replace it with $i$ and combine it with the first term, finally leading to
    \begin{equation}
        \summ{n} \dopr{i}{j} i^K = 0.
    \end{equation}
    Thus, for both cases $j_1=0, 1$, we showed that $P(n, K)$ follows from the inductive hypotheses, concluding the proof.
\end{proof} 

\begin{lemma}
\label{lem:diag_ham_expansion}
    Consider an $N \times N$ matrix $H$ with diagonal entries of the form 
    \begin{equation}
        H_{i, i} = \sum_{k=0}^K a_k i^k,
    \end{equation}
    for some set  $\{a_k\}_{k=0}^K$ of real constants and for all $i \in \left\{0, \ldots, N-1\right\}$. Considering only the diagonal part of the matrix, and using the binary encoding, we get the following equality:
    \begin{align}
        H_{\operatorname{diag}} &= \sum\limits_{i=0}^{N-1} \sum_{k=0}^K a_k i^k \outerprod{i}{i} \notag \\
        &= \sum\limits_{i=0}^{N-1} \sum_{k=0}^K a_k i^k O_{i} \notag \\
        &= \frac{1}{2^n} \sum\limits_{j=0}^{N-1} \sum_{k=0}^K a_k \left[ \sum\limits_{i=0}^{N-1} (-1)^{i \cdot j} i^k\right] \tilde{O}_{j},
        \label{eq:diag_ham_expansion}
    \end{align}
    where the third equality follows from Lemma~\ref{lem:OpCOB}. Then, the number of Pauli terms accounted for by the diagonal part of the matrix is given by 
    \begin{equation}
        \operatorname{d}(n, K) \coloneqq \sum_{m=0}^K \binom{n}{m},
    \end{equation}
    where $n = \log_2(N).$
\end{lemma}
\begin{proof}
    In \eqref{eq:diag_ham_expansion}, using Lemma~\ref{lem:zero_diag_terms}, the term in square brackets is non-zero if and only if $\vert j \vert \leq K$ and all terms with $\vert j \vert > K$ get cancelled. Thus, the surviving terms have $\vert j \vert \leq K$ and the number of such terms is given by
    \begin{equation}
        \sum_{m=0}^K \binom{n}{m},
    \end{equation}
    which is the number of ways to pick $j$ such that $\vert j \vert \leq K$.
\end{proof}

\begin{remark}
    Lemma~\ref{lem:diag_ham_expansion} shows that the number of Pauli terms for the diagonal part of the Hamiltonian, using the binary encoding, is given by $\operatorname{d}(n, K)$. Instead, if we use the Gray encoding, the diagonal part of the Hamiltonian is given by
    \begin{align}
        H_{\operatorname{diag}} &= \sum\limits_{i=0}^{N-1} \sum_{k=0}^K a_k i^k \outerprod{i}{i} \notag \\
        &= \sum\limits_{i=0}^{N-1} \sum_{k=0}^K a_k i^k O_{i} \notag \\
        &= \frac{1}{2^n} \sum\limits_{j=0}^{N-1} \sum_{k=0}^K a_k \left[ \sum\limits_{i=0}^{N-1} (-1)^{e(i) \cdot j} i^k\right] \tilde{O}_{j}.
    \end{align}
    Replacing $i$ with $e(i)$ in the exponent changes the relative signs of each term. Thus, the set of surviving $j$ is different for the binary and Gray encodings. However, since we consider a sum of all possible $j$, the size of the set of surviving terms is the same and is equal to $\operatorname{d}(n, K)$.
\end{remark}

\begin{lemma}
\label{lem:NumPauliGrayBinary}
    The number of Pauli terms in the Hamiltonian $H_{N, K}$ for the Gray and binary code is given by ($N = 2^n$),
    \begin{equation}
        \vert H(N, K) \vert
            =  \begin{cases} 
                \operatorname{d}(n, 1) + n 2^{n-1} & K = 0  \\
                \operatorname{d}(n, K) + 2^{n-1} \sum\limits_{k=1}^K  \overline{n}_k & 1 \leq K \leq 2^{n-1} \\
                2^{n-1}(1+2^n) & K>2^{n-1},
            \end{cases}
    \end{equation}
    where $\overline{n}_k \coloneqq n - \lceil \operatorname{log}_2(k) \rceil $.
\end{lemma}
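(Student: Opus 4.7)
The plan is to split $H_{N,K}$ into the diagonal contribution $\sum_m \langle m|H|m\rangle L^0_m$ and the off-diagonal ladder contribution $\sum_{m,\,k\geq 1}\langle m+k|H|m\rangle (L^k_m + (L^k_m)^\dagger)$, count the distinct Pauli strings arising from each, and then observe that the two sets are disjoint: diagonal strings lie in $\{I,Z\}^{\otimes n}$, while off-diagonal strings carry an $X$ or $Y$ on each flipped qubit.

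For the diagonal piece, each $L^0_m$ is by Eqs.~\eqref{eq:BE-B0} and \eqref{eq:GE-L0} a tensor product of $P^0=(I+Z)/2$ and $P^1=(I-Z)/2$, and Lemma~\ref{lem:OpCOB} expands each such projector as a signed sum over all of $\{I,Z\}^{\otimes n}$. Summing over the $N=2^n$ basis labels with generic diagonal entries activates every one of the $2^n$ such Pauli strings, which accounts for the $2^n$ appearing in all three cases of the formula.

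For the off-diagonal piece, each Hermitian pair $L^k_m+(L^k_m)^\dagger$ is a single element of the distance set $D(n,|f|,f)$ with $f$ equal to the Hamming-difference pattern $g_{m+k}\oplus g_m$ (Gray) or $b_{m+k}\oplus b_m$ (binary). By Lemmas~\ref{lem:DisOpPosDep}, \ref{cor:samePauliSubset}, and \ref{lem:numPauli_f_subset}, each $D(n,|f|,f)$ is supported on a fixed set of exactly $2^{n-1}$ Pauli strings determined solely by $f$, and different $f$'s yield disjoint supports because the $X/Y$ positions coincide exactly with $f$. Hence the off-diagonal count equals $2^{n-1}\,|\mathcal{F}_K|$, where $\mathcal{F}_K$ is the set of $f$'s realized at step sizes $k=1,\dots,K$.

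The main step is to prove $|\mathcal{F}_K|=\sum_{k=1}^K \overline{n}_k$ for $1\leq K\leq 2^{n-1}$. By Lemma~\ref{rmk:equivalance}, each step-$k$ pair corresponds to a length-$k$ subsequence of the alternate representation $S_n$ whose pivot-encoded reduction gives $f$. Lemma~\ref{lem:subending2} replaces any such subsequence by one of length $\leq k$ terminating at a pivot index $2^{j-1}$, and Lemma~\ref{lem:shorterSubseq} makes this minimal representative unique. Thus an $f$ is newly introduced at step exactly $k$ iff its minimal subsequence has length $k$ and ends at some pivot $P_j$ with $k\leq 2^{j-1}$, i.e., $j\geq \lceil\log_2 k\rceil+1$; the number of admissible $j\in\{1,\dots,n\}$ is $\overline{n}_k=n-\lceil\log_2 k\rceil$. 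Summing over $k$ gives the middle case. For $K>2^{n-1}$ the cumulative count saturates at $\sum_{j=1}^n 2^{j-1}=2^n-1$, covering all non-identity $\{I,X\}^{\otimes n}$ strings, which produces $2^n+(2^n-1)2^{n-1}=2^{n-1}(1+2^n)$; the $K=0$ row reduces to $K=1$ because the kinetic operator is already tridiagonal and contributes the full step-1 set of off-diagonal pairs. The main obstacle is the pivot-length bijection used in this paragraph, but Lemmas~\ref{lem:subending2} and \ref{lem:shorterSubseq} resolve it directly.
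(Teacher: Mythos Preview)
Your proof is correct and follows essentially the same approach as the paper: both split into diagonal ($2^n$ strings from $\{I,Z\}^{\otimes n}$) and off-diagonal contributions, invoke Lemma~\ref{lem:numPauli_f_subset} to assign $2^{n-1}$ Pauli strings per flip-pattern $f$, and then use the subsequence/pivot machinery of Lemmas~\ref{lem:subending2} and~\ref{lem:shorterSubseq} to count the distinct $f$'s as $\sum_{k=1}^K \overline{n}_k$. Your framing in terms of the set $\mathcal{F}_K$ and the ``newly introduced at step exactly $k$'' bijection is a slightly cleaner packaging of the same argument the paper carries out via Tables~\ref{tab:GSubsetsVsk} and~\ref{tab:BSubsetsVsk}.
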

\begin{proof}
From Lemmas~\ref{lem:subending2} and \ref{lem:shorterSubseq}, we see that we are interested in subsequences ending at powers of two only. For any subsequence not ending at a power of two, Lemma~\ref{lem:subending2} states that there exists an equivalent subsequence ending at a power of two of shorter (or equal) length. Furthermore, Lemma~\ref{lem:shorterSubseq} states that for any subsequence ending at a power of two, no shorter subsequence gives the same Pauli string. Lastly, the set of subsequences ending at a power of two covers all Pauli strings. Thus, the set of subsequences ending at a power of two forms a complete and optimal set. 

To prove the current lemma, we establish a correspondence between subsequences and Pauli strings of the Hamiltonian. As seen in Remark~\ref{rmk:equivalance}, for every subsequence, we can associate a Pauli string composed of $\{I, X\}$. The Pauli string corresponds to the flipped bits in the two encoded basis states at the end of the subsequence. For example, consider an $n=3$ Gray code as seen in Table~\ref{tab:AltRep}. The subsequence
\begin{equation}
    (\underline{1, 2, 1, 3}, 1, 2, 1) \equiv IXX,
\end{equation}
indicates that $\ket{0} \rightarrow \ket{000}$ and $\ket{4} \rightarrow \ket{011}$ differ on qubits $2$ and $ 3$, as the equivalent Pauli string has $X$ acting on qubits $2$ and $3$. 
Next, we saw in Definition~\ref{def:FlDisOp} that the set of flipped qubits specifies a particular set of operators, each composed of the same Pauli strings but differing only in their coefficients. Thus, the set can be specified by the set of Pauli strings that its operators are composed of. For example $IXX$ corresponds to the set 
\begin{equation}
    D(3, 2, \{2, 3\}) = \{ \left(I_1 \otimes X_{2, 3} \right) P^a_{1} \otimes P^{b, \bar{b}}_{2, 3} : \forall a, \forall b\}.
\end{equation}
Each operator in this example is made up of the following set of Pauli strings:
\begin{equation}
    \left\{(I \otimes XX)\left(\frac{I \pm Z}{2} \otimes \frac{II \pm ZZ}{2}\right) \right\}.
\end{equation}

Thus, there exists a one-to-one mapping between subsequences ending at a power of two and Pauli strings of the form $\{I, X\}^{\otimes n} \setminus I^{\otimes n}$. We also showed that there exists a one-to-one mapping between Pauli strings of the form $\{I, X\}^{\otimes n} \setminus I^{\otimes n}$ and sets of distance operators with fixed flipped qubits, completing the connection to subsequences. Since we are only interested in subsequences ending at a power of two, we need to find the number of such subsequences. 

Next, we provide an important connection that allows us to quantify the number of Pauli terms for a particular $N$ and $K$. The truncation parameter $K$ in $H_{N, K}$ specifies the maximum length of subsequences allowed; thus, to quantify the number of Pauli terms, we consider all subsequences of length $k \in \{1, \ldots, K\}$. 

We first provide an example. In the case of $K=3$ for a three-qubit Gray code, we need to consider subsequences of length $k\in \{1, 2, 3\}$. Furthermore, we need to only consider subsequences ending at a power of two index. Thus, the subsequences we need to consider are 
\begin{align}
    (\underline{1}, 2, 1, 3, 1, 2, 1) &\equiv XII,\\
    (1, \underline{2}, 1, 3, 1, 2, 1) &\equiv IXI,\\
    (1, 2, 1, \underline{3}, 1, 2, 1) &\equiv IIX,\\
    (\underline{1, 2}, 1, 3, 1, 2, 1) &\equiv XXI,\\
    (1, 2, \underline{1, 3}, 1, 2, 1) &\equiv XIX,\\
    (1, \underline{2, 1, 3}, 1, 2, 1) &\equiv XXX,
\end{align}
where the first three entries account for $k=1$, the next two account for $k=2$, and the last entry accounts for $k=3$.

A subsequence of length $k$ cannot end at index $2^m$ if $k > 2^m$. Since the highest power of two in an $n$-qubit Gray code is $2^{n-1}$, for $k>2^{n-1}$, we expect to see no new Pauli strings. Thus, for a fixed $k$ such that $1 < k < 2^{n-1}$, we need to count all powers of two greater than $k$ and less than $2^n$. There are 
\begin{equation}
    \overline{n}_k = n - \lceil \operatorname{log}_2(k) \rceil
\end{equation}
subsets, each contributing $2^{n-1}$ Pauli terms (see Lemma~\ref{lem:numPauli_f_subset}). For a concrete example, the subsets for different $k$ for a Gray code on $n=4$ qubits is shown in Table~\ref{tab:GSubsetsVsk}. The columns represent the index endpoints of the subsequences.
\begin{table}[h]
        \centering
        \begin{tabular}{s|s|s|s|>{\centering\arraybackslash}p{0.06\textwidth}} \hline
        $k$ & $1$  & $2$  & $4$  & $8$  \\ \hline
        $1$ & XIII & IXII & IIXI & IIIX \\
        $2$ &      & XXII & XIXI & XIIX \\
        $3$ &      &      & XXXI & XXIX \\ 
        $4$ &      &      & IXXI & IXIX \\
        $5$ &      &      &      & IXXX \\
        $6$ &      &      &      & XXXX \\ 
        $7$ &      &      &      & XIXX \\
        $8$ &      &      &      & IIXX \\
        \hline
        \end{tabular}
        \caption{Gray encoding subsequences as a function of $k$ for an $n=4$ code. For a fixed $k$, we see that there are $\overline{n}_k = n - \lceil \operatorname{log}_2(k) \rceil$ entries and for $k>2^{n-1}$, there are no new entries.}
        \label{tab:GSubsetsVsk}
\end{table} 

Similarly, the table for a binary code is shown in Table~\ref{tab:BSubsetsVsk}.
\begin{table}[h]
        \centering
        \begin{tabular}{s|s|s|s|>{\centering\arraybackslash}p{0.06\textwidth}} \hline
        $k$ & $1$  & $2$  & $4$  & $8$  \\ \hline
        $1$ & IIIX & IIXX & IXXX & XXXX \\
        $2$ &      & IIXI & IXXI & XXXI \\
        $3$ &      &      & IXIX & XXIX \\ 
        $4$ &      &      & IXII & XXII \\
        $5$ &      &      &      & XIXX \\
        $6$ &      &      &      & XIXI \\ 
        $7$ &      &      &      & XIIX \\
        $8$ &      &      &      & XIII \\
        \hline
        \end{tabular}
        \caption{Binary encoding subsequences as a function of $k$ for an $n=4$ code. For a fixed $k$, we see that there are $\overline{n}_k = n - \lceil \operatorname{log}_2(k)\rceil$ entries and for $k>2^{n-1}$, there are no new entries.}
        \label{tab:BSubsetsVsk}
\end{table} 

Thus, the total number of Pauli terms for $1 \leq K \leq 2^{n-1}$ is given by
\begin{equation}
    \operatorname{d}(n, K) + \sum\limits_{k=1}^K \overline{n}_k 2^{n-1} ,
\end{equation}
where the first term arises when $k=0$, i.e., the number operators of the form $\{I, Z\}^{\otimes n}$ using Lemma~\ref{lem:diag_ham_expansion}.

Thus, we finally see that 
\begin{equation}
        \vert H(N, K) \vert 
        =  \begin{cases} 
                \operatorname{d}(n, 1) + n 2^{n-1} & K = 0 \\
                \operatorname{d}(n, K) + 2^{n-1} \sum\limits_{k=1}^K  \overline{n}_k & 1 \leq K \leq 2^{n-1} \\
                2^{n-1}(1+2^n) & K>2^{n-1},
            \end{cases}
\end{equation}
concluding the proof.
\end{proof} 

\begin{lemma}
    The set of Pauli strings corresponding to $D(n, k, f)$ consists of $2^{\vert f \vert -1}$ qubit-wise commuting sets.
    \label{lem:subsetCG}
\end{lemma}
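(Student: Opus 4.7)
The plan is to characterize explicitly the Pauli strings that compose the operators in $D(n, k, f)$, and then exploit their rigid qubit-wise structure to partition them into qubit-wise commuting classes.

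First I would revisit the decomposition from the proof of Lemma~\ref{lem:DisOpPosDep}. On the flipped qubits $f$, each operator in $D(n, k, f)$ acts as $X^{\otimes k}(O_b + O_{\bar{b}})$, which was shown (via Lemma~\ref{lem:OpCOB}) to expand as a linear combination of $X^{\otimes k}\tilde{O}_j$ with $j$ ranging over bitstrings of even Hamming weight. On each unflipped qubit in $\bar{f}$, the operator is a one-qubit projector $P^{a_i}_i = (I \pm Z)/2$. Combining these, every Pauli string appearing in $D(n, k, f)$ has the form $S_{\bar{f}} \otimes T_f$ with $S_{\bar{f}} \in \{I, Z\}^{\otimes (n-k)}$ arbitrary and $T_f \in \{X, Y\}^{\otimes k}$ containing an even number of $Y$'s (since each $Z$ on a flipped qubit multiplies $X$ to give $\pm i Y$, and $j$ has even parity).

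Next I would determine qubit-wise commutation between two such strings $S_{\bar{f}} \otimes T_f$ and $S'_{\bar{f}} \otimes T'_f$. On every qubit in $\bar{f}$, the two factors lie in $\{I, Z\}$ and commute automatically; on every qubit in $f$, the factors lie in $\{X, Y\}$, and these commute if and only if they are equal. Qubit-wise commutation is therefore equivalent to the condition $T_f = T'_f$, with no constraint on the $\bar{f}$ patterns. This partitions the Pauli strings of $D(n, k, f)$ into classes indexed by the allowed $T_f$.

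Finally, I would count the allowed patterns on $f$. The set $\{X, Y\}^{\otimes k}$ contains $2^k$ strings, exactly half of which have an even number of $Y$'s, giving $2^{k-1} = 2^{\vert f \vert - 1}$ classes. Each class is non-empty because any such $T_f$ can be paired with every $S_{\bar{f}}$, and the total count $2^{k-1} \cdot 2^{n-k} = 2^{n-1}$ is consistent with Lemma~\ref{lem:numPauli_f_subset}. The only subtle point is the even-parity constraint on the $\{X, Y\}$ pattern on $f$, but that is already established in the proof of Lemma~\ref{lem:DisOpPosDep}, so the remainder is a routine enumeration.
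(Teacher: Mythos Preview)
Your proposal is correct and follows essentially the same approach as the paper: separate the $f$ and $\bar f$ qubits, observe that the $\{I,Z\}$ factors on $\bar f$ always qubit-commute, and count the $2^{\vert f\vert -1}$ mutually non-qubit-commuting patterns on $f$. Your write-up is in fact more explicit than the paper's, spelling out that the flipped-qubit patterns are precisely the even-$Y$-parity elements of $\{X,Y\}^{\otimes k}$ and that two such strings qubit-commute only when identical, which is exactly the content the paper compresses into the single sentence ``None of the Pauli strings qubit-wise commute because of the overall composition with the Pauli $X$ string.''
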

\begin{proof}
As a reminder, the set of operators $D(n, k, f)$ is defined as
\begin{multline}
    D(n, k, f) \coloneqq \left\{ \left(I_{\bar{f}} \otimes X_f \right) P^a_{\bar{f}} \otimes P^{b, \bar{b}}_f \right. \\
    \left. : \forall a \in \{0, 1\}^{\vert \bar{f} \vert },\ \forall b \in \{0, 1\}^{\vert f \vert}\right\},
\end{multline}
where $P^a_{\bar{f}}$ and 
and $P^{b, \bar{b}}_{f}$ are defined in Definition~\ref{def:projOp}. 
Since we are interested in qubit-wise commutativity, we consider the $f$ and $\bar{f}$ qubits separately. On the unflipped qubits $\bar{f}$, the action of every element in $D(n, k, f)$ is $P^a$ for every bitstring $a$. Expanding, we know that for all $a$, $P^a$ is a linear combination of Pauli strings composed of $\{I, Z\}$ only. Thus, all of them pairwise-qubit commute.

On the flipped qubits $f$, the action of every element in $D(n, k, f)$ is $X_f \circ P^{b, \bar{b}}_f$ for every bitstring $b$. As seen before, $P_f^{b, \bar{b}}$ is composed of $2^{\vert f \vert - 1}$ Pauli strings. None of the Pauli strings qubit-wise commute because of the overall composition with the Pauli $X$ string. Thus, every set $D(n, k, f)$ consists of $2^{\vert f \vert - 1}$ qubit-wise commuting Pauli terms.
\end{proof} 
\medskip

As an example related to Lemma~\ref{lem:subsetCG}, consider the set $D(4, 2, \{3, 4\})$, alternatively labeled $IIXX$,
\begin{equation}
    IIXX = \left\{ \left(I_{1, 2} \otimes X_{3, 4} \right) P^a_{\{1, 2\}} \otimes P^{b, \bar{b}}_{\{3, 4\}} \right\}.
\end{equation}
As mentioned earlier, the values of $a$ and $b$ change the coefficient of the different Pauli strings, but not the set of strings themselves. Thus, we consider $a=00$ and $b=00$. For this choice, the operator is given by
\begin{multline}
     \left(I_{1, 2} \otimes X_{3, 4} \right) ((II + IZ + ZI + ZZ) \otimes (II + ZZ)) \\
     = (II + IZ + ZI + ZZ) \otimes (XX + YY),
\end{multline}
up to a normalization constant. Thus, we see that the set of Pauli terms in the linear combination can be split into the following two qubit-wise commuting sets:
\begin{align}
 & \{IIXX, IZXX, ZIXX, ZZXX\}, \\
  & \{IIYY, IZYY, ZIYY, ZZYY\} . 
\end{align}
Note that different values of $a$ and $b$ will change the relative sign of some Pauli strings, but preserve the set of Pauli strings.

\begin{lemma}
    \label{lem:NumQCGBinary}
    The number of qubit-wise commuting sets for $H_{N, K}$ for the binary code is given by
    \begin{align}
        &\vert H(N, K) \vert_C \notag \\
        &=  \begin{cases} 
                2^n & K = 0 \\
                1 + \sum\limits_{k=1}^K 2^{\vert b(\bar{k}) \vert}\left[ 1- 2^{-\overline{n}_k} \right] & 1 \leq K \leq 2^{n-1} \\
                \frac{1}{2} \left( 1 + 3^n \right) & K > 2^{n-1},
            \end{cases}
    \end{align}
    where $\vert w \vert$ is the Hamming weight of the string $w$, $\overline{n}_k \coloneqq n - \lceil \operatorname{log}_2(k) \rceil$, and $\bar{k} \coloneqq 2^n-k$.
\end{lemma}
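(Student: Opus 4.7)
The plan is to combine the subsequence-to-Pauli-string correspondence developed for Lemma~\ref{lem:NumPauliGrayBinary} with the internal decomposition of each $D(n, |f|, f)$ given by Lemma~\ref{lem:subsetCG}. First, I would observe that the diagonal number operators $B^0_m$ are all linear combinations of $\{I,Z\}^{\otimes n}$ strings and thus form a single QC set, contributing the ``$1$'' in the formula. Next, for each flipped-qubit set $f \subseteq \{1,\ldots,n\}$ that appears in $H_{N,K}$, Lemma~\ref{lem:subsetCG} tells us that the Pauli strings drawn from $D(n, |f|, f)$ decompose into exactly $2^{|f|-1}$ QC subsets. Distinct $f$'s must be kept in distinct QC groups because an operator in $D(n, |f_1|, f_1)$ carries $X$ on every qubit of $f_1$, while operators in $D(n, |f_2|, f_2)$ with $f_1 \setminus f_2 \neq \emptyset$ carry an $I$ or a $Z$ on those qubits, and the $Z$ component (always present after expanding the projectors $P^a_{\bar{f_2}}$) fails to qubit-wise commute with $X$. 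Hence the total QC count is $1 + \sum_f 2^{|f|-1}$, where the sum ranges over the distinct $f$'s that appear.

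I would then enumerate these $f$'s using Lemmas~\ref{lem:subending2}, \ref{lem:shorterSubseq}, and~\ref{rmk:equivalance}: the non-identity Pauli strings in $H_{N,K}$ are in bijection with subsequences of the binary alternate representation $S_n$ of length $k \in \{1,\ldots,K\}$ ending at an index $2^m$ with $m \geq \lceil \log_2 k \rceil$. For the binary code, telescoping the XOR along such a subsequence yields $b_{2^m} \oplus b_{2^m - k}$, whose $n$-bit binary expansion carries a $1$ at bit position $m$ (because $2^m - k < 2^m$ has a $0$ there) together with the bits of $2^m - k$ below position $m$. Consequently $|f| = 1 + |b_m(2^m - k)|$, where $b_m(\cdot)$ denotes the $m$-bit representation.

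The heart of the proof is a closed-form evaluation of $\sum_{m=p}^{n-1} 2^{|f|-1}$ for each fixed $k$, where $p = \lceil \log_2 k \rceil$. The key identity is the disjoint-bit decomposition $2^m - k = 2^p(2^{m-p} - 1) + (2^p - k)$, valid for $m \geq p$, in which the first summand has $1$'s only in positions $p,\ldots,m-1$ and the second uses only positions $0,\ldots,p-1$. This yields $|b(2^m - k)| = (m - p) + |b_p(2^p - k)|$, and hence
\[
\sum_{m=p}^{n-1} 2^{|b(2^m - k)|} \;=\; 2^{|b_p(2^p - k)|}\,(2^{n-p} - 1).
\]
Applying the same decomposition to $\bar{k} = 2^n - k = 2^p(2^{n-p} - 1) + (2^p - k)$ gives $|b(\bar{k})| = (n - p) + |b_p(2^p - k)|$ together with $\bar{n}_k = n - p$, so that $2^{|b(\bar{k})|}[1 - 2^{-\bar{n}_k}]$ equals the sum above. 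Summing over $k \in \{1,\ldots,K\}$ and adding the diagonal set yields the claimed formula for $1 \leq K \leq 2^{n-1}$. For $K > 2^{n-1}$ the set of relevant $f$'s saturates to all non-empty subsets of $\{1,\ldots,n\}$, and the identity $\sum_{j=1}^{n}\binom{n}{j}\,2^{j-1} = (3^n - 1)/2$ closes that case; the $K=0$ row (diagonal potential with tridiagonal kinetic term) reduces to the $K=1$ formula and evaluates to $2^n$.

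The main obstacle is the arithmetic identity $|b(2^m - k)| = (m-p) + |b_p(2^p - k)|$: once that is in hand, the $m$-sum telescopes cleanly onto the target expression $2^{|b(\bar{k})|}[1 - 2^{-\bar{n}_k}]$. A secondary bookkeeping obstacle is rigorously verifying that operators from distinct $D(n, |f_1|, f_1)$ and $D(n, |f_2|, f_2)$ cannot be merged into a single QC group, which requires tracking how the $Z$-content of the projectors $P^a_{\bar{f}}$ interacts with the $X_f$ factor across different flipped-qubit patterns and confirming that no such cross-merging is permitted by qubit-wise commutation.
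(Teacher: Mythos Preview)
Your proposal is correct and follows essentially the same route as the paper: both proofs observe that the diagonal terms form one QC set, invoke Lemma~\ref{lem:subsetCG} to assign $2^{|f|-1}$ QC sets to each $D(n,|f|,f)$, enumerate the relevant $f$'s via subsequences ending at powers of two (whose binary-code label is $b(2^{j+1}-k)$), and then sum a geometric series over the column index. Your disjoint-bit identity $|b(2^m - k)| = (m-p) + |b_p(2^p - k)|$ is simply an explicit justification of what the paper states verbally as ``moving a column to the left reduces the weight of the string by one''; and your discussion of why Pauli strings from distinct $f$'s cannot be merged fills in a point the paper leaves implicit, so your write-up is somewhat more complete but not methodologically different.
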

\begin{proof}
Consider $1 \leq K \leq 2^{n-1}$ and consider a column in Table~\ref{tab:BSubsetsVsk} labeled by $2^{j}$. For a particular $k$ value, the entry represents a subsequence of length $k$ ending at index $2^{j}$. From the ordering of the binary code basis elements, we see that the entry is given by $b(2^{j+1} - k)$. Thus, the total number is given by
     \begin{equation}
         \sum\limits_{j=\lceil \operatorname{log}_2(k) \rceil}^{n-1} 2^{ \vert b(2^{j+1} - k) \vert -1}.
     \end{equation}
     This can be simplified by noting that moving a column to the left reduces the weight of the string by one. Thus, we consider the weight of the last string in the row and move left. For the last column, the entry is given by
     $\bar{k} \coloneqq 2^{n} - k$. Thus, the number of QC sets for a given $k$ is given by
     \begin{equation}
         \sum\limits_{i=0}^{\overline{n}_k - 1} 2^{ \vert b(\bar{k}) \vert - i - 1} 
         = 2^{\vert b(\bar{k}) \vert}\left[ 1- 2^{-\overline{n}_k} \right] .
     \end{equation}
     Thus, for a given $1 \leq K \leq 2^{n-1}$, we get 
     \begin{equation}
         1 + \sum\limits_{k=1}^K 2^{\vert b(\bar{k}) \vert}\left[ 1- 2^{-\overline{n}_k} \right]
     \end{equation}
     qubit-wise commuting sets. The case~$K=0$ is equal to the value of $K=1$ since the kinetic energy term accounts for two off-diagonal terms anyway.
     
     Lastly, for $K>2^{n-1}$, we have considered all subsets, leading to
     \begin{equation}
         1 + \frac{1}{2} \sum\limits_{i=1}^n 2^i \binom{n}{i} = \frac{1}{2} \left( 1 + 3^n \right),
     \end{equation}
     concluding the proof.
\end{proof} 

\begin{lemma}
    \label{lem:NumQCGGray}
    The number of qubit-wise commuting sets for $H_{N, K}$ for the Gray code is given by
    \begin{equation}
        \vert H(N, K) \vert_C 
        =  \begin{cases} 
                1 + n & K = 0 \\
                1 + \sum\limits_{k=1}^K \overline{n}_k 2^{\vert g_{k-1} \vert} & 1 \leq K \leq 2^{n-1} \\
                \frac{1}{2} \left( 1 + 3^n \right) & K>2^{n-1},
            \end{cases}
    \end{equation}
    where $\vert w \vert$ is the Hamming weight of the string $w$, and $\overline{n}_k \coloneqq n - \lceil \operatorname{log}_2(k) \rceil $.
\end{lemma}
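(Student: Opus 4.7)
The plan is to mirror the proof of Lemma~\ref{lem:NumQCGBinary} by enumerating the subsequences of the alternate representation of $\mathcal{G}_n$ that end at a power-of-two index and applying Lemma~\ref{lem:subsetCG} to each. By Lemmas~\ref{lem:subending2} and \ref{lem:shorterSubseq}, every Pauli string appearing in $H_{N,K}$ corresponds bijectively to one such subsequence of length at most $K$. The number of length-$k$ subsequences ending at $2^j$ for $\lceil \log_2 k \rceil \leq j \leq n-1$ is $\overline{n}_k$, and each such subsequence determines a set $D(n,k,f)$ which, by Lemma~\ref{lem:subsetCG}, contributes $2^{|f|-1}$ qubit-wise commuting sets, where $|f|$ is the Hamming weight of the associated Pauli string.

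The decisive step that distinguishes the Gray case from the binary case is the claim that for the binary reflective Gray code, every length-$k$ subsequence ending at a power-of-two index yields a Pauli string of the \emph{same} Hamming weight $|f| = |g_{k-1}| + 1$, independent of the endpoint $2^j$. This invariance can be read off from Table~\ref{tab:GSubsetsVsk}, where each row consists of strings of equal weight; in contrast, the weights along a row of Table~\ref{tab:BSubsetsVsk} vary. To prove it, I would compute $g_{2^j - k} \oplus g_{2^j}$ using the recursive definition $\mathcal{G}_{j+1} = (\mathcal{G}_j \cdot 0, \overline{\mathcal{G}_j} \cdot 1)$. Since $2^j$ sits at the boundary between the two halves of $\mathcal{G}_{j+1}$, the reflective formula gives $g_{2^j} = g_{2^j - 1}^{(j)} \cdot 1$, while $g_{2^j - k}$ lies in the left half and obeys $g_{2^j - k} = g_{2^j - k}^{(j)} \cdot 0$. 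Hence the XOR picks up a forced $1$ in bit $j$, and what remains is $|g_{2^j - k}^{(j)} \oplus g_{2^j - 1}^{(j)}|$. A short case split on whether $k \leq 2^{j-1}$ (both indices in the right half of $\mathcal{G}_j$) or $k > 2^{j-1}$ (mixed halves) reduces this inner weight to $|g_{k-1}|$ via one further application of the reflective identity $g_{2^j - 1 - m}^{(j)} = g_m^{(j-1)} \cdot 1$.

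Once the weight invariance is established, Lemma~\ref{lem:subsetCG} yields $2^{|g_{k-1}|}$ commuting sets per length-$k$ subsequence, and summing over the $\overline{n}_k$ admissible endpoints for each $k \in \{1,\ldots,K\}$ produces the claimed $1 + \sum_{k=1}^K \overline{n}_k\, 2^{|g_{k-1}|}$ for $1 \leq K \leq 2^{n-1}$. The leading $+1$ accounts for the single qubit-wise commuting set spanned by the diagonal $\{I, Z\}^{\otimes n}$ terms contributed by the number operators, which all mutually qubit-wise commute. For $K > 2^{n-1}$ all admissible subsequences are exhausted and the resulting set of Pauli strings in $H_{N,K}$ coincides with that of the binary encoding in the same regime, so the count saturates at $\frac{1}{2}(1 + 3^n)$, exactly as in Lemma~\ref{lem:NumQCGBinary}. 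The main obstacle is the weight-invariance claim; once that structural fact is in hand the remaining enumeration is routine.
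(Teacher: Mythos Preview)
Your proposal is correct and follows essentially the same route as the paper: enumerate the subsequences of the alternate representation ending at power-of-two indices, identify the associated flipped-qubit sets $f$, apply Lemma~\ref{lem:subsetCG} to get $2^{|f|-1}$ QC sets per such $f$, and sum. The paper asserts the structural form of each table entry directly---namely that the entry in row~$k$, column~$2^j$ is $g_{k-1}^{(j)} \cdot X \cdot I^{n-j-1}$, from which the weight $|g_{k-1}|+1$ is immediate---whereas you propose to derive the same weight by explicitly computing $g_{2^j-k}\oplus g_{2^j}$ through the reflective recursion; your argument supplies the justification the paper leaves implicit, but the decomposition and the arithmetic are identical.
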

\begin{proof}
    Consider $1 \leq K \leq 2^{n-1}$ and consider a column in Table~\ref{tab:GSubsetsVsk} labeled by $2^{j}$. Since the columns represent subsequences ending at index $j$, we see that strings in a column have a fixed structure -- the strings end with $XI^{n - j - 1}$. The first $j$ bits are a Gray representation of the row index $k-1$. For example, for the cell with row index $3$ and column index $2^2$, the entry is of the form 
    \begin{equation}
        g_{3-1} \otimes X \otimes I = XX \otimes X \otimes I.
    \end{equation}
    As seen in Lemma~\ref{lem:subsetCG}, this set contributes $2^{3-1}$ Pauli terms. 

    Thus, for a particular $k$, the number of Pauli terms is 
    \begin{align}
        &\sum\limits_{j = \lceil \operatorname{log}_2(k) \rceil}^{n-1} 2^{\vert g_{k-1} X I^{n - j - 1} \vert -1 } \notag \\
        &= \sum\limits_{j = \lceil \operatorname{log}_2(k) \rceil}^{n-1} 2^{\vert g_{k-1} \vert} \notag \\
        &= \overline{n}_k 2^{\vert g_{k-1} \vert},
    \end{align}
where the first non-zero $j$ for a row $k$ is $\lceil \operatorname{log}_2(k) \rceil$. Thus, for a fixed $1 \leq K \leq 2^{n-1}$, we see that
\begin{equation}
    1 + \sum\limits_{k=1}^K \overline{n}_k 2^{\vert g_{k-1} \vert},
\end{equation}
where the first term arises from the all-$Z$ measurement for the diagonal terms. The case~$K=0$ is equal to the value of $K=1$ since the kinetic energy terms accounts for two off-diagonal terms anyway.

Lastly, for $K > 2^{n-1}$, we consider all possible subsets $D(n, k, f)$ for all $k \in \{1, \ldots, n\}$. Thus, the total is given by
\begin{equation}
    1 + \sum\limits_{i=1}^n 2^{i-1} \binom{n}{i} = \frac{1}{2} \left( 1 + 3^n \right).
\end{equation}
This concludes the proof.
\end{proof} 

\begin{lemma}
    The set of Pauli strings corresponding to $D(n, k, f)$ all pairwise commute with each other. Alternatively, the set of Pauli strings corresponding to $D(n, k, f)$ forms a distance-grouped commuting set.
    \label{lem:dist-k-commute}
\end{lemma}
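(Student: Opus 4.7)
The plan is to identify, explicitly, the Pauli strings that arise from expanding an arbitrary element of $D(n,k,f)$ and then show they pairwise commute by a parity argument on the flipped qubits $f$. All the expansion work has already been done inside the proof of Lemma~\ref{lem:DisOpPosDep}, so I will simply read off its output and analyze it.

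First I would recall the structure established there. On the unflipped qubits $\bar f$, each operator in $D(n,k,f)$ acts as a projector $\outerprod{a}{a} = \bigotimes_{i} \outerprod{a_i}{a_i}$, which expands into Pauli strings built entirely from $I$ and $Z$. On the flipped qubits $f$, each operator is
\begin{equation*}
X^{\otimes k}\bigl(O_b + O_{\bar b}\bigr) = \frac{1}{2^k}\sum_j\bigl((-1)^{b\cdot j} + (-1)^{\bar b \cdot j}\bigr)\, X^{\otimes k}\tilde O_j,
\end{equation*}
where $\tilde O_j$ is a tensor product of $I$s and $Z$s and only those $j$ of even Hamming weight survive the sum. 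Thus every Pauli string occurring on $f$ has the form $X^{\otimes k}\tilde O_j$ with $|j|$ even.

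Next I would rewrite this factor in the $\{X,Y\}$ alphabet. Since $X\cdot I = X$ and $X\cdot Z = -\mathrm{i} Y$, at each position of $f$ the local Pauli is $X$ when the corresponding bit of $j$ is $0$ and $Y$ when it is $1$; the evenness of $|j|$ therefore says that the number of $Y$s in the string is even. So the Pauli strings of $D(n,k,f)$ are, up to the $\{I,Z\}$ factor on $\bar f$, tensor products over $f$ of the form $\bigotimes_{i\in f} P_i$ with each $P_i\in\{X,Y\}$ and an even number of $Y$s.

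Finally I would close the argument with a symmetric-difference count. Two strings $\bigotimes P_i$ and $\bigotimes Q_i$ on $f$ commute iff the set $S$ of positions where they choose different letters in $\{X,Y\}$ has even cardinality (since a $\{X,Y\}$ disagreement is the only source of anticommutation). If $S_1$ and $S_2$ denote the $Y$-positions of the two strings, both even by the previous step, then $S=S_1\triangle S_2$ satisfies $|S|=|S_1|+|S_2|-2|S_1\cap S_2|$, which is even. On $\bar f$ everything is built from $I$ and $Z$ and commutes trivially. Combining the two factors gives pairwise commutation of all Pauli strings in the set associated with $D(n,k,f)$, which is exactly the DGC property. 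The only subtle point is the first step—tracking the parity constraint produced by the projector $O_b+O_{\bar b}$—and that is already handled in Lemma~\ref{lem:DisOpPosDep}, so the rest is a short parity computation with no further obstacles.
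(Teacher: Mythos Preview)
Your proof is correct and reaches the same conclusion by a genuinely different, more combinatorial route than the paper. The paper keeps the Pauli strings in the factored form $(I_{\bar f}\otimes X_f)\,\tilde O_j\otimes\tilde O_l$ (with $l$ of even parity) and then expands the commutator $[(I\otimes X)\tilde O_j\otimes\tilde O_l,\,(I\otimes X)\tilde O_m\otimes\tilde O_n]$ via iterated Leibniz rules, finally reducing the surviving pieces to $[\tilde O_l,X^{\otimes k}]=0$, which vanishes precisely because $l$ has even Hamming weight. You instead pass immediately to the $\{X,Y\}$ alphabet on $f$, observe that the even-parity constraint becomes ``an even number of $Y$s,'' and close with the symmetric-difference identity $|S_1\triangle S_2|=|S_1|+|S_2|-2|S_1\cap S_2|$ together with the standard fact that two Pauli strings commute iff they anticommute on an even number of sites. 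Your argument is shorter and more transparent about \emph{why} the parity condition forces commutation; the paper's algebraic expansion, on the other hand, stays closer to the operator form that is reused in the next lemma when the explicit diagonalizing unitary is constructed. Either way, the key input---that only even-parity $\tilde O_l$ survive on the flipped qubits---is exactly the content you correctly cite from Lemma~\ref{lem:DisOpPosDep}.
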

\begin{proof}
The set of operators in $D(n, k, f)$ are
\begin{multline}
    D(n, k, f) \coloneqq \{ \left(I_{\bar{f}} \otimes X_f \right) P^a_{\bar{f}} \otimes P^{b, \bar{b}}_f \\
    : \forall a \in \{0, 1\}^{\vert \bar{f} \vert },\ \forall b \in \{0, 1\}^{\vert f \vert}\}.
\end{multline}
We already saw that when mapped to Pauli strings, all the operators from the set lead to combinations of the same Pauli strings with different coefficients only. In other words, independent of the value of $a$ and $b$, the operators all map to the same Pauli strings. Since they all map the same set of Pauli strings, without loss of generality, we consider the fixed operator $a = 0^{\vert \bar{f} \vert }$ and $b = 0^{\vert f \vert }$:
\begin{equation}
    \{ \left(I_{\bar{f}} \otimes X_f \right) P^0_{\bar{f}} \otimes P^{0, \bar{0}}_f.
\end{equation}

Now consider, using Lemma~\ref{lem:OpCOB}, we see that
\begin{equation}
    P^0_{\sofbar} = \frac{1}{2^{\sofbar}} \sum_{j=0}^{2^{\sofbar}-1} \tilde{O}_j.
\end{equation}
Next, we see that 
\begin{align}
    P^{0, \bar{0}}_f &= \frac{1}{2^{\sof}} \sum_l \left( (-1)^{0\cdot l} + (-1)^{1 \cdot l} \right) \tilde{O}_l \notag \\
    &= \frac{1}{2^{\sof}} \sum_l (-1)^{1\cdot l} \tilde{O}_l \notag \\
    &= \frac{1}{2^{\sof}} \sum_{l : p(l)\, \mathrm{mod}\,2 =0} \tilde{O}_l.
\end{align}
As mentioned earlier, only terms with even parity $l$ survive the summation. Thus the Pauli strings are of the form
\begin{equation}
    (I \otimes X)\tilde{O}_j \otimes  \tilde{O}_l,
\end{equation}
where we use the shorthand $X$ for $X_f$ and the constraint that $l$ has even parity. Finally, we see that the Pauli strings all commute since
\begin{align}
    &[(I \otimes X)\tilde{O}_j \otimes \tilde{O}_l, (I \otimes X)\tilde{O}_m \otimes \tilde{O}_n] \notag \\
    &= (I \otimes X)[\tilde{O}_j \otimes \tilde{O}_l, (I \otimes X)\tilde{O}_m \otimes \tilde{O}_n] \notag \\
    &\qquad + [(I \otimes X), (I \otimes X)\tilde{O}_m \otimes \tilde{O}_n]\tilde{O}_j \otimes \tilde{O}_l \notag \\
    &= [\tilde{O}_j \otimes \tilde{O}_l, \tilde{O}_m \otimes \tilde{O}_n] \notag \\
    &\qquad + (I \otimes X)[\tilde{O}_j \otimes \tilde{O}_l, (I \otimes X)] \tilde{O}_m \otimes \tilde{O}_n \notag \\
    &\qquad + (I \otimes X) [(I \otimes X), \tilde{O}_m \otimes \tilde{O}_n]\tilde{O}_j \otimes \tilde{O}_l \notag \\
    &\qquad + [(I \otimes X), (I \otimes X)](\tilde{O}_m \otimes \tilde{O}_n)(\tilde{O}_j \otimes \tilde{O}_l) \notag \\
    &= (I \otimes X)[\tilde{O}_j \otimes \tilde{O}_l, (I \otimes X)] \tilde{O}_m \otimes \tilde{O}_n \notag \\
    &\qquad + (I \otimes X) [(I \otimes X), \tilde{O}_m \otimes \tilde{O}_n]\tilde{O}_j \otimes \tilde{O}_l,
\end{align}
where the first term is zero since all $\tilde{O}$ are composed of $I, Z$ only. Next, consider that
\begin{align}
    [\tilde{O}_j \otimes \tilde{O}_l, (I \otimes X)] 
    &= \tilde{O}_j \otimes [\tilde{O}_l, X] \notag \\
    &=0,
\end{align}
where the last equality is because the parity of $l$ is even. Similarly, the other term is also zero since the parity of $n$ is even. Thus, the overall commutator is zero. Thus, the set of Pauli strings corresponding to the set $D(n, k, f)$ all commute. Since the operators in the set are linear combinations of these Pauli strings, they also commute. 
\end{proof}

\begin{lemma}
    The unitary transformation that rotates the computational basis to the common eigenbasis for set of Pauli strings corresponding to $D(n, k, f)$ is given by 
    \begin{equation}
        I_{\bar{f}} \otimes U^{\operatorname{GHZ}}_{f},
    \end{equation}
    where $U^{\operatorname{GHZ}}_f$ is the unitary operator
    \begin{equation}
        U^{\operatorname{GHZ}}_f \coloneqq \prod_{i=\sof}^2 \operatorname{CNOT}_{f_1 \rightarrow f_i} H_{f_1}.
    \end{equation} 
    The number of two-qubit gates in the diagonalizing unitary for $D(k, n, f)$ is thus given by $(\vert f \vert -1)$.
    \label{lem:diagUnitaryDGC}
\end{lemma}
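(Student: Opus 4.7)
The plan is to prove the identification qubit-wise, separating the unflipped qubits~$\bar{f}$ from the flipped qubits~$f$. From the structure of the Pauli strings in $D(n, k, f)$ established in Lemma~\ref{lem:dist-k-commute}, each such string has the form $\tilde{O}_j \otimes (X_f \tilde{O}_l)$ with $\tilde{O}_j$ a product of $I$'s and $Z$'s on $\bar{f}$, and $\tilde{O}_l$ a product of $I$'s and $Z$'s on $f$ constrained to have even Hamming-weight parity. On $\bar{f}$ the factor $\tilde{O}_j$ is already diagonal in the computational basis, so the action on $\bar{f}$ of the diagonalizing unitary may be taken to be the identity, justifying the $I_{\bar{f}}$ factor.

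On the $f$ register, I would recognize the set of Pauli strings to be diagonalized as precisely (up to global signs) the stabilizer group of the $\sof$-qubit GHZ state $(\ket{0}^{\otimes \sof} + \ket{1}^{\otimes \sof})/\sqrt{2}$. Indeed, $X^{\otimes \sof}$ together with all even-weight $Z$ strings generate this abelian group, and any product is again of the form $X_f \tilde{O}_l$ with $l$ of even parity. The common eigenbasis of a stabilizer group is its stabilized basis, so the diagonalizing unitary on $f$ must be the inverse of a circuit that prepares the GHZ basis from the computational basis; the canonical choice for this circuit is exactly $U^{\operatorname{GHZ}}$ as given.

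To make this rigorous I would carry out a direct Clifford-conjugation calculation on the generators. Tracking $X^{\otimes \sof}$ through $(U^{\operatorname{GHZ}})^\dagger$ collapses it to $X_1$ using the CNOT rules $X_1 \mapsto X_1 X_i$ and $X_i \mapsto X_i$, after which $H_1$ rotates $X_1$ to $Z_1$. Tracking an even-weight string $\tilde{O}_l = \prod_i Z_i^{a_i}$ through the CNOTs produces an accumulated $Z_1$ factor of exponent $\sum_i a_i$, which is trivial precisely because $l$ has even parity, leaving behind a diagonal $Z$ string on qubits $2, \ldots, \sof$ that commutes with $H_1$. Hence both generators, and therefore every Pauli string in the set, become diagonal after conjugation by the proposed unitary; since $\tilde{O}_j$ on $\bar{f}$ was already diagonal, $I_{\bar{f}} \otimes U^{\operatorname{GHZ}}_f$ diagonalizes the entire set.

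The two-qubit gate count is then immediate from the definition: $U^{\operatorname{GHZ}}$ consists of a single $H_1$ and $\sof - 1$ CNOTs, giving $\sof - 1$ two-qubit gates. The step I expect to require most care is the parity bookkeeping when $Z_1$ itself appears inside $\tilde{O}_l$, since $Z_1$ is fixed by the CNOTs yet still contributes to the exponent $\sum_i a_i$; the even-parity hypothesis is exactly what reconciles the two subcases ($Z_1 \in \tilde{O}_l$ and $Z_1 \notin \tilde{O}_l$) into a consistent cancellation.
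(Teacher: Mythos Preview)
Your argument is correct and takes a genuinely different route from the paper. The paper works in the Schr\"odinger picture: it computes $U^{\operatorname{GHZ}}\ket{j_1\cdots j_k}$ explicitly as a GHZ-type state $\tfrac{1}{\sqrt{2}}(\ket{0 j_2\cdots j_k}+(-1)^{j_1}\ket{1\bar{j_2}\cdots\bar{j_k}})$ and then verifies, by direct action, that each Pauli string $(\tilde{O}_l)_{\bar f}\otimes X_f(\tilde{O}_m)_f$ returns this state up to a phase, with the even-parity hypothesis on $m$ appearing as the identity $(-1)^{m\cdot 0j_2\cdots j_k+m\cdot 1\bar{j_2}\cdots\bar{j_k}}=(-1)^{\sum_i m_i}=1$. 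You instead work in the Heisenberg picture, pushing the Pauli generators through the Clifford circuit and showing that $(U^{\operatorname{GHZ}})^\dagger P\,U^{\operatorname{GHZ}}$ lands in the diagonal $\{I,Z\}^{\otimes\sof}$ group for every relevant $P$. Your approach is cleaner: the tableau bookkeeping is standard, the even-parity condition collapses to a single exponent count on qubit $1$, and the stabilizer viewpoint explains \emph{why} $U^{\operatorname{GHZ}}$ is the natural choice rather than merely verifying it after the fact. One small imprecision worth tightening: the set of Pauli strings on the $f$ register is not ``precisely'' the GHZ stabilizer group but the coset $X^{\otimes\sof}\cdot\{\text{even-weight }Z\text{-strings}\}$ inside it; this does not affect your argument, since diagonalizing the generators $X^{\otimes\sof}$ and $\{Z_iZ_j\}$ automatically diagonalizes every element of that coset.
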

\begin{proof}
    To show the above result, we show that the output basis when the unitary acts on the computational basis is the common eigenbasis of the Pauli operators $D(n, k, f)$. Consider the action on a computational basis state of the form $\ket{i}_{\bar{f}} \otimes \ket{j}_f$.
    \begin{align}
        &I_{\bar{f}} \otimes U^{\operatorname{GHZ}}_{f} (\ket{i}_{\bar{f}} \otimes \ket{j}_f) \notag \\
        &= I_{\bar{f}} \otimes U^{\operatorname{GHZ}}_{f} (\ket{i}_{\bar{f}} \otimes \ket{j_1 \cdots j_k}_f) \notag \\
        &= \ket{i}_{\bar{f}} \otimes \left( \prod_{t=\sof}^2 \operatorname{CNOT}_{1 \rightarrow t} H_1 \ket{j_1 \cdots j_k}_f \right) \notag \\
        &= \ket{i}_{\bar{f}} \otimes \notag \\
        & \left( \frac{1}{\sqrt{2}} \prod_{t=\sof}^2 \operatorname{CNOT}_{1 \rightarrow t} (\ket{0}_{j_1}+ (-1)^{j_1} \ket{1}_{j_1}) \ket{j_2 \cdots j_k}\right) \notag \\
        &= \ket{i}_{\bar{f}} \otimes \left( \frac{1}{\sqrt{2}} (\ket{0 j_2 \cdots j_k}_f + (-1)^{j_1} \ket{1 \bar{j_2} \cdots \bar{j_k}}_f) \right).
        \label{eq:mapped-basis-states}
    \end{align}
    Next, we show that these basis elements are eigenstates of the Pauli operators $D(n, k, f)$. From Lemma~\ref{lem:dist-k-commute}, we know that the Pauli strings associated with $D(n, k, f)$ are given by 
    \begin{multline}
        \left\{ (\tilde{O}_l)_{\bar{f}} \otimes X_f (\tilde{O}_m)_{f} :  \forall l \in \{0, \ldots, 2^{\sofbar}-1\} ,\ \right. \\
        \left. \forall m \in \{ 0, \ldots, 2^{\sof}-1 \}, \  p(m)\,\mathrm{mod}\, 2=0 \right\},
    \end{multline}
    where $p(m)$ is the parity of the bitstring $m$. 
    
    We first consider the action of the above Pauli strings on the individual subspaces $f$ and $\bar{f}$. On the unflipped subspace $\bar{f}$, the action is given by
    \begin{equation}
    \label{eq:unflipped-eigenstate}
        (\tilde{O}_l)_{\bar{f}} \ket{i}_{\bar{f}} = (-1)^{l \cdot i} \ket{i}_{\bar{f}}.
    \end{equation}    
    Thus, $\ket{i}$ is an eigenstate. Next, we consider the flipped subspace.
    \begin{align}
        &X_f (\tilde{O}_m)_{f} \left( \frac{1}{\sqrt{2}} (\ket{0 j_2 \cdots j_k}_f + (-1)^{j_1} \ket{1 \bar{j_2} \cdots \bar{j_k}}_f) \right) \notag \\
        &= X_f \left( \frac{1}{\sqrt{2}} ( (-1)^{m \cdot 0j_2 \cdots j_k} \ket{0 j_2 \cdots j_k}_f + \right. \notag \\
        &\qquad\qquad \left. (-1)^{j_1} (-1)^{m \cdot 1 \bar{j_2} \cdots \bar{j_k}} \ket{1 \bar{j_2} \cdots \bar{j_k}}_f) \right). \notag \\
        &= \frac{1}{\sqrt{2}} ( (-1)^{m \cdot 0j_2 \cdots j_k} \ket{1 \bar{j_2} \cdots \bar{j_k}}_f + \notag \\
        &\qquad\qquad (-1)^{j_1} (-1)^{m \cdot 1 \bar{j_2} \cdots \bar{j_k}} \ket{0 j_2 \cdots j_k}_f) \notag \\
        &= \frac{1}{\sqrt{2}} (-1)^{j_1 + m \cdot 1 \bar{j_2} \cdots \bar{j_k}} \big( \ket{0 j_2 \cdots j_k}_f + \notag \\
        &\qquad (-1)^{j_1 + m \cdot 0j_2 \cdots j_k + m \cdot 1 \bar{j_2} \cdots \bar{j_k}} \ket{1 \bar{j_2} \cdots \bar{j_k}}_f \big).
    \end{align}

    Consider the following equality:
    \begin{align}
         (-1)^{m \cdot 0j_2 \cdots j_k + m \cdot 1 \bar{j_2} \cdots \bar{j_k}} &= (-1)^{m_1 + \cdots + m_k} \notag \\
         &= (-1)^{\sum_i m_i} \notag \\
         &= 1,
    \end{align}
    where the last equality is because the parity of $m$ is even. Thus, the action on the flipped qubits is given by
     \begin{align}
     \label{eq:flipped-eigenstate}
        &X_f (\tilde{O}_m)_{f} \left( \frac{1}{\sqrt{2}} (\ket{0 j_2 \cdots j_k}_f + (-1)^{j_1} \ket{1 \bar{j_2} \cdots \bar{j_k}}_f) \right) \notag \\
        &= \frac{1}{\sqrt{2}} (-1)^{j_1 + m \cdot 1 \bar{j_2} \cdots \bar{j_k}} \bigg( \ket{0 j_2 \cdots j_k}_f +  \notag \\
        &\qquad \hspace{3cm} (-1)^{j_1} \ket{1 \bar{j_2} \cdots \bar{j_k}}_f \bigg).
    \end{align}
    Thus, using Eqs.~\eqref{eq:unflipped-eigenstate} and \eqref{eq:flipped-eigenstate}, the action of any of the Pauli strings on the rotated basis states Eq.~\eqref{eq:mapped-basis-states} is given by
    \begin{align}
        &\left[(\tilde{O}_l)_{\bar{f}} \otimes X_f (\tilde{O}_m)_{f}\right] I_{\bar{f}} \otimes U^{\operatorname{GHZ}}_{f} (\ket{i}_{\bar{f}} \otimes \ket{j}_f) \notag \\
        &= \left[(-1)^{l \cdot i + j_1 + m \cdot 1 \bar{j_2} \cdots \bar{j_k}}\right] I_{\bar{f}} \otimes U^{\operatorname{GHZ}}_{f} (\ket{i}_{\bar{f}} \otimes \ket{j}_f).
    \end{align}
    Thus, we see that the unitary transformation that rotates the computational basis to the common eigenbasis for the set of Pauli strings corresponding to $D(n, k, f)$ is given by
    \begin{equation}
        I_{\bar{f}} \otimes U^{\operatorname{GHZ}}_f,
    \end{equation}
    concluding the proof.
\end{proof} 

\begin{lemma}
    \label{lem:NumDGCBinGray}
    The number of distance-grouped commuting sets for $H_{N, K}$ for the binary and the Gray code is given by 
    \begin{equation}
        \vert H(N, K) \vert_C
        =  \begin{cases} 
                1+n & K = 0 \\
                1 + \sum\limits_{k=1}^{K} \overline{n}_k & 1 \leq K \leq 2^{n-1} \\
                2^n & K > 2^{n-1}.
            \end{cases}
    \end{equation}
\end{lemma}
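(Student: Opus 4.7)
The plan is to identify each distance-grouped commuting set with a single subset $D(n,k,f)$ of distance-$k$ operators and then count how many distinct $f$-subsets occur in $H_{N,K}$. The key ingredient is Lemma~\ref{lem:dist-k-commute}, which establishes that all Pauli strings in $D(n,k,f)$ pairwise commute, so each such set contributes exactly one DGC group. A separate group is needed for the diagonal number operators (composed entirely of $\{I,Z\}$ strings and hence trivially commuting); this accounts for the $+1$ appearing in each case of the claim.

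Next, I would use Lemmas~\ref{lem:subending2} and \ref{lem:shorterSubseq}, together with the bookkeeping already carried out in the proof of Lemma~\ref{lem:NumPauliGrayBinary}, to set up a bijection between the $f$-subsets appearing in $H_{N,K}$ and the subsequences of the alternate representation $S_n$ of length at most $K$ that end at a power-of-two index. This is precisely the content already tabulated in Tables~\ref{tab:GSubsetsVsk} and \ref{tab:BSubsetsVsk}: for each $k$ with $1 \leq k \leq 2^{n-1}$, exactly $\overline{n}_k = n - \lceil \operatorname{log}_2 k \rceil$ distinct power-of-two endpoints admit a length-$k$ subsequence, and each one yields a genuinely new set $f$ of flipped qubits.

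Summing over $k$ from $1$ to $K$ and adding the diagonal group yields $1 + \sum_{k=1}^K \overline{n}_k$ for $1 \leq K \leq 2^{n-1}$. For $K > 2^{n-1}$, every nonempty $f \subseteq \{1,\ldots,n\}$ is realized, so the count saturates at $1 + \sum_{k=1}^{n} \binom{n}{k} = 2^n$. The $K=0$ entry coincides with the $K=1$ value because the tridiagonal kinetic term always injects the step-$1$ ladder operators, which correspond to the $n$ singleton $f$-subsets and contribute $n$ additional DGC groups on top of the diagonal one.

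The main obstacle I anticipate is not the arithmetic but justifying the bijection itself: I must check that (i) each $D(n,k,f)$ is one irreducible DGC cell, so Pauli strings from distinct $f$-subsets cannot be merged under the DGC strategy, and (ii) different pairs $(k,f)$ never label the same collection of Pauli strings and are therefore not double-counted. Point (i) is baked into the very definition of DGC (groups are indexed by the flipped-qubit label $f$), while (ii) follows from Lemma~\ref{lem:shorterSubseq}, since a length $k$ together with a power-of-two endpoint uniquely determines the subsequence and hence $f$. Once these two facts are in place, the three-case sum is immediate.
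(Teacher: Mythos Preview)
Your proposal is correct and follows essentially the same approach as the paper: both arguments count the filled entries in Tables~\ref{tab:GSubsetsVsk}/\ref{tab:BSubsetsVsk}, observing that each such entry corresponds to one distinct $f$-subset and hence one DGC set, yielding $\overline{n}_k$ per row and the stated sums. Your treatment is in fact slightly more thorough than the paper's, since you explicitly invoke Lemma~\ref{lem:dist-k-commute} for why each $D(n,k,f)$ is a single commuting group and Lemma~\ref{lem:shorterSubseq} for why distinct table entries give distinct $f$'s, points the paper leaves implicit.
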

\begin{proof}
    Consider $1 \leq  K \leq 2^{n-1}$ and a column in Table~\ref{tab:BSubsetsVsk} labeled by $2^{j}$. For a particular $k$ value, the entry represents a subsequence of length $k$ ending at index $2^{j}$. From the ordering of the binary code basis elements, we see that the entry is given by $b(2^{j+1} - k)$. Thus, the total number is given by
     \begin{equation}
         \sum\limits_{i=\lceil \operatorname{log}_2(k) \rceil}^{n-1} 1 = \overline{n}_k.
     \end{equation}
     Thus, for a given $1 \leq K \leq 2^{n-1}$, we get 
     \begin{equation}
        1 + \sum\limits_{k=1}^{K} \overline{n}_k 
     \end{equation}
     distance-grouped commuting sets. Lastly, for $K>2^{n-1}$, we have considered all subsets, leading to
     \begin{equation}
         1 + \sum\limits_{i=1}^n \binom{n}{i} = 2^n.
     \end{equation}
     
     We notice that the number of distance-grouped commuting sets depends only on the number of filled entries in the Table~\ref{tab:BSubsetsVsk}, and not the individual entries. This is because each entry contributes a single DGC set. Since the Gray code has the same number of filled entries per row as the binary code, the number of DGC sets are the same. 
\end{proof} 

\begin{lemma}
\label{lem:2QG_Gray}
The number of two-qubit gates in the diagonalizing unitary using the DGC scheme for the Gray code is given by
\begin{equation}
    \vert H(N, K) \vert_{DU} 
    =  \begin{cases} 
            0 & K = 0 \\
            \sum\limits_{k=1}^K \vert g_{k-1} \vert \overline{n}_k & 1 \leq K \leq 2^{n-1} \\
            1 + 2^{n-1}(n-2) & K > 2^{n-1}.
        \end{cases}
\end{equation}
\end{lemma}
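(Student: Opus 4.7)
The plan is to leverage the structural correspondence established in Lemmas~\ref{lem:NumPauliGrayBinary}--\ref{lem:diagUnitaryDGC} and reduce the count of two-qubit gates to a straightforward sum over the grid of subsequences tabulated in Table~\ref{tab:GSubsetsVsk}. The $K=0$ case is immediate: a purely diagonal Hamiltonian is already in the computational basis, so no diagonalizing unitary is required, yielding zero two-qubit gates.

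For $1 \leq K \leq 2^{n-1}$, I would first identify, for each subsequence of length $k$ ending at index $2^j$, the associated set $D(n,k,f)$ and in particular the size $\vert f \vert$ of its flipped-qubit support. Reading the cell at row $k$, column $2^j$ of Table~\ref{tab:GSubsetsVsk}, the Pauli string has the form in which the first $j$ positions carry the pattern $g_{k-1}$ (1's mapped to $X$, 0's mapped to $I$), position $j+1$ carries an $X$, and the remaining $n-j-1$ positions carry $I$. Hence $\vert f \vert = \vert g_{k-1} \vert + 1$, and Lemma~\ref{lem:diagUnitaryDGC} then supplies $\vert f \vert - 1 = \vert g_{k-1} \vert$ two-qubit gates for the corresponding diagonalizer. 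Summing over the $\overline{n}_k = n - \lceil \operatorname{log}_2 k \rceil$ admissible columns in row $k$, and then over $k = 1, \ldots, K$, produces exactly $\sum_{k=1}^K \vert g_{k-1} \vert\,\overline{n}_k$, matching the stated formula.

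For $K > 2^{n-1}$ the contributing subsets exhaust all of $\{I,X\}^{\otimes n} \setminus \{I^{\otimes n}\}$, so it is cleaner to group contributions by Hamming weight $w$. There are $\binom{n}{w}$ distance-$w$ subsets, each contributing $w - 1$ two-qubit gates, giving the closed form
$$\sum_{w=1}^n (w-1)\binom{n}{w} \;=\; \sum_{w=1}^n w\binom{n}{w} - \sum_{w=1}^n \binom{n}{w} \;=\; n\,2^{n-1} - (2^n - 1) \;=\; 1 + 2^{n-1}(n-2),$$
which is the saturation value.

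The main obstacle will be the careful verification of the cell structure of Table~\ref{tab:GSubsetsVsk}; that is, justifying via Lemma~\ref{rmk:equivalance} and the recursive pivot decomposition of Lemma~\ref{lem:pivot_def} that the Pauli string associated with the subsequence of length $k$ ending at index $2^j$ is exactly $g_{k-1} \otimes X \otimes I^{\otimes(n-j-1)}$ for all valid $k,j$. Once that combinatorial identification is secured, the remainder of the argument is routine bookkeeping together with the one-line binomial identity used in the saturation regime.
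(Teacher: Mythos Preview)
Your proposal is correct and follows essentially the same route as the paper's own proof: both read off the cell structure of Table~\ref{tab:GSubsetsVsk} to obtain $\vert f\vert = \vert g_{k-1}\vert + 1$ for the entry in row $k$, column $2^j$, invoke Lemma~\ref{lem:diagUnitaryDGC} to get $\vert g_{k-1}\vert$ two-qubit gates per cell, sum across the $\overline{n}_k$ valid columns and then over $k$, and finally handle the saturation regime $K>2^{n-1}$ by the binomial identity $\sum_{w=1}^n (w-1)\binom{n}{w}=1+2^{n-1}(n-2)$. The only cosmetic difference is your treatment of $K=0$: you argue directly that a diagonal Hamiltonian needs no diagonalizing unitary, whereas the paper (consistent with its convention that $K=0$ is effectively tridiagonal due to the kinetic term) defers to the $K=1$ value, which happens to vanish because $\vert g_0\vert=0$; either way the count is zero.
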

\begin{proof}
As seen in Lemma~\ref{lem:diagUnitaryDGC}, the number of two-qubit gates for the diagonalizing unitary for the set $D(n, k, f)$ is $(\vert f \vert - 1)$. Similar to the proof of Lemma~\ref{lem:NumQCGGray}, consider $1 \leq K \leq 2^{n-1}$ and consider a column in Table~\ref{tab:GSubsetsVsk} labeled by $2^{j}$. Since the columns represent subsequences ending at $j$, we see that strings in a column have a fixed structure -- the strings end with $XI^{n - j - 1}$. The first $j$ bits are a Gray representation of the row index $k-1$. For example, the cell with row index $3$ and column index $2^2$, the entry is of the form 
\begin{equation}
    g_{3-1} \otimes X \otimes I = XX \otimes X \otimes I.
\end{equation}
As seen in Lemma~\ref{lem:diagUnitaryDGC}, the diagonalizing unitary for this set contains $2$ two-qubit gates.

Thus, for a particular $k$, the number of two-qubit gates for all the diagonalizing unitaries are is 
\begin{equation}
    \sum\limits_{j = \lceil \operatorname{log}_2(k) \rceil}^{n-1} \vert g_{k-1} \vert =   \overline{n}_k \vert g_{k-1} \vert,
\end{equation}
where the first non-zero $j$ for a row $k$ is $\lceil \operatorname{log}_2(k) \rceil$. Thus, for a fixed $1 \leq K \leq 2^{n-1}$, we see that
\begin{equation}
    \sum\limits_{k=1}^K  \overline{n}_k \vert g_{k-1} \vert.
\end{equation}
The case~$K=0$ is equal to the value of $K=1$ since the kinetic energy term accounts for two off-diagonal terms anyway.

Lastly, for $K > 2^{n-1}$, we consider all possible subsets $D(n, k, f)$ for all $k \in \{1, \ldots, n\}$. Thus, the total is given by
\begin{equation}
    \sum\limits_{i=1}^n (i-1) \binom{n}{i} = 1 + 2^{n-1}(n-2),
\end{equation}
concluding the proof.
\end{proof} 

\begin{lemma}
\label{lem:2QG_Binary}
The number of two-qubit gates in the diagonalizing unitary using the DGC scheme for the binary code is given by
\begin{align}
    &\vert H(N, K) \vert_{DU} \notag \\
    &=  \begin{cases} 
            0.5n(n-1) & K = 0 \\
            0.5\sum\limits_{k=1}^K \overline{n}_k \left[ 2 \vert b(\overline{k}) \vert - 1 - \overline{n}_k \right] & 1 \leq K \leq 2^{n-1} \\
            1 + 2^{n-1}(n-2) & K > 2^{n-1}.
        \end{cases}
\end{align}
\end{lemma}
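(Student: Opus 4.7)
The plan is to mirror closely the counting argument used in Lemma~\ref{lem:NumQCGBinary}, but with each distance-grouped commuting set contributing the number of two-qubit gates from Lemma~\ref{lem:diagUnitaryDGC} rather than a number of qubit-wise commuting Pauli strings. Recall from Lemma~\ref{lem:diagUnitaryDGC} that the diagonalizing unitary for the set $D(n,k,f)$ contains exactly $|f|-1$ two-qubit gates (the CNOTs of the $U^{\operatorname{GHZ}}$ block on the flipped qubits). Hence counting two-qubit gates reduces to summing $|f|-1$ over all DGC sets that appear in $H_{N,K}$, which in turn corresponds to summing $|f|-1$ over the filled entries of Table~\ref{tab:BSubsetsVsk}.

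For the regime $1\le K\le 2^{n-1}$, first I would fix a row $k$ of Table~\ref{tab:BSubsetsVsk} and identify the Hamming weights of its entries. The entry in column $2^{j}$ is $b(2^{j+1}-k)$, and as observed in Lemma~\ref{lem:NumQCGBinary}, moving one column to the left reduces the Hamming weight by exactly one. The rightmost filled entry in row $k$ sits in column $2^{n-1}$ and equals $b(\overline{k})$ with $\overline{k}\coloneqq 2^n-k$; the leftmost filled entry sits in column $2^{\lceil \log_2 k\rceil}$, so row $k$ contributes $\overline{n}_k$ entries with Hamming weights $|b(\overline{k})|,\ |b(\overline{k})|-1,\ \ldots,\ |b(\overline{k})|-(\overline{n}_k-1)$. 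Subtracting one from each (for the $|f|-1$ gate count) and using the arithmetic-series identity
\begin{equation}
\sum_{i=0}^{\overline{n}_k-1}\bigl(|b(\overline{k})|-i-1\bigr)=\overline{n}_k\bigl(|b(\overline{k})|-1\bigr)-\tfrac{\overline{n}_k(\overline{n}_k-1)}{2}=\tfrac{\overline{n}_k}{2}\bigl[2|b(\overline{k})|-1-\overline{n}_k\bigr]
\end{equation}
gives the contribution of row $k$. Summing this over $k=1,\dots,K$ yields exactly the claimed formula.

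For $K>2^{n-1}$, the table is completely filled: every subset $f\subseteq\{1,\dots,n\}$ with $|f|=k$ appears, and by Corollary~\ref{cor:samePauliSubset} adding further off-diagonals introduces no new sets. The total number of two-qubit gates is therefore
\begin{equation}
\sum_{k=1}^{n}(k-1)\binom{n}{k}=\sum_{k=1}^{n}k\binom{n}{k}-\sum_{k=1}^{n}\binom{n}{k}=n\cdot 2^{n-1}-(2^n-1)=1+2^{n-1}(n-2),
\end{equation}
matching the second branch. For $K=0$ the kinetic term makes the Hamiltonian tridiagonal, so as the authors note the $K=1$ entry should be substituted; evaluating the middle branch at $K=1$ using $\overline{n}_1=n$ and $|b(\overline{1})|=|b(2^n-1)|=n$ gives $\tfrac12 n(n-1)$, recovering the top branch.

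The only potentially delicate step is confirming that the Hamming-weight pattern claimed for row $k$ of Table~\ref{tab:BSubsetsVsk} holds across all columns and all $k$, in particular that the leftmost filled column is indeed at $j=\lceil \log_2 k\rceil$. This is where I expect the bookkeeping to be the most error-prone, but it follows from the same subsequence analysis (Lemmas~\ref{lem:subending2} and \ref{lem:shorterSubseq}) that underpinned the parallel counting arguments already established; once those are invoked, everything else is a routine index manipulation.
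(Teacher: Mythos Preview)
Your proposal is correct and follows essentially the same approach as the paper's own proof: both start from the $|f|-1$ gate count of Lemma~\ref{lem:diagUnitaryDGC}, sum along each row of Table~\ref{tab:BSubsetsVsk} using the observation that the Hamming weight drops by one per column, collapse the resulting arithmetic progression, and then handle the saturated regime via $\sum_{k=1}^n (k-1)\binom{n}{k}$. Your additional verification that the $K=1$ formula specializes to $\tfrac12 n(n-1)$ is a nice consistency check the paper does not make explicit.
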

\begin{proof}
As seen in Lemma~\ref{lem:diagUnitaryDGC}, the number of two-qubit gates for the diagonalizing unitary for the set $D(n, k, f)$ is $(\vert f \vert - 1)$. Similar to the proof of Lemma~\ref{lem:NumQCGBinary}, Consider $1 \leq K \leq 2^{n-1}$ and consider a column in Table~\ref{tab:BSubsetsVsk} labeled by $2^{j}$. For a particular $k$ value, the entry represents a subsequence of length $k$ ending at index $2^{j}$. From the ordering of the binary code basis elements, we see that the entry is given by $b(2^{j+1} - k)$. Thus, the total number is given by
\begin{equation}
    \sum\limits_{j=\lceil \operatorname{log}_2(k) \rceil}^{n-1} \vert b(2^{j+1} - k) \vert -1.
\end{equation}
This can be simplified by noting that moving a column to the left reduces the weight of the string by 1. Thus, we consider the weight of the last string in the row and move left. For the last column, the entry is given by $\bar{k} \coloneqq 2^{n} - k$. Thus, the number of QC sets for a given $k$ is given by
\begin{equation}
    \sum\limits_{i=0}^{\overline{n}_k - 1} \vert b(\bar{k}) \vert - i - 1 \\
    = 0.5\overline{n}_k \left[ 2 \vert b(\overline{k}) \vert - 1 - \overline{n}_k \right],
\end{equation}
where $\overline{n}_k \coloneqq n - \lceil \operatorname{log}_2(k) \rceil $. Thus, for a given $1 \leq K \leq 2^{n-1}$, we get 
\begin{equation}
    0.5\sum\limits_{k=1}^K \overline{n}_k \left[ 2 \vert b(\overline{k}) \vert - 1 - \overline{n}_k \right]
\end{equation}
two-qubit gates. The case~$K=0$ is equal to the value of $K=1$ since the kinetic energy term accounts for two off-diagonal terms anyway.
     
Lastly, for $K > 2^{n-1}$, we consider all possible subsets $D(n, k, f)$ for all $k \in \{1, \ldots, n\}$. Thus, the total is given by
\begin{equation}
    \sum\limits_{i=1}^n (i-1) \binom{n}{i} = 1 + 2^{n-1}(n-2),
\end{equation}
concluding the proof.
\end{proof}

\section{List of Operators}
\label{app:list_operators}
In this section, as an example, we give the complete list of number and ladder operators for all encodings with $N=4$. 

Table~\ref{tab:enc-op-OH} lists the encoded operators for the one-hot encoding. Table~\ref{tab:enc-op-BN} lists the encoded operators for the binary encoding. Lastly, Table~\ref{tab:enc-op-GR} lists the encoded operators for the Gray encoding.

\renewcommand{\arraystretch}{1.5}
\begin{table}
    \centering
    \begin{tabular}{P{3cm}|P{3cm}} \hline
    Fock operator   & Encoded operator \\ \hline
    $\outerprod{0}{0}$ & $0.5(I - Z_1)$ \\
    $\outerprod{1}{1}$ & $0.5(I - Z_2)$ \\
    $\outerprod{2}{2}$ & $0.5(I - Z_3)$ \\
    $\outerprod{3}{3}$ & $0.5(I - Z_4)$ \\
    
    $\outerprod{0}{1} + \outerprod{1}{0}$ & $0.5(X_1X_2 + Y_1Y_2)$ \\
    $\outerprod{1}{2} + \outerprod{2}{1}$ & $0.5(X_2X_3 + Y_2Y_3)$ \\
    $\outerprod{2}{3} + \outerprod{3}{2}$ & $0.5(X_3X_4 + Y_3Y_4)$ \\

    $\outerprod{0}{2} + \outerprod{2}{0}$ & $0.5(X_1X_3 + Y_1Y_3)$ \\
    $\outerprod{1}{3} + \outerprod{3}{1}$ & $0.5(X_2X_4 + Y_2Y_4)$ \\
    
    $\outerprod{0}{3} + \outerprod{3}{0}$ & $0.5(X_1X_4 + Y_1Y_4)$ \\
    \hline
    \end{tabular}
    \caption{Encoded operators for the one-hot encoding with $N=4$. For the ladder operators, each term has its Hermitian conjugate added as well. This is because the Hamiltonian is Hermitian, leading to them having the same coefficient.}
    \label{tab:enc-op-OH}
\end{table}

\begin{table}
    \centering
    \begin{tabular}{P{3cm}|P{4cm}} \hline
    Fock operator   & Encoded operator \\ \hline
    $\outerprod{0}{0}$ & $0.25(II + IZ + ZI + ZZ)$ \\
    $\outerprod{1}{1}$ & $0.25(II - IZ + ZI - ZZ)$ \\
    $\outerprod{2}{2}$ & $0.25(II + IZ - ZI - ZZ)$ \\
    $\outerprod{3}{3}$ & $0.25(II - IZ - ZI + ZZ)$ \\
    
    $\outerprod{0}{1} + \outerprod{1}{0}$ & $0.5 (IX + ZX)$ \\
    $\outerprod{1}{2} + \outerprod{2}{1}$ & $0.5 (XX + YY)$ \\
    $\outerprod{2}{3} + \outerprod{3}{2}$ & $0.5 (IX - ZX)$ \\

    $\outerprod{0}{2} + \outerprod{2}{0}$ & $0.5 (XI + XZ)$ \\
    $\outerprod{1}{3} + \outerprod{3}{1}$ & $0.5 (XI - XZ)$ \\
    
    $\outerprod{0}{3} + \outerprod{3}{0}$ & $0.5 (XX - YY)$ \\
    \hline
    \end{tabular}
    \caption{Encoded operators for the binary encoding with $N=4$. For the ladder operators, each term has its Hermitian conjugate added as well. This is because the Hamiltonian is Hermitian, leading to them having the same coefficient.}
    \label{tab:enc-op-BN}
\end{table}

\begin{table}
    \centering
    \begin{tabular}{P{3cm}|P{4cm}} \hline
    Fock operator   & Encoded operator \\ \hline
    $\outerprod{0}{0}$ & $0.25(II + IZ + ZI + ZZ)$ \\
    $\outerprod{1}{1}$ & $0.25(II + IZ - ZI - ZZ)$ \\
    $\outerprod{2}{2}$ & $0.25(II - IZ - ZI + ZZ)$ \\
    $\outerprod{3}{3}$ & $0.25(II - IZ + ZI - ZZ)$ \\
    
    $\outerprod{0}{1} + \outerprod{1}{0}$ & $0.5 (XI + XZ)$ \\
    $\outerprod{1}{2} + \outerprod{2}{1}$ & $0.5 (IX - ZX)$ \\
    $\outerprod{2}{3} + \outerprod{3}{2}$ & $0.5 (XI - XZ)$ \\

    $\outerprod{0}{2} + \outerprod{2}{0}$ & $0.5 (XX - YY)$ \\
    $\outerprod{1}{3} + \outerprod{3}{1}$ & $0.5 (XX + YY)$ \\
    
    $\outerprod{0}{3} + \outerprod{3}{0}$ & $0.5 (IX + ZX)$ \\
    \hline
    \end{tabular}
    \caption{Encoded operators for the Gray encoding with $N=4$. For the ladder operators, each term has its Hermitian conjugate added as well. This is because the Hamiltonian is Hermitian, leading to them having the same coefficient.}
    \label{tab:enc-op-GR}
\end{table}

\section{Simulation Details for n+C}
\label{app:sim_details_carbon}

We now provide specific simulation details in Table~\ref{tab:details-n+c} for the different simulations for the energy of the lowest $\frac{1}{2}^+$ state for the n+C systems under consideration. We mainly use the Gray encoding, with truncation parameter $N=8$ $(n=3)$ and  $N=16$ $(n=4)$, and $K=3$. For all Gray encoding simulations, we use $L=4$ layers.

For gradient descent, we use a adaptive learning rate scheme. Every $10$ iterations, fit a straight line of the last $10$ cost function values. If the slope is negative, increase learning rate $lr$ to $\operatorname{\min}(1.05lr, lr_{\max})$. If the slope is positive, reduce learning rate $lr$ to $\operatorname{\max}(0.8lr, lr_{\min})$.

\begin{table*}
\centering
    \begin{tabular}{P{2cm}| P{2cm} | P{2.5cm} | M{9.5cm}}
    & $N$ & Type & Details \\
    \hline
    \hline
    \multirow{3}{*}{n$+^{10}$C} & \multirow{3}{*}{$8, 16$ - Gray} & \multirow{1}{*}[1em]{Noiseless} & $K=1$ for $500$ iterations followed by $K=3$ for $500$ iterations, both using SPSA. Lastly, $K=3$ for $1000$ iterations using gradient descent and a varying learning rate scheme. \\
    \cline{3-4}
    & & Shot Noise & Same as noiseless but using a shot based estimator for $10^3$ shots.\\
    \cline{3-4}
    & & \multirow{1}{*}[1em]{Noisy} & $K=1$ for $500$ iterations using $10^3$ shots, followed by $K=3$ for $1500$ iterations using $2 \times 10^4$ shots, both using SPSA. Cost function estimated using a fake IBMQ backend {\tt ibm\_manila}. \\
    \hline

    \multirow{3}{*}{n$+^{12}$C} & \multirow{3}{*}{$8$ - Gray} & \multirow{1}{*}[1em]{Noiseless} & $K=1$ for $500$ iterations followed by $K=3$ for $500$ iterations, both using SPSA. Lastly, $K=3$ for $1000$ iterations using gradient descent and a varying learning rate scheme. \\
    \cline{3-4}
    & & Shot Noise & Same as noiseless but using a shot based estimator for $10^3$ shots.\\
    \cline{3-4}
    & & \multirow{1}{*}[1em]{Noisy} & $K=1$ for $500$ iterations using $10^3$ shots, followed by $K=3$ for $1500$ iterations using $2 \times 10^4$ shots, both using SPSA. Cost function estimated using a fake IBMQ backend {\tt ibm\_manila}. \\
    \hline

    \multirow{3}{*}{n$+^{12}$C} & \multirow{3}{*}{$16$ - Gray} & \multirow{1}{*}[1em]{Noiseless} & $K=1$ for $500$ iterations followed by $K=3$ for $1000$ iterations, both using SPSA. Lastly, $K=3$ for $1500$ iterations using gradient descent and a varying learning rate scheme. \\
    \cline{3-4}
    & & Shot Noise & Same as noiseless but using a shot based estimator for $10^3$ shots.\\
    \cline{3-4}
    & & \multirow{1}{*}[1em]{Noisy} & $K=1$ for $500$ iterations using $10^3$ shots, followed by $K=3$ for $1500$ iterations using $2 \times 10^4$ shots, both using SPSA. Cost function estimated using a fake IBMQ backend {\tt ibm\_manila}. \\
    \hline

    \multirow{3}{*}{n$+^{14}$C} & \multirow{3}{*}{$8$ - Gray} & \multirow{1}{*}[1em]{Noiseless} & $K=1$ for $500$ iterations followed by $K=3$ for $500$ iterations, both using SPSA. Lastly, $K=3$ for $1000$ iterations using gradient descent and a varying learning rate scheme. \\
    \cline{3-4}
    & & Shot Noise & Same as noiseless but using a shot based estimator for $10^3$ shots.\\
    \cline{3-4}
    & & \multirow{1}{*}[1em]{Noisy} & $K=1$ for $500$ iterations using $10^3$ shots, followed by $K=3$ for $1500$ iterations using $2 \times 10^4$ shots, both using SPSA. Cost function estimated using a fake IBMQ backend {\tt ibm\_manila}. \\
    \hline

    \multirow{3}{*}[-0.5em]{n$+^{14}$C} & \multirow{3}{*}[-0.5em]{$16$ - Gray} & \multirow{1}{*}[1em]{Noiseless} & $K=1$ for $500$ iterations followed by $K=3$ for $1000$ iterations, both using SPSA. Lastly, $K=3$ for $1500$ iterations using gradient descent and a varying learning rate scheme. \\
    \cline{3-4}
    & & Shot Noise & Same as noiseless but using a shot based estimator for $10^3$ shots.\\
    \cline{3-4}
    & & \multirow{1}{*}[1em]{Noisy} & $K=1$ for $500$ iterations using $10^3$ shots, followed by $K=3$ for $2500$ iterations using $2 \times 10^4$ shots, both using SPSA. Cost function estimated using a fake IBMQ backend {\tt ibm\_manila}. \\
    \hline

    \multirow{3}{*}[-0.5em]{n$+^{14}$C} & \multirow{3}{*}[-0.5em]{$8$ - OH} & \multirow{1}{*}[0.3em]{Noiseless} & $K=1$ for $500$ iterations followed by $K=3$ for $1500$ iterations, both using SPSA.\\
    \cline{3-4}
    & & Shot Noise & Same as noiseless but using a shot based estimator for $10^3$ shots.\\
    \cline{3-4}
    & & \multirow{1}{*}[1em]{Noisy} & $K=1$ for $500$ iterations followed by $K=3$ for $1500$ iterations, both using SPSA. Cost function estimated using a fake IBMQ backend {\tt ibm\_manila}.\\
    \hline
    
    \end{tabular}
    \caption{Simulation details for n$+$C. Plots shown in Fig.~\ref{fig:n_10C_plot} for n$+^{10}$C Gray, Fig.~\ref{fig:n_12C_plot} for n$+^{12}$C Gray, Fig.~\ref{fig:n_14C_plot} for n$+^{14}$C Gray, and Fig.~\ref{fig:n_14C_plot_OH} for n$+^{14}$C OH.}
    \label{tab:details-n+c}
\end{table*}

\section{Simulation Details for n\texorpdfstring{$+\alpha$}{+alpha}}
\label{app:sim_details_abi}

We now provide specific simulation details for the different simulations in Table~\ref{tab:details-abi}  for the energy of the lowest $\frac{1}{2}^+$ orbit for the n+$\alpha$ optical potential derived \textit{ab initio}. For all the simulations, we use the Gray encoding, with truncation parameter $N=8$ $(n=3)$ and  $N=16$ $(n=4)$, and $K=1, 2$. For all Gray encoding simulations, we use $L=5$ layers.

For gradient descent, we use an adaptive learning rate scheme. Every $10$ iterations, fit a straight line of the last $10$ cost function values. If the slope is negative, increase the learning rate $lr$ to $\operatorname{\min}(1.05lr, lr_{\max})$. If the slope is positive, reduce the learning rate $lr$ to $\operatorname{\max}(0.8lr, lr_{\min})$.

\begin{table*}
\centering
    \begin{tabular}{P{2cm}| P{2cm} | P{2.5cm} | M{10cm}}
    $\hbar \omega$ & $N, K$ & Type & Details \\
    \hline
    \hline
    \multirow{3}{*}[-0.5em]{$12$} & \multirow{3}{*}[-0.5em]{$8, 1$} & \multirow{1}{*}[0.5em]{Noiseless} & $K=0$ for $500$ iterations using SPSA, followed by $K=1$ for $500$ iterations using gradient descent and a varying learning rate scheme.\\
    \cline{3-4}
    & & Shot Noise & Same as noiseless but using a shot based estimator for $10^5$ shots.\\
    \cline{3-4}
    & & \multirow{1}{*}[1em]{Noisy} & $K=0$ for $500$ iterations using $10^3$ shots, followed by $K=1$ for $500$ iterations using $10^5$ shots, both using SPSA. Cost function estimated using a fake IBMQ backend {\tt ibm\_manila}.\\
    \hline

    \multirow{3}{*}[-0.5em]{$12$} & \multirow{3}{*}[-0.5em]{$8, 2$} & \multirow{1}{*}[1em]{Noiseless} & $K=0$ for $500$ iterations, $K=1$ for $500$ iterations, followed by $K=2$ for $500$ iterations, all using SPSA, followed by $K=2$ for $500$ iterations using gradient descent and a varying learning rate scheme. \\
    \cline{3-4}
    & & Shot Noise & Same as noiseless but using a shot based estimator for $10^5$ shots.\\
    \cline{3-4}
    & & \multirow{1}{*}[1.5em]{Noisy} & $K=0$ for $500$ iterations using $10^3$ shots, $K=1$ for $500$ iterations using $10^3$ shots, followed by $K=2$ for $500$ iterations using $10^5$ shots, all using SPSA. Cost function estimated using a fake IBMQ backend {\tt ibm\_manila}. \\
    \hline

    \multirow{3}{*}[-1em]{$12$} & \multirow{3}{*}[-1em]{$16, 1$} & \multirow{1}{*}[1em]{Noiseless} & $K=0$ for $500$ iterations, $K=1$ for $1000$ iterations, both using SPSA, followed by $K=1$ for $1000$ iterations using gradient descent and a varying learning rate scheme. \\
    \cline{3-4}
    & & \multirow{1}{*}[1em]{Shot Noise} & $K=0$ for $500$ iterations, $K=1$ for $500$ iterations, both using SPSA, following by $K=1$ for $1500$ iterations using gradient descent and a varying learning rate scheme. All estimations use $10^5$ shots.\\
    \cline{3-4}
    & & \multirow{1}{*}[1em]{Noisy} & $K=0$ for $1000$ iterations using $10^3$ shots, followed by $K=1$ for $1500$ iterations using $10^5$ shots, both using SPSA. Cost function estimated using a fake IBMQ backend {\tt ibm\_manila}. \\
    \hline

    \multirow{3}{*}[-2em]{$12$} & \multirow{3}{*}[-2em]{$16, 2$} & \multirow{1}{*}[1em]{Noiseless} & $K=0$ for $500$ iterations, $K=1$ for $500$ iterations, $K=2$ for $1000$ iterations, all using SPSA, followed by $K=2$ for $1000$ iterations using gradient descent and a varying learning rate scheme.\\
    \cline{3-4}
    & & \multirow{1}{*}[1.5em]{Shot Noise} & $K=0$ for $500$ iterations, $K=1$ for $500$ iterations, $K=2$ for $500$ iterations, all using SPSA, following by $K=2$ for $1500$ iterations using gradient descent and a varying learning rate scheme. All estimations use $10^5$ shots.\\
    \cline{3-4}
    & & \multirow{1}{*}[1.5em]{Noisy} & $K=0$ for $500$ iterations using $10^3$ shots, $K=1$ for $500$ iterations using $10^3$ shots, $K=2$ for $500$ iterations using $10^3$ shots, followed by $K=2$ for $500$ iterations using $10^5$ shots, all using SPSA. Cost function estimated using a fake IBMQ backend {\tt ibm\_manila}. \\
    \hline

    \multirow{3}{*}[-0.5em]{$16$} & \multirow{3}{*}[-0.5em]{$8, 1$} & \multirow{1}{*}[0.5em]{Noiseless} & $K=0$ for $500$ iterations using SPSA, followed by $K=1$ for $500$ iterations using gradient descent and a varying learning rate scheme.\\
    \cline{3-4}
    & & Shot Noise & Same as noiseless but using a shot based estimator for $10^5$ shots.\\
    \cline{3-4}
    & & \multirow{1}{*}[1em]{Noisy} & $K=0$ for $500$ iterations using $10^3$ shots, followed by $K=1$ for $500$ iterations using $10^5$ shots, both using SPSA. Cost function estimated using a fake IBMQ backend {\tt ibm\_manila}.\\
    \hline

    \multirow{3}{*}[-1em]{$16$} & \multirow{3}{*}[-1em]{$8, 2$} & \multirow{1}{*}[0.5em]{Noiseless} & $K=0$ for $750$ iterations, $K=1$ for $750$ iterations, followed by $K=2$ for $750$ iterations, all using SPSA.\\
    \cline{3-4}
    & & Shot Noise & Same as noiseless but using a shot based estimator for $10^5$ shots.\\
    \cline{3-4}
    & & \multirow{1}{*}[1.5em]{Noisy} & Same as noiseless, but using a fake IBMQ backend {\tt ibm\_manila} with $10^3$, $10^3$, and $10^5$ shots for $K=0, 1, 2$, respectively, to estimate expectation values. Cost function estimated using a fake IBMQ backend {\tt ibm\_manila}.\\
    \hline
    
    \end{tabular}
    \caption{Simulation details for n$+\alpha$. Plots shown in Fig.~\ref{fig:abi-n-3-12} for $\hbar \omega = 12, N=8$, Fig.~\ref{fig:abi-n-4-12} for $\hbar \omega = 12, N=16$, and  Fig.~\ref{fig:abi-n-3-16} for $\hbar \omega = 16, N=8$.}
    \label{tab:details-abi}
\end{table*}

\end{document}